\pgfplotsset{compat=1.8}
\DeclareMathOperator*{\argmin}{arg\,min}
\newtheorem{theorem}{Theorem}
\newtheorem*{theorem*}{Theorem}
\newtheorem{definition}[theorem]{Definition}
\newtheorem*{definition*}{Definition}
\newtheorem{lemma}[theorem]{Lemma}
\newtheorem{corollary}[theorem]{Corollary}
\renewcommand{\Vec}[1]{\mathbf{#1}}
\newcommand{\PP}[1][]{
  \ifthenelse{\isempty{#1}}
    {\mathbbm{P}}
    {\mathbbm{P}\left[#1\right]}
}
\newcommand{\EE}[1][]{
  \ifthenelse{\isempty{#1}}
    {\mathbbm{E}}
    {\mathbbm{E}\left[#1\right]}
}
\begin{document}
\title{Statistics-Informed Parameterized Quantum Circuit via Maximum Entropy Principle for Data Science and Finance}

\author{Xi-Ning Zhuang}
\thanks{These authors contributed equally to this work.}
\affiliation{CAS Key Laboratory of Quantum Information, University of Science and Technology of China, Hefei, 230026, China}
\author{Zhao-Yun Chen}
\thanks{These authors contributed equally to this work.}
\author{Cheng Xue}
\affiliation{Institute of Artificial Intelligence, Hefei Comprehensive National Science Center}
\author{Xiao-Fan Xu}
\affiliation{CAS Key Laboratory of Quantum Information, University of Science and Technology of China, Hefei, 230026, China}
\author{Chao Wang}
\affiliation{Origin Quantum Computing,  Hefei,  China}
\author{Huan-Yu~Liu}
\author{Tai-Ping Sun}
\affiliation{CAS Key Laboratory of Quantum Information, University of Science and Technology of China,  Hefei,  230026,  China}
\author{Yun-Jie Wang}
\affiliation{Institute of Advanced Technology, University of Science and Technology of China, Hefei, Anhui, 230031, China}
\author{Yu-Chun Wu}
\email{wuyuchun@ustc.edu.cn}
\author{Guo-Ping Guo}
\email{gpguo@ustc.edu.cn}
\affiliation{CAS Key Laboratory of Quantum Information, University of Science and Technology of China, Hefei, 230026, China}

\begin{abstract}
Quantum machine learning has demonstrated significant potential in solving practical problems, particularly in statistics-focused areas such as data science and finance. However, challenges remain in preparing and learning statistical models on a quantum processor due to issues with trainability and interpretability. In this letter, we utilize the maximum entropy principle to design a statistics-informed parameterized quantum circuit (SI-PQC) for efficiently preparing and training of quantum computational statistical models, including arbitrary distributions and their weighted mixtures. The SI-PQC features a static structure with trainable parameters, enabling in-depth optimized circuit compilation, exponential reductions in resource and time consumption, and improved trainability and interpretability for learning quantum states and classical model parameters simultaneously. As an efficient subroutine for preparing and learning in various quantum algorithms, the SI-PQC addresses the input bottleneck and facilitates the injection of prior knowledge.
\end{abstract}

\maketitle

\textit{Introduction} -- In data science, statistical models are simplified representations of real-world data using stochastic mathematical tools. These models are typically represented as preset distributions based on prior knowledge. Model calibration, a central task in this field, involves estimating parameters of these distributions using observed data~\cite{hastie:2009elements,casella:2024statistical,chambers:2017statistical}.
However, model calibration can often be challenging: analytical methods may fail when maximum-likelihood functions are too complex or when closed-form solutions do not exist~\cite{hastie:2009elements,casella:2024statistical,do:2008what, lee:2019review}. Classical numerical methods based on search or simulation, such as the expectation-maximization algorithm and Monte Carlo simulation, become prohibitively expensive when dealing with large datasets~\cite{hastie:2009elements,geof:2012computational}.

Quantum machine learning (QML) is a promising approach for handling large-scale data~\cite{huang:2021power, liu:2021quantum, huang:2021information, Cerezo:2022challenges, Yazhen:2022quantum}. 
However, whether quantum computing can provide a significant advantage in model calibration remains an emerging question. 
Existing model-free methods, such as Quantum Generative Adversarial Networks (QGANs)~\cite{zoufal:2019quantum}, can learn the distribution but cannot extract the underlying parameters that are essential for statistical interpretability-concerned tasks like financial derivatives pricing and gene clustering~\cite{do:2008what,lee:2019review, bracke:2019machine, vellido:2020importance}.
Model-based methods can prepare a quantum distribution state with fixed parameters, but these methods are hard to train with parametric configurations~\cite{Grover:2002creating, holmes:2020efficient, Iaconis:2024quantum, zylberman:2023efficient, rattew:2022preparing, conde:2023efficient, guo:2021nonlinear, rattew:2023non}. 
Furthermore, additional costs are incurred when using these methods to prepare a mixture of distributions with varying and latent parameters, which is an essential task in statistics-related problems and machine learning applications~\cite{lee:2019review,warner:2015stochastic,lee:2012application,hideyuki:2020quantum,teicher:1960on,douglas:2009gaussian,ian:2008portfolio}. 

\begin{figure*}
    \centering
    \begin{tikzpicture}
        \node[] at (-4cm, 1.35cm) {(a)};
        \begin{scope}[x=1cm,y=1cm,shift={(-3cm, 0cm)}]
            \draw[step=0.25, very thin, color=gray!50] (-1,-1) grid (1,1);
            \draw[color=violet!50, fill=violet!50] (-0.95, -0.75) rectangle (-0.8, -0.5) {};
            \draw[color=violet!50, fill=violet!50] (-0.7, -0.75) rectangle (-0.55, -0.25) {};
            \draw[color=violet!50, fill=violet!50] (-0.45, -0.75) rectangle (-0.3, 0.25) {};
            \draw[color=violet!50, fill=violet!50] (-0.2, -0.75) rectangle (-0.05, 0.15) {};
            \draw[color=violet!50, fill=violet!50] (0.05, -0.75) rectangle (0.2, 0.6) {};
            \draw[color=violet!50, fill=violet!50] (0.3, -0.75) rectangle (0.45, -0.15) {};
            \draw[color=violet!50, fill=violet!50] (0.55, -0.75) rectangle (0.7, -0.05) {};
            \draw[color=violet!50, fill=violet!50] (0.8, -0.75) rectangle (0.95, -0.35) {};
            \draw[black,->] (-1,-0.75) -- (1,-0.75);
            \draw[black,->] (0,-1) -- (0, 1);
        \end{scope}
        \node[scale=0.8] at (-3cm, -1.5cm) {{Empirical}};
        \node[scale=0.8] at (-3cm, -1.75cm) {{Distribution}};
        \draw[black,-latex] (-2cm, 0cm) -- (-1.15cm, 0.8cm);
        \draw[black,-latex] (-2cm, 0cm) -- (-1.15cm, 0cm);
        \draw[black,-latex] (-2cm, 0cm) -- (-1.15cm, -0.8cm);
        \draw[dashed, thick, color=cyan] (-1.1cm, -1.27cm) rectangle (-0.1cm, 1.27cm) {};
        \begin{scope}[x=0.5cm,y=0.5cm,shift={(-0.6cm, 0.8cm)},scale=0.75]
            \draw[domain=-1:1, smooth, very thick, variable=\x, orange]  plot ({\x}, {\x});
            \draw[step=0.5, very thin, color=gray!30] (-1,-1) grid (1,1);
            \node[]{$f_1$};
        \end{scope}
        \begin{scope}[x=0.5cm,y=0.5cm,shift={(-0.6cm, 0cm)},scale=0.75]
            \draw[domain=-1:1, smooth, very thick, variable=\x, orange]  plot ({\x}, {\x*\x});
            \draw[step=0.5, very thin, color=gray!30] (-1,-1) grid (1,1);
            \node[]{$f_k$};
        \end{scope}
        \begin{scope}[x=0.5cm,y=0.5cm,shift={(-0.6cm,-0.8cm)},scale=0.75]
            \draw[domain=-1:1, smooth, very thick, variable=\x, orange]  plot ({\x}, {\x*\x*\x});
            \draw[step=0.5, very thin, color=gray!30] (-1,-1) grid (1,1);
            \node[]{$f_M$};
        \end{scope}
        \node[scale=0.8] at (-0.6cm, -1.5cm) {{Observables as}};
        \node[scale=0.8] at (-0.6cm, -1.75cm) {{Constraints}};
        \draw[dashed, thick, color=orange](2.75cm, 0cm) circle(0.85cm) node[color=black]{};
        \draw[black] (-0.15cm, 0.8cm) -- (2.75cm, 0.8cm);
        \draw[black] (-0.15cm, 0cm) -- (2.75cm, 0cm);
        \draw[black] (-0.15cm, -0.8cm) -- (2.75cm, -0.8cm);
        \draw[black,-latex] (2.75cm, -0.8cm) -- (4.15cm, -0.2cm);
        \draw[black,-latex] (2.75cm, 0cm) -- (4.15cm, 0cm);
        \draw[black,-latex] (2.75cm, 0.8cm) -- (4.15cm, 0.2cm);
        
        \draw[dashed, thick, color=green] (0.85cm, -1.27cm) rectangle (1.65cm, 1.27cm) {};
        \draw[fill=yellow!50](1.25cm, 0.8cm) circle(0.27cm) node[color=black]{$\lambda_1$};
        \node[rotate=90] at (0.45cm, 0.4cm) {{\ldots}};
        \node[rotate=90] at (0.45cm, -0.4cm) {{\ldots}};
        \draw[fill=red!50](1.25cm, 0cm) circle(0.27cm) node[color=black]{$\lambda_k$};
        \draw[fill=blue!30](1.25cm, -0.8cm) circle(0.27cm) node[color=black]{$\lambda_M$};
        \node[scale=0.8] at (1.25cm, -1.5cm) {{Linear}};
        \node[scale=0.8] at (1.25cm, -1.75cm) {{Combination}};
        \node[scale=0.18] at (2cm, -0.5cm) {
        \begin{axis}[thin]
            \addplot3[domain=-5:5,mesh] {(50-x*x-y*y)};
        \end{axis}
        };
        \node[scale=0.8] at (2.75cm, -1.5cm) {{Maximize}};
        \node[scale=0.8] at (2.75cm, -1.75cm) {{Entropy}};
        \begin{scope}[x=1cm,y=1cm,shift={(5.25cm, 0cm)}]
            \draw[step=0.25, very thin, color=gray!50] (-1,-1) grid (1,1);
            \draw[color=violet!50, fill=violet!50] (-0.95, -0.75) rectangle (-0.8, -0.5) {};
            \draw[color=violet!50, fill=violet!50] (-0.7, -0.75) rectangle (-0.55, -0.25) {};
            \draw[color=violet!50, fill=violet!50] (-0.45, -0.75) rectangle (-0.3, 0.25) {};
            \draw[color=violet!50, fill=violet!50] (-0.2, -0.75) rectangle (-0.05, 0.15) {};
            \draw[color=violet!50, fill=violet!50] (0.05, -0.75) rectangle (0.2, 0.6) {};
            \draw[color=violet!50, fill=violet!50] (0.3, -0.75) rectangle (0.45, -0.15) {};
            \draw[color=violet!50, fill=violet!50] (0.55, -0.75) rectangle (0.7, -0.05) {};
            \draw[color=violet!50, fill=violet!50] (0.8, -0.75) rectangle (0.95, -0.35) {};
            \draw[black!50,->] (-1,-0.75) -- (1,-0.75);
            \draw[black!50,->] (0,-1) -- (0, 0.75);
            \draw[domain=-1:1, smooth, very thick, variable=\x, orange]  plot ({\x}, {exp(-2*\x*\x)-0.725});
            \node[very thick] at (0,1.15) {$p(x)=e^{\sum_{k=1}^M\lambda_kf_k(x)}$};
        \end{scope}
        \node[scale=0.8] at (5.25cm, -1.5cm) {{Maximal Entropy}};
        \node[scale=0.8] at (5.25cm, -1.75cm) {{Distribution}};
        \node[] at (7.5cm, 1.35cm) {(b)};
        \draw [color=gray!20,fill=gray!20] plot [smooth cycle] coordinates {(7.5cm,0.6cm) (8.5cm,1.1cm) (10cm,0.1cm) (9.5cm,-1.4cm) (8cm,-0.4cm)};
        \draw [color=cyan!30,fill=cyan!30] plot [smooth cycle] coordinates {(8cm,0.1cm) (8.5cm,0.6cm) (9.5cm,-0.4cm) (9cm,-0.4cm)};
        \draw[thick,color=red,-latex] plot [smooth] coordinates{(8.2cm, 0.1cm) (8.5cm, 0.1cm) (9cm, -0.2cm) (9.3cm, -0.18cm)};
        \draw[thick,color=teal,dotted,-latex] plot [smooth] coordinates{(8.2cm, 0.1cm) (8.5cm, -0.4cm) (9cm, -0.9cm) (9.4cm, -0.7cm) (9.3cm, -0.18cm)};
        \node[] at (9.5cm, 1.3cm) {Target space};
        \draw[] (9.5cm, 1.1cm) -- (8.5cm, 0.4cm);
        \node[] at (8.7cm, -1.8cm) {Hilbert space};
        \draw[] (8.5cm, -1.6cm) -- (9.3cm, -1.1cm);
    \end{tikzpicture}
    \caption{Schematic depiction of Statistics-Informed Parameterized Quantum Circuit (SI-PQC) via maximum entropy principle (MEP). 
    (a) MEP promises a unified formula to describe empirical data as linearly combined constraints from observables with maximized entropy.
    (b) SI-PQC achieves a universal expresiveness in the target space and the output state stays therein during optimization procedure (solid red arrow). While problem agnostic parameterized quantum circuit output state can often violate from the target space when varying circuit parameters (dotted green arrow).}
    \label{fig:SI-PQC}
\end{figure*}
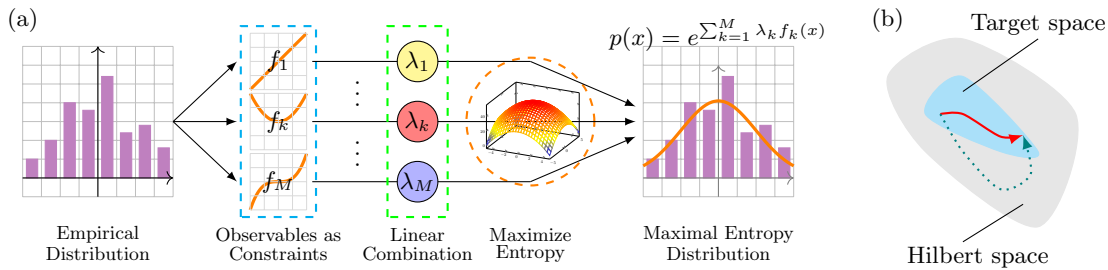
Here, we propose the statistics-informed parameterized quantum circuit (SI-PQC) inspired by maximum entropy principle, as depicted in Fig.~\ref{fig:SI-PQC}(a), to generate and train statistics models with fixed circuit structure and tunable parameters explicitly encoded on rotation angles.
For preparing tasks, the costly pre-computing or pre-training procedure is no longer needed~\cite{holmes:2020efficient,Iaconis:2024quantum,gilyen:2018quantum, yulong:2021efficient}, and hence any in-depth circuit compilation optimizing is permitted.
Further, an exponential reduction of time and quantum resource consumption is achieved when preparing the mixture of numerous statistics models.
For learning tasks, SI-PQC achieves universal expressiveness in the subspace of interest with an optimal number of parameters. It improves the trainability, avoiding an unexpectedly large output space of problem-agnostic parameterized quantum circuit (PQC) ansatz~\cite{Cerezo:2022challenges, mcclean:2018barren, martin:2024a, zoe:2022connecting}, as illustrated in Fig.~\ref{fig:SI-PQC}(b).

The main idea of SI-PQC is cut in from the physics viewpoint of the maximum entropy principle that
an arbitrary distribution can be regarded as a maximum entropy distribution (MED), that is   
\begin{equation}\label{eq:maximum_entropy_principle}
    p(x) = \frac{1}{\mathcal{N}}\exp{\sum_{k=1}^M \lambda_k f_k(x)}.
\end{equation}
Herein $\mathcal{N}$ is the normalized constant, constraints $f_k$ are $M$ different observables of empirical data where $\mathbb{E}[f_k(x)]=a_k (1\leq k\leq M)$, and $\lambda_k$ are Lagrange multipliers which determine the distribution, as elaborated in Eq.~\eqref{eq:maximum_entropy_principle}.
The generalization of MED is two-fold. First, any distribution is proven to be a MED.
Second, MED is the best-unbiased guess with given observables from data when the explicit formula is not accessible~\cite{conrad:2004probability, banavar:2010applications, park:2009maximum}.

\textit{Maximum entropy distribution loader} --
Eq.~\eqref{eq:maximum_entropy_principle} is exponentiation of linearly combined constraints, which enables us to propose the following SI-PQC named \textit{maximum entropy distribution loader} (MEDL) in three steps, as shown in Fig.~\ref{fig:medl}(a).

Firstly, we prepare the normalized distinct constraint functions $\vec{f}_k = \frac{1}{2\lVert f_k\rVert_{\mathrm{max}}}f_k$ through a sequence of fixed constraint layers of unitaries $U_k$ as the quantum singular value transformation of the diagonal block-encoding of $(0, \frac{1}{2^n-1}, \frac{2}{2^n-1}, ..., 1)$.
The constraints $f_k$ are usually elementary functions with parameters fixed to be $1$, which simplifies the preparation by eliminating the need for the costly and numerically unstable phase factor evaluation~\cite{gilyen:2018quantum, yulong:2021efficient}. The constraints of common distributions are listed as a table in SM~\cite{SM}.

Secondly, we inject the varying statistics information $\vec{\lambda} = (\lambda_k)_{1\leq k\leq M}$ into the parameterized layer as the rotation angles
\begin{equation}\label{eq:lcc_rotation_angle}    \vec{\theta}=\left(2\arctan\sqrt{\frac{\sum_{j=k+1}^M2\lambda_j \lVert f_j\rVert_{\mathrm{max}}}{2\lambda_k \lVert f_k\rVert_{\mathrm{max}}}}\right)_{1\leq k\leq M}
\end{equation}
to encode the sparse linear combination of constraints by linear combination of unitaries (LCU)~\cite{childs:2012hamiltonian}. This step creates a block-encoding of 
\begin{equation}
    f = \sum_{k=1}^{M} \lambda_kf_k(x).
\end{equation}
Note that Eq.~\eqref{eq:lcc_rotation_angle} gives an optimal parameter space dimension since at least $M$ free parameters are needed.

Finally, we apply the imaginary time evolution based on the block-encoded $f$ by the exponentiation layer. In specific, we apply an additional quantum singular value transformation for $d=\log{1/\epsilon}$ Taylor expansion of $e^x$, within the coefficients is obviously fixed and the phase angles $\phi_j (1\leq j \leq d)$ can be pre-computed~\cite{gilyen:2018quantum}.

\begin{figure*}[ht]
    \centering
    \begin{tikzpicture}
        \node[scale=0.6] at (0, 0) {\begin{quantikz}[font=\large]
                \lstick[3,brackets=right,label style={ xshift=-0.5cm, yshift=1.75cm, rotate=90}]{Parameters Injection}&&\gate[3, style={fill=green!10}, label style={rotate=90}]{P_R(\textcolor{red}{\theta})}
                \gategroup[5,steps=6,background,style={rounded corners, inner xsep=1pt, inner ysep=-1pt, draw=black!10, fill=black!5}, label style={label position=above,anchor=north,yshift=0.35cm}]{Linear Combination of Constraints}
                \gategroup[6,steps=7,background,style={dashed,rounded corners, inner xsep=5pt, inner ysep=15pt}, label style={label position=below,anchor=north,yshift=-0.2cm}]{Repeat $\log{(1/\epsilon\mathcal{F})}$ times for Imaginary Time Evolution}
                &\ctrl{3}&&&&\gate[3, style={fill=green!10}, label style={rotate=90}]{P_L(\textcolor{red}{\theta})^\dagger}&&\\
                &&&&\ctrl{2}&&&&&\\[1cm]
                & &&&&\lstick[1,brackets=none,label style={yshift=0.75cm, xshift=0cm,color=black, rotate=135}]{\ldots}\ \ldots\ &\ctrl{1}&&&\\[1cm]
                \lstick[1,brackets=none,label style={yshift=0.5cm, xshift=-0.5cm, rotate=90}]{{Data}}&\qwbundle{n}&&\gate[2, style={fill=cyan!10}]{U_{f_1}}&\gate[2, style={fill=cyan!10}]{U_{f_2}}&\ \ldots\ &\gate[2, style={fill=cyan!10}]{U_{f_M}}&  \arrow[arrows]{ll} &&  \arrow[arrows]{ll} \\[1cm]
                \lstick[1,brackets=none,label style={yshift=1cm, xshift=-0.5cm, rotate=90}]{{Constraint\\Anc.}}&\qwbundle{\log{n}}&&&&&&  \arrow[arrows]{ll} &\gate[2, style={fill=orange!20, text width=30pt}, label style={rotate=90}]{e^{i\textcolor{blue}{\phi_j}(2\Pi-I)}}&  \arrow[arrows]{ll}\\[1cm]
                \lstick[1,brackets=none,label style={yshift=0.5cm, xshift=-0.5cm, rotate=90}]{{ITE\\Anc.}}&\qwbundle{1}&&&&&&  \arrow[arrows]{ll} &&  \arrow[arrows]{ll}
            \end{quantikz}};
        \node at (9, 0) {\includegraphics[width=0.5\linewidth]{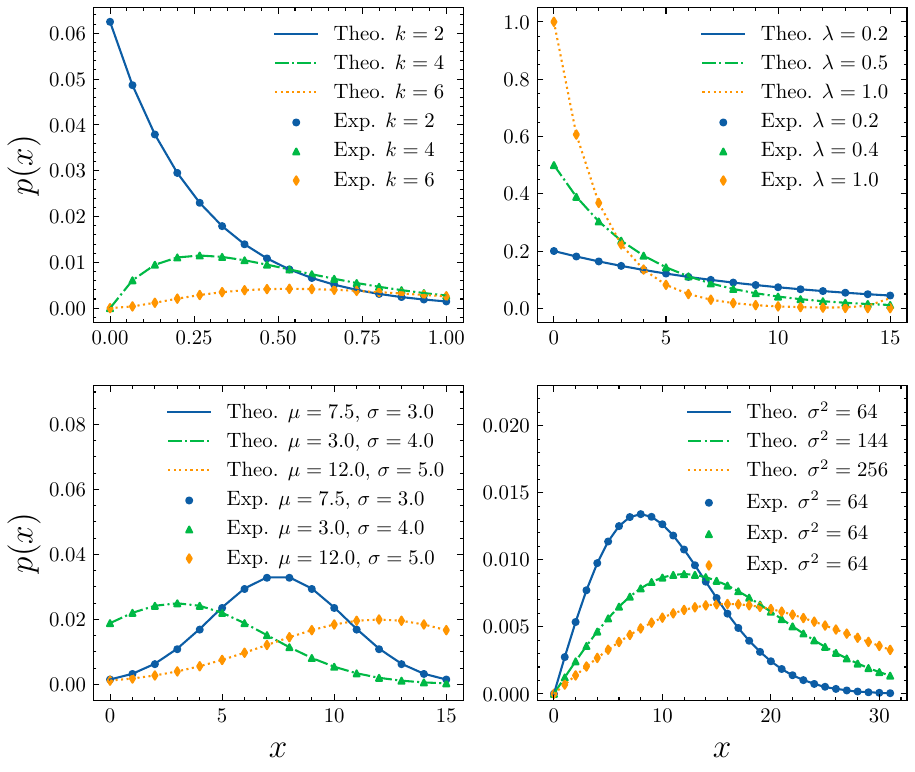}};
        \node at (-4, 3.5) {\text{(a)}};
        \node at (4.5, 3.5) {\text{(b)}};
        \node[scale=0.8] at (7.3, 3.85) {\text{Chi-squared}};
        \node at (9.27, 3.5) {\text{(c)}};
        \node[scale=0.8] at (11.6, 3.85) {\text{Exponential}};
        \node at (4.5, 0) {\text{(d)}};
        \node[scale=0.8] at (7.35, 0.15) {\text{Normal}};
        \node at (9.27, 0) {\text{(e)}};
        \node[scale=0.8] at (11.7, 0.15) {\text{Rayleigh}};
    \end{tikzpicture}
    \caption{{Maximum entropy distribution loader.}
    {(a)} The quantum circuit for MEDL.
    {(b-e)} The numerical results to prepare Chi-squared (upper left), exponential (upper right), normal (lower left) and Rayleigh (lower right) distributions.
    The dots represent the experimental results and lines are the theoretical distributions.}
    \label{fig:medl}
\end{figure*}

To summarize, we derive the following result:
\begin{theorem}[\textbf{Maximum Entropy Distribution Loader}]\label{thm:med-SI-PQC}
    Any statistical distribution can be viewed as a maximum entropy quantum state $\ket{p} = \sum_x\frac{1}{\mathcal{N}}\exp{\sum_{k=1}^M \lambda_k f_k(x)}\ket{x}$
    and can be prepared via a statistics-informed parameterized quantum circuit with a fixed structure and trainable parameters.
    The circuit depth is $\mathcal{O}\left(nMd_f\log\left(\frac{1}{\mathcal{F}\epsilon}\right)\right)$ with $\mathcal{O}(\log{n})$ ancillary qubits and success probability $\mathcal{O}(1/\mathcal{F})$, given the data space qubit number $n$, constraints number $M$, maximum constraint order $d_f$, filling rate $\mathcal{F}=\frac{\lVert p\rVert_{2}}{\lVert p\rVert_{\mathrm{max}}}$, and precision $\epsilon$.
\end{theorem}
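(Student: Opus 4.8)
\emph{Proof plan.} The statement bundles three things: a representation claim (every distribution is a maximum-entropy distribution, MED), a synthesis claim (the three-layer MEDL circuit prepares $\ket p$), and a resource count. For the representation claim I would invoke the standard maximum-entropy/Lagrangian-duality argument on the finite alphabet $\{0,\dots,2^n-1\}$: if the observables $f_1,\dots,f_M$ together with the constant function span the real functions on the alphabet, the entropy maximiser subject to $\mathbb{E}[f_k(x)]=a_k$ is unique and has the Gibbs form $p(x)=\mathcal{N}^{-1}\exp(\sum_k\lambda_k f_k(x))$, with $\lambda_k$ the multipliers dual to $a_k$; conversely any strictly positive $p$ already has this form (take logarithms and expand in the $f_k$-basis) and a general $p$ is a limit of such. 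This simultaneously delivers the ``any distribution is a MED'' and ``best unbiased guess'' clauses, so I would keep it to a paragraph and cite \cite{conrad:2004probability, banavar:2010applications, park:2009maximum}.

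\emph{Synthesis.} For the circuit I would verify layer by layer that each stage is a block-encoding with a controllable subnormalisation, then compose. (i) The ramp $\mathrm{diag}(0,\tfrac{1}{2^n-1},\dots,1)$ has an exact diagonal block-encoding of depth $\mathcal{O}(n)$ on $\mathcal{O}(\log n)$ ancillas; running QSVT with the fixed Chebyshev approximant of the elementary function $f_k$ — whose parameters are pinned to $1$, so its phase factors are data-independent and precomputed — yields $U_{f_k}$, a block-encoding of $\bar f_k = f_k/(2\lVert f_k\rVert_{\mathrm{max}})$ of depth $\mathcal{O}(n\,d_f)$, where $d_f$ is the largest polynomial degree required (the ``maximum constraint order''). (ii) The LCU stage: a PREPARE unitary on $\lceil\log M\rceil$ ancillas built as a binary tree of $Y$-rotations with the angles \eqref{eq:lcc_rotation_angle}, a SELECT that conditionally applies each $U_{f_k}$, and UNPREPARE; one checks that the amplitudes $\propto\sqrt{\lambda_k\lVert f_k\rVert_{\mathrm{max}}}$ reconstruct $f=\sum_k\lambda_k f_k$ up to subnormalisation, and that $M$ rotation angles is information-theoretically minimal since $f$ carries $M$ independent reals. (iii) The exponentiation (imaginary-time) layer: one more QSVT on a single ancilla using the degree-$d$ truncated Taylor series of $e^{x}$ (coefficients fixed, phases $\phi_j$ precomputed) turns the block-encoding of $f$ into one of $\exp(f)\propto\mathcal{N}p$; acting on $2^{-n/2}\sum_x\ket x$ and post-selecting the ancillas on $\ket 0$ then outputs $\ket p$.

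\emph{Resources and error.} The depth multiplies the layers: $\mathcal{O}(n)$ for the ramp/arithmetic, $\times\,\mathcal{O}(d_f)$ per constraint, $\times\,\mathcal{O}(M)$ for the SELECT sweep over all $M$ constraints, $\times\,\mathcal{O}(d)$ for the exponential, with $d=\mathcal{O}(\log(1/(\mathcal{F}\epsilon)))$: the polynomial for $e^{x}$ must have additive error $\sim\mathcal{F}\epsilon$ because the smallest relevant amplitude of the normalised target is of order $\mathcal{F}$, so an $\epsilon$-accurate state forces absolute accuracy $\mathcal{F}\epsilon$, whence degree $\mathcal{O}(\log(1/(\mathcal{F}\epsilon)))$ from the Taylor tail $x^d/d!$. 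The ancillas sum to $\mathcal{O}(\log n)$ for the ramp plus $\lceil\log M\rceil$ for PREPARE plus $1$ for ITE, i.e.\ $\mathcal{O}(\log n)$ under $M=\mathrm{poly}(n)$. The success probability is obtained by expanding the composed block-encoding acting on the uniform register: the flagged amplitude is proportional to $\sum_x p(x)\ket x$, so the post-selection probability is $\propto\lVert p\rVert_{2}^{2}/\lVert p\rVert_{\mathrm{max}}^{2}=\mathcal{F}^2$, from which the quoted $\mathcal{O}(1/\mathcal{F})$ scaling follows after the usual amplitude-amplification bookkeeping. Finally a triangle inequality over the three approximation sources — Chebyshev truncation of each $f_k$, Taylor truncation of $e^x$, and the QSVT-implementation error — bounds $\lVert\ket{p}_{\mathrm{out}}-\ket p\rVert\le\epsilon$ once the degrees are fixed as above.

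\emph{Main obstacle.} I expect the crux to be the joint control of the two subnormalisations — the one from LCU and the one from the exponential QSVT — against the dynamic range of $\exp(f)$: one must choose the QSVT target and its scaling so that $e^{(\cdot)}$ is evaluated only on the interval actually occupied by $f$, so that the flagged amplitude is a faithful (undistorted beyond $\epsilon$) rescaling of $p(x)$, and so that the post-selection probability really is the filling rate squared rather than something smaller. Squeezing the $\log(1/(\mathcal{F}\epsilon))$ — rather than a naive $\log(1/\epsilon)$ — out of this accounting, and confirming that amplitude amplification composes cleanly with the nested block-encodings, is the delicate part; the duality argument and the rotation-angle identity \eqref{eq:lcc_rotation_angle} are routine by comparison.
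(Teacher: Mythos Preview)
Your plan is correct and mirrors the paper's proof: the same three-layer block-encoding pipeline (diagonal ramp $\to$ per-constraint QSVT $\to$ LCU $\to$ exponential QSVT), the same multiplicative depth accounting, and the same $\mathcal{F}^{2}$ post-selection probability from the flagged overlap. Two small inconsistencies are worth tightening. First, the angle formula \eqref{eq:lcc_rotation_angle} you invoke is specific to the paper's \emph{sparse} ladder state-preparation pair on $M$ ancillas (a cascade of controlled $Y$-rotations, Lemma~\ref{lem:spp}), not to a dense binary tree on $\lceil\log M\rceil$ qubits; either adopt the sparse encoding or rederive the angles for your tree. Second, your heuristic for the $\log(1/(\mathcal{F}\epsilon))$ degree (``smallest relevant amplitude $\sim\mathcal{F}$'') is not the paper's mechanism: there the bound arises from $d_{\exp}\gtrsim\max\{\log(1/(\mathcal{N}\epsilon)),\,2e\lambda\}$ via Stirling on the Taylor tail of $e^{2\lambda x}$, combined with $e^{\lambda}/\mathcal{N}\ge \lVert e^{f}\rVert_{\max}/\lVert e^{f}\rVert_{2}=1/\mathcal{F}$, which is a max-to-$\ell_{2}$ ratio rather than a smallest-amplitude statement.
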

\noindent Herein, the circuit depth scales linearly as the qubit number $n$ and hence can serve as an efficient amplitude-encoded state preparation subroutine for many QML and quantum-enhanced Monte-Carlo integration tasks~\cite{hideyuki:2020quantum,worner:2019quantum, Blank:2021quantum, Zhuang:2023quantum}.
The circuit depth also depends linearly on the free parameter numbers $M$ and the maximum constraint order $d_f$, which are usually a small constant, as detailed in SM~\cite{SM}.
The logarithmic dependency on the precision $1/\epsilon$ comes from the imperfect rotation angles.
The success probability dependency on the filling rate~\cite{rattew:2022preparing, rattew:2023non}, which is to characterize the amplitude-encoded preparation complexity of continuous function, is equal to the optimal analytic result up to known.

We build MEDL in the explicit quantum circuit and show output states in Fig.~\ref{fig:medl}(b-e). Here, we demonstrate exponential, Chi-squared, normal, and Rayleigh distributions with applications in financial time series, statistics, machine learning, and physics, respectively. For each distribution family, several distribution parameters are tested. All experiments match well with their theoretical results, representing the generality of the MEDL.

A key feature of MEDL is that
the varying model parameters $\vec{\lambda}$ are encoded explicitly into the rotation angles $\vec{\theta}$ in parameterized layers while the constraint layers and exponentiation layers remain invariant. 
This provides two main advantages. 
First, the circuit structure is static, enabling advanced circuit compilation, deep optimization, reuse, and high-precision computation of the phase factor, which is considered a primary error source in QSVT~\cite{yulong:2021efficient,rattew:2023non}.
Second, most layers in the circuit remain static for preparing distributions of the same family, allowing an efficient scheme for preparing a mixture of distributions.

\textit{Mixture of distributions} -- 
The mixture of distributions, which are widely used in machine learning, finance, astronomy, and robust statistics \cite{hideyuki:2020quantum,douglas:2009gaussian,ian:2008portfolio,lee:2012application, huber:2011robust}, can be efficiently prepared by SI-PQC by reusing components in MEDL.

Formally, given distribution of latent parameter space and visible distribution of data space to be $\mathcal{P}_\mathrm{para}$, respectively, we can prepare
\begin{equation}\label{eq:2}
    p(x) = \int \mathcal{P}_\mathrm{data}(x; \theta) \mathcal{P}_\mathrm{para}(\theta) d\theta
\end{equation}
by \textit{Weighted Distribution Mixer} (WDM) pictured in Fig.~\ref{fig:mixer}(a).

\begin{figure*}
    \centering
    \begin{tikzpicture}
        \node[scale=0.65] at (0, 0) {
            \begin{quantikz}
                \lstick[3,brackets=right,label style={rotate=90, yshift=0.5cm, xshift=1cm}]{{Digital-encoded\\ Latent Space}}&\qwbundle{n_1^{(\theta)}}
                 &\gate{L_1}&\ctrl{3}&&&\ \ldots\ &&\gate{L_1^\dagger}&\\ 
                &\qwbundle{n_2^{(\theta)}} &\gate{L_2}&&\ctrl{3}&&\ \ldots\ &&\gate{L_2^\dagger}&\\[0.25cm]
                &\qwbundle{n_M^{(\theta)}}\lstick[1,brackets=none,label style={yshift=1cm, xshift=0.9cm,color=black, rotate=90}]{\ldots} &\gate{L_M}&&&&\ \ldots\ &\ctrl{4}&\gate{L_M^\dagger}\lstick[1,brackets=none,label style={yshift=1cm, xshift=0.4cm,color=black, rotate=90}]{\ldots}&\\
                \lstick[4,brackets=right,label style={rotate=90, yshift=0.5cm, xshift=1cm}]{{Analogue-encoded\\ Latent Space}}&&&\gate{RY(\theta_1)}\gategroup[4,steps=6,background,style={dashed,rounded corners,fill=green!10, inner xsep=2pt}, label style={label position=above,anchor=north,yshift=-0.2cm}]{Modified State Preparation Pair}&\ctrl{1}&\targ{}&\ \ldots\ &&&\\
                &&&&\gate{RY(\theta_2)}&\ctrl{-1}&\ \ldots\ &&&\\
                &\lstick[1,brackets=none,label style={yshift=0.75cm, xshift=0.9cm,color=black, rotate=90}]{\ldots}&&&&\lstick[1,brackets=none,label style={yshift=0.75cm, xshift=0.8cm,color=black, rotate=135}]{\ldots}&\ \ldots\ &\ctrl{1}&\targ{}&\\
                &&&&&&\ \ldots\ &\gate{RY(\theta_M)}&\ctrl{-1}&\\[0.5cm]
                &\qwbundle{M}&\ \ldots\ &\gate[2, style={fill=green!10}]{P_R(\textcolor{red}{\theta})}
                \gategroup[3,steps=4,background,style={dashed,rounded corners, inner xsep=2pt}, label style={label position=below,anchor=north,yshift=-0.2cm}]{Repeat $\log{(1/\epsilon\mathcal{F})}$ times}
                &\gate[2, style={fill=cyan!10}]{U_{\mathrm{constraint}}}&\gate[2, style={fill=green!10}]{P_L(\textcolor{red}{\theta})^\dagger}&&\ \ldots\ &&\\
                \lstick[3,brackets=right,label style={rotate=90, yshift=0.5cm, xshift=1cm}]{{Visible Space\\ of Data}}&\qwbundle{n_x}&\ \ldots\ &&&&\gate[2, style={fill=orange!10}]{e^{i\textcolor{blue}{\phi_j}(2\Pi-I)}}&\ \ldots\ &&\\
                &\qwbundle{\log{n_x}}&\ \ldots\ &&&&&\ \ldots\ &&
            \end{quantikz}
        };
        \draw (-2.5,-1.57) -- (-2.65, -1.1);
        \draw (-1.82,-1.57) -- (4.5, -1.1);
        \node at (9, 0) {\includegraphics[width=0.38\linewidth]{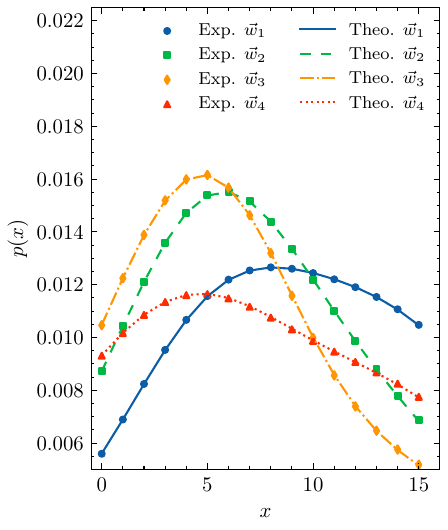}};
        \node at (-5, 4) {\text{(a)}};
        \node at (6, 4) {\text{(b)}};
    \end{tikzpicture}
    \caption{{Weighted distribution mixer.}
    {(a)} The quantum circuit for the Weighted Distribution Mixer.
    The state preparation pair $(P_L, P_R)$ in MEDL is substituted by an analogue-encoded space controlled by the digital-encoded latent space.
    {(b)} The numerical result to prepare a $4$-mixture of normal distributions with means $\Vec{\mu}=(1, 5, 10, 15)$, variances $\sigma^2=(30.25, 16, 36, 36)$, and weights $\vec{w}_1=(0.1, 0.2, 0.3, 0.4)$, $\vec{w}_2=(0.2, 0.3, 0.4, 0.1)$, $\vec{w}_3=(0.3, 0.4, 0.1, 0.2)$, $\vec{w}_4=(0.4, 0.1, 0.2, 0.3)$.}    
    \label{fig:mixer}
\end{figure*}
The procedure of WDM is as follows. First, we prepare the quantum superposition of $N_\theta$ possible model configurations $\theta$ on the latent space of $n_\theta=\log{N_\theta}$ qubits with $d_\theta$ circuit depth.
Then a digital-analog conversion is utilized to implement the parameterized layer~\cite{mitarai:2019quantum}. A detailed description is shown in SM~\cite{SM}.
In summary, we have:
\begin{theorem}[\textbf{Weighted Distribution Mixer of a Parametric Family}]\label{thm:wdm-SI-PQC}
    An arbitrary distribution mixture of a parametric family $f(x; \theta)$ can be modeled by an entangled quantum system $\frac{1}{\mathcal{N}}\sum_\theta\sum_x{\mathcal{P}_\mathrm{para}(\theta)}\mathcal{P}_\mathrm{data}(x; \theta)\ket{\theta}\ket{x}$ and can be prepared via a statistics-informed parameterized quantum circuit with fixed structure and trainable parameters. The total circuit depth is $\mathcal{O}(({n_xMd_f}+ n_{\theta}+d_\theta)\log{\frac{1}{\epsilon\mathcal{F}}})$ given an $M$-dimension latent parameter space of ${n_{\theta}}$ qubits and depth $d_\theta$, a visible data space of ${n_{x}}$ qubits, $M$ corresponding constraints with maximum order $d_f$, and filling rate $\mathcal{F}$.
\end{theorem}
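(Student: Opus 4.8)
The plan is to reduce the claim to Theorem~\ref{thm:med-SI-PQC} by lifting the maximum entropy distribution loader so that it acts coherently, conditioned on a register that stores the latent parameter. The structural fact that makes this work is that in the MEDL circuit of Fig.~\ref{fig:medl}(a) only the parameter-injection pair $\bigl(P_R,P_L^\dagger\bigr)$ depends on the model parameters: the constraint layers $U_{f_k}$ and the exponentiation (imaginary-time-evolution) layers $e^{\iu\phi_j(2\Pi-I)}$ are parameter-independent. Hence a mixture over a parametric family can be prepared by keeping the latter layers literally unchanged and promoting the former to a latent-controlled version.

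Concretely I would proceed in four steps. (i) \emph{Latent preparation}: on the digital latent register of $n_\theta$ qubits, prepare (up to normalization) $\sum_\theta \mathcal{P}_\mathrm{para}(\theta)\ket{\theta}$ by any depth-$d_\theta$ amplitude loader, for instance a further instance of MEDL when $\mathcal{P}_\mathrm{para}$ is itself specified only through moment constraints. (ii) \emph{Digital--analog conversion}: for each latent value, evaluate by reversible arithmetic the LCU rotation angles of Eq.~\eqref{eq:lcc_rotation_angle} belonging to that model and write them into the analogue latent register through a layer of latent-controlled $RY$ rotations --- the ``modified state preparation pair'' of Fig.~\ref{fig:mixer}(a) --- so that on each branch $\ket{\theta}$ the pair acting on the block-encoding ancillas coincides with the $(P_R,P_L)$ that MEDL would use for that model. (iii) Apply the fixed constraint layers, the LCU combination driven by the analogue angles, and the fixed exponentiation layers, repeating the QSVT/block-encoding sequence $\log(1/\epsilon\mathcal{F})$ times exactly as in MEDL. (iv) Uncompute the conversion and latent auxiliaries and post-select the block-encoding ancillas.

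Correctness is then immediate from linearity: restricted to a branch $\ket{\theta}$, steps (ii)--(iv) are verbatim the MEDL circuit instantiated at that model, so by Theorem~\ref{thm:med-SI-PQC} the data register is sent, conditioned on success, to $\sum_x\frac{1}{\mathcal{N}_\theta}\mathcal{P}_\mathrm{data}(x;\theta)\ket{x}$; superposing over $\theta$ with the amplitudes of step (i) yields $\frac{1}{\mathcal{N}}\sum_\theta\sum_x \mathcal{P}_\mathrm{para}(\theta)\mathcal{P}_\mathrm{data}(x;\theta)\ket{\theta}\ket{x}$ after fixing the global normalization, which is the entangled system of the statement and models the mixture of Eq.~\eqref{eq:2}. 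For the depth: step (i) costs $d_\theta$; step (ii) costs $\mathcal{O}(n_\theta)$ --- one controlled-rotation layer per stored bit, plus $\mathrm{poly}$-size reversible arithmetic of the same order; step (iii) reuses the MEDL count with $n=n_x$, i.e.\ $\mathcal{O}(n_xMd_f)$ per round over $\log(1/\epsilon\mathcal{F})$ rounds; whether or not (i)--(ii) are re-executed inside the loop, the total is $\mathcal{O}\bigl((n_xMd_f+n_\theta+d_\theta)\log\tfrac{1}{\epsilon\mathcal{F}}\bigr)$. The $\mathcal{O}(\log n_x)$ ancillas and the $\mathcal{O}(1/\mathcal{F})$ success probability are inherited from Theorem~\ref{thm:med-SI-PQC}, with $\mathcal{F}$ now the filling rate of the joint state; in particular this prepares a superposition of $N_\theta=2^{n_\theta}$ distributions at the additive price $\mathcal{O}(n_\theta+d_\theta)$ rather than the multiplicative $\Theta(N_\theta)$ cost of preparing and mixing them one at a time, which is the advertised exponential saving.

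The main obstacle is the branch-dependent normalization. MEDL's block-encoding carries a subnormalization $\mathcal{N}_\theta$ that in general varies with $\theta$, so one must check that post-selecting the \emph{shared} block-encoding ancillas acts coherently over all branches and leaves precisely the weights $\mathcal{P}_\mathrm{para}(\theta)\mathcal{P}_\mathrm{data}(x;\theta)$ --- i.e.\ that the joint filling rate is the right figure of merit and that any residual $\theta$-dependent factor is either uniform or can be flattened by one additional fixed rotation before post-selection. A secondary point to be made rigorous is that evaluating the nonlinear angle map of Eq.~\eqref{eq:lcc_rotation_angle} reversibly on a superposition of latent values costs only $\mathcal{O}(1)$ depth per bit with a bounded number of scratch qubits, so it does not alter the stated scaling. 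Both are routine but need care.
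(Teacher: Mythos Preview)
Your approach is essentially the paper's: lift MEDL coherently by promoting only the state-preparation pair $(P_R,P_L)$ to a latent-controlled version while keeping the constraint and exponentiation layers fixed, then count depth as MEDL's cost plus the latent preparation and conversion overhead. The correctness-by-linearity argument and the depth bookkeeping match.

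One simplification in the paper removes your second flagged obstacle entirely. You propose to store the model parameters $\lambda_k$ in the digital latent register and then evaluate the nonlinear map of Eq.~\eqref{eq:lcc_rotation_angle} by reversible arithmetic before rotating. The paper instead takes the digital latent register to hold the LCU rotation angles $\theta_k$ \emph{themselves}, in binary $\theta_k=\theta_0\cdot 0.s_1^{(k)}s_2^{(k)}\cdots$; the digital--analog conversion is then nothing more than a ladder of bit-controlled rotations $RY(\theta_0/2^j)$ summing to $RY(\theta_k)$, at depth exactly $n_\theta$ with no arithmetic and no scratch qubits. So the ``routine but needs care'' step you anticipate simply does not arise in the paper's parameterization.

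Your first flagged obstacle --- the branch-dependent subnormalization $\mathcal{N}_\theta$ and whether post-selection on shared ancillas leaves exactly the weights $\mathcal{P}_\mathrm{para}(\theta)\mathcal{P}_\mathrm{data}(x;\theta)$ --- is a genuine subtlety, and in fact you are being more careful here than the paper's own proof, which confines itself to the depth count and does not discuss this point.
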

Note that instead of requiring $N_\theta$ repetitions of the whole circuit, WDM only necessitates an additional depth of $\mathcal{O}((\log{N_\theta}+d_\theta)\log{\frac{1}{\epsilon\mathcal{F}}})$ to implement the $N_\theta$-mixture of
parametric distributions and to model the latent distribution simultaneously. Consequently, an exponentially growing circuit depth is avoided due to the static structure and invariant output space of SI-PQC.

We also build the quantum circuit for WDM. In the test, we prepare the mixture of four normal distributions with different weights $w_k$, means $\mu_k$ and variances $\sigma_k^2$, that is
\begin{equation}
    p(x)=\sum_k w_k \frac{1}{\sqrt{2\pi}\sigma_k}e^{-\frac{(x-\mu_k)^2}{2\sigma_k^2}}.
\end{equation}
Results show a near-perfect match to the theory, as shown in Fig.~\ref{fig:mixer}(b).
Our methods can be naturally generalized to  mixtures of other distributions. Additional numerical results are shown in SM~\cite{SM}.

\textit{Data-driven model calibration} --
In practice, the SI-PQC method, including MEDL and WDM, can create a superposition of numerous statistics models with all possible parameter configurations, viewed as the quantum brute force of statistics modeling, discussed in the context of stochastic process in Ref.~\cite{Blank:2021quantum}. 
To showcase the applicability in statistics learning, we apply a quantum-classical hybrid optimization procedure to minimize the calibration metric between the observed and predicted data
$\argmin_{\theta} f(Y_{\mathrm{obs}}, Y_{\mathrm{pre}}(x|\theta))$.
The objective function is given by
$\braket{P_{\mathrm{pre}}}{P_{\mathrm{obs}}} = \bra{0}U_{\mathrm{SI-PQC}}^\dagger O_\mathrm{obs}\ket{0}$, and can be efficiently computed by SI-PQC.

\begin{figure}
    \begin{tikzpicture}
        \node[]{
    \includegraphics[width=\linewidth]{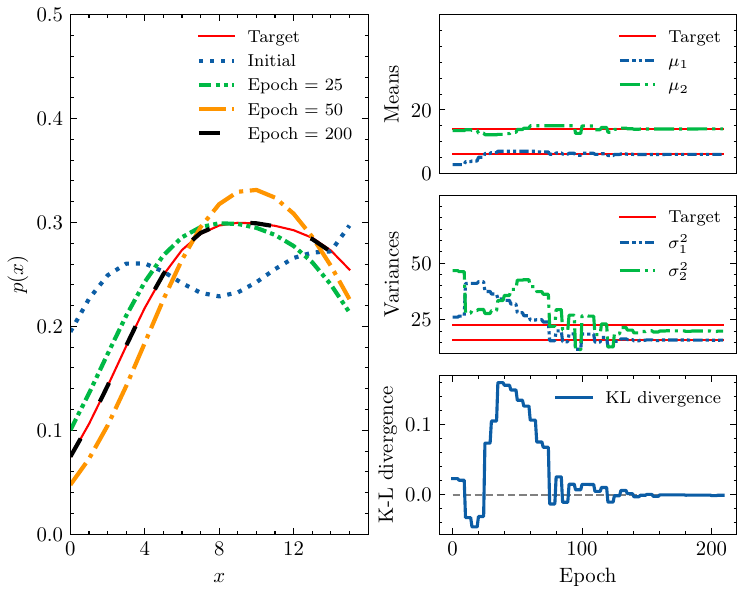}};
    \node[] at(-4.25, 3.2) {(a)};
    \node[] at(0.5, 3.2) {(b)};
    \node[] at(0.5, 1) {(c)};
    \node[] at(0.5, -0.9) {(d)};
    \end{tikzpicture}
    \centering
    \caption{
    {Learning Gaussian Mixture Model based on SI-PQC.}
    {(a)}
    Quantum states in the learning procedure.
    {(b)}
    Estimated means extracted from the angles in the training procedure.
    {(c)}
    Estimated variances extracted from the angles in the training procedure.
    {(d)}
    The K-L divergence to characterize the distance from the target state.
    }
    \label{fig:learning}
\end{figure}

In our numerical experiment, we demonstrate the learning of the Gaussian Mixture Model (GMM), a statistics model often used in machine learning and finance~\cite{hideyuki:2020quantum,douglas:2009gaussian,ian:2008portfolio}.
The classical GMM estimation procedure is usually costly and the quantum method is proposed assuming a quantum access of GMM~\cite{hideyuki:2020quantum}.
We use the fidelity with the black-box oracle as the loss function.
The target mixture of two Gaussian distributions of means $\mu_1=6, \mu_2=14$, variances $\sigma_1^2 = 16, \sigma_2^2 = 22.5$, and weights $w_1 = 0.45, w_2 = 0.55$ is depicted by the red dashed line in Fig.~\ref{fig:learning}(a).
We started from a random initialized state colored in blue and learned the final state colored in black.
Moreover, SI-PQC enables us to learn the model parameters as well as the quantum state simultaneously.
Fig.~\ref{fig:learning}(b-c) show the estimated model parameters converging to the target means and variances, respectively.
This illustrates the better interpretability beyond other variational-based learning protocols.
The experimental results agree very well with the target distribution, characterized by the K-L divergence and fidelity relative to the target distribution, depicted in Fig.~\ref{fig:learning}(d). Note that the training protocol of SI-PQC is not limited to any specific distribution, where we show the training of exponential mixture model in SM~\cite{SM}.

The potential quantum advantage is twofold:
Firstly, suppose that one is given access to quantum data, such as the intermediate state from another quantum algorithm that needs further statistical characterization or evaluation.
Instead of a costly quantum state tomography procedure, SI-PQC-based model calibration can save the consumption of quantum states and time.
Secondly, suppose that one is given classical data from the realistic world, quantum model calibration procedure still possesses a potential advantage where one can evaluate the inner product by a tiny amount of measurements instead of numerous classical multiplications and summations.
Besides, as prior knowledge from experience and experts are informed, the parameter space is restricted to a finite-dimensional subspace while the expressiveness of the PQC is not reduced for the underlying problem. 
Hence it is rational to assume a mitigation of barren plateaus when compared to more general variational-based QML algorithms~\cite{Cerezo:2022challenges,martin:2024a,mcclean:2018barren}.

\textit{Conclusion} --
In this letter, we give an early yet natural glimpse of statistics models through the lens of quantum information and solve the aforementioned problems of preparing and learning statistics models on a quantum computer by introducing statistics information into the design of PQC.

As an efficient state-preparing subroutine of statistics models with varying parameters, SI-PQC can play an essential role in mitigating the intractable input bottleneck faced by practical QML and many other algorithms.
Moreover, any in-depth circuit simplification and compilation techniques can be shared, enabling the utilization of practical applications wherein speed speaks a lot such as quantitative and high-frequency trading.

As a trainable and knowledge-informed PQC,
SI-PQC can improve the trainability of QML algorithms such as GMM and physics-informed neural networks.
Furthermore, these parameters are learned simultaneously to reveal the intrinsic statistical properties in the given model, promising better theoretical interpretability.

In principle, SI-PQC can be extended to prepare time series, to study non-parametric and semi-parametric statistics tests, and to extract information from unknown quantum states.
We leave these interesting problems as future works.

\smallskip
This work is supported by National Key Research and Development Program of China (Grant No. 2023YFB4502500).

\newcommand{\beginsupplement}{%
	\setcounter{table}{0}%
	\renewcommand{\thetable}{S\arabic{table}}%
	\setcounter{figure}{0}%
	\renewcommand{\thefigure}{S\arabic{figure}}%
	\setcounter{equation}{0}%
	\renewcommand{\theequation}{S\arabic{equation}}%
	\setcounter{section}{0}%
	\renewcommand{\thesection}{\arabic{section}}%
        \setcounter{page}{1}%
        \renewcommand{\thepage}{S\arabic{page}}%
}

\let\oldaddcontentsline\addcontentsline
\renewcommand{\addcontentsline}[3]{}
\bibliographystyle{apsrev4-1}
\bibliography{main.bib}

\let\addcontentsline\oldaddcontentsline
\resetlinenumber
\clearpage
\onecolumngrid
\begin{center}
\textbf{\large Supplemental Material for \\
    ``Statistics-Informed Parameterized Quantum Circuit via Maximum Entropy Principle for Data Science and Finance''}
\end{center}

\maketitle

\beginsupplement
\renewcommand{\citenumfont}[1]{S#1}%
\renewcommand{\bibnumfmt}[1]{[S#1]}%

\NewDocumentCommand{\citesm}{>{\SplitList{,}} m }{%
  \def\temp{}%
  \ProcessList{#1}{\addSMprefix}%
  \expandafter\cite\expandafter{\temp}%
}
\newcommand{\addSMprefix}[1]{%
  \ifdefempty{\temp}%
    {\def\temp{SM_#1}}
    {\edef\temp{\temp,SM_#1}}%
}

\tableofcontents

\section{Related works}

\subsection{Works on statistics distribution  preparation task}

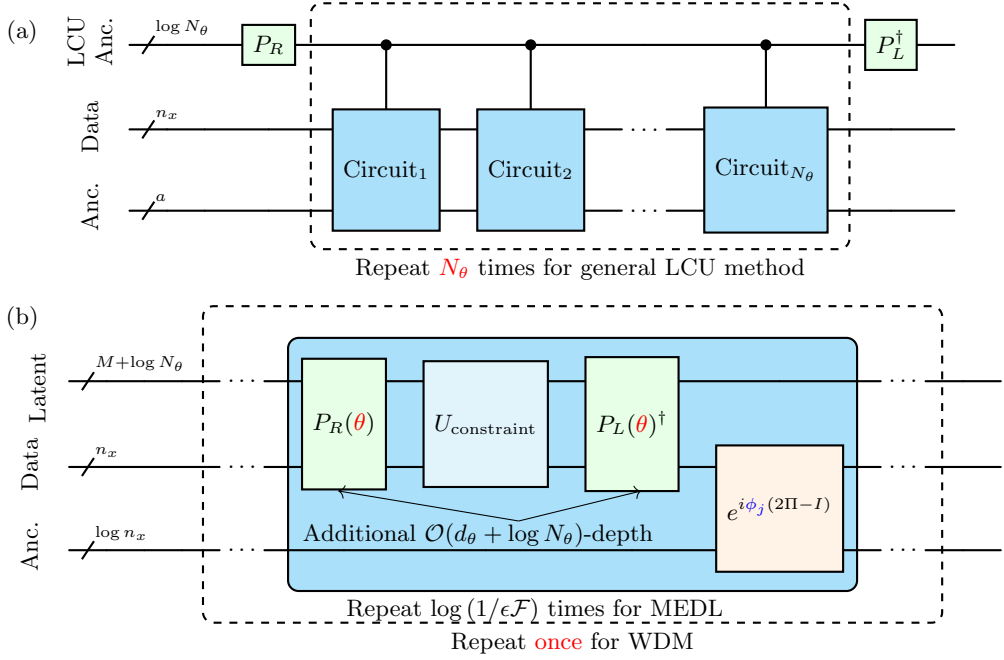
\begin{figure*}[htbp]
    \centering
    \begin{tikzpicture}
        \node at (0, -4.5) {        
        \begin{quantikz}
                \lstick[1, brackets=none,label style={yshift=0.5cm, xshift=-0.4cm, rotate=90}]{{Latent}}
                &\qwbundle{M+\log{N_\theta}}&&&\ \ldots\                 
                \gategroup[3,steps=6,background,style={dashed,rounded corners, inner xsep=3pt, inner ysep=16pt}, label style={label position=below,anchor=north,yshift=-0.2cm}]{Repeat \textcolor{red}{once} for WDM}
                &\gate[2, style={fill=green!10}]{P_R(\textcolor{red}{\theta})}
                \gategroup[3,steps=4,background,style={rounded corners, inner xsep=2pt, fill=cyan!30}, label style={label position=below,anchor=north,yshift=-0.17cm,xshift=-0.5cm}]{Repeat $\log{(1/\epsilon\mathcal{F})}$ times for MEDL}
                &\gate[2, style={fill=cyan!10}]{U_{\mathrm{constraint}}}&\gate[2, style={fill=green!10}]{P_L(\textcolor{red}{\theta})^\dagger}&&\ \ldots\ &&\\
                \lstick[1, brackets=none,label style={yshift=0.5cm, xshift=-0.5cm, rotate=90}]{{Data}}
                &\qwbundle{n_x}&&&\ \ldots\ &&&&\gate[2, style={fill=orange!10}]{e^{i\textcolor{blue}{\phi_j}(2\Pi-I)}}&\ \ldots\ &&\\
                \lstick[1, brackets=none,label style={yshift=0.5cm, xshift=-0.5cm, rotate=90}]{{Anc.}}
                &\qwbundle{\log{n_x}}&&&\ \ldots\ &                
                \lstick[1, brackets=none,label style={yshift=0.2cm, xshift=4.25cm}]{Additional $\mathcal{O}(d_\theta+\log{N_\theta})$-depth}
                &&&&\ \ldots\ &&
        \end{quantikz}
            };
        \draw[->] (0.1,-5) -- (1.73,-4.6);
        \draw[->] (0.1,-5) -- (-2.3,-4.6);
        \node at (0, 0) {
        \begin{quantikz}
            \lstick[1, brackets=none,label style={yshift=0.5cm, xshift=-0.5cm, rotate=90}]{{LCU\\Anc.}}
            &\qwbundle{\log{N_\theta}}&&\gate[1, style={fill=green!10}]{P_R}
            &\ctrl{1}
            \gategroup[3,steps=4,background,style={dashed,rounded corners, inner xsep=5pt, inner ysep=3pt}, label style={label position=below,anchor=north,yshift=-0.2cm}]{Repeat $\textcolor{red}{N_\theta}$ times for general LCU method}
            &\ctrl{1}&&\ctrl{1}&\gate[1, style={fill=green!10}]{P_L^\dagger}&\\
            \lstick[1, brackets=none,label style={yshift=0.5cm, xshift=-0.5cm, rotate=90}]{{Data}}
            &\qwbundle{n_x}&&&\gate[2, style={fill=cyan!30}]{\text{Circuit}_1}&\gate[2, style={fill=cyan!30}]{\text{Circuit}_2}&\ \ldots\ &\gate[2, style={fill=cyan!30}]{\text{Circuit}_{N_\theta}}&&\\
            \lstick[1, brackets=none,label style={yshift=0.5cm, xshift=-0.5cm, rotate=90}]{{Anc.}}
            &\qwbundle{a}&&&&&\ \ldots\ &&&
        \end{quantikz}
        };
        \node at (-6.5, 1.5) {\text{(a)}};
        \node at (-6.5, -2.3) {\text{(b)}};
    \end{tikzpicture}
    \caption{Comparison with related works for preparing a mixture of distributions.
    (a) For general preparing method, an $N_\theta$ times repetition of circuits is needed for $N_\theta$-components LCU.
    (b) For WDM method in this letter, the circuit is implemented once with totally $\mathcal{O}(d_\theta+\log{N_\theta})\log{(1/\epsilon\mathcal{F})}$ additional depth. 
    }
    \label{fig:compare}
\end{figure*}

The efficient amplitude encoding for continuous function is of general interest,
and many algorithms have been proposed to tackle this problem~\citesm{Grover:2002creating, holmes:2020efficient, Iaconis:2024quantum, zylberman:2023efficient, rattew:2022preparing, conde:2023efficient,rattew:2023non}.
Nevertheless, those algorithms fail to meet our requirements due to the common drawback of an implicit and indirect way to encode the statistics information:
The \emph{matrix product state (MPS)} method asks for an iterative procedure of piece-wise polynomial approximation, singular value decomposition, truncation, and gate transformation~\citesm{holmes:2020efficient, Iaconis:2024quantum}.
An alternative method of \emph{rank-1 projectors simulation} implements a time-dependent Hamiltonian adiabatic evolution wherein the statistics parameters are implicit~\citesm{rattew:2022preparing}.
Recently, the \emph{Walsh series loader (WSL)} method is proposed to prepare real-valued function demanding a classical computing procedure of Walsh series expansion coefficients, wherein the statistics information is intractable to extract as being diffused globally~\citesm{zylberman:2023efficient, conde:2023efficient}.
Another \emph{polynomial transformation of amplitudes} method combines the \emph{MPS} and \emph{WSL} methods to prepare the linear function, together with the quantum singular value transformation of amplitudes block-encoding to implement a (truncated) polynomial transformation of amplitudes~\citesm{guo:2021nonlinear, conde:2023efficient, rattew:2023non}.
While we are inspired by these elegant works and  our first algorithm utilizes the nonlinear amplitudes transformation technique in Refs.~\citesm{guo:2021nonlinear, rattew:2023non},
there are two essential differences between their works.

Firstly, in the quantum singular value transformation subroutine, there is a phase evaluating procedure.
This procedure is quite nontrivial as the original construction method admits numerical instability and an optimization procedure is suggested~\citesm{gilyen:2018quantum, yulong:2021efficient},
and this procedure has to be repeated many times when the model parameters are varying.
By introducing the maximum entropy principle we can prepare the distribution in a static-structured circuit when the model parameters are varying, and one does not need to re-evaluate the phase factor for the quantum singular value transformation subroutine.
Besides, we give an end-to-end analysis, in a unified way, of the preparation procedure for many often-used distributions, and extend the conclusions in Ref.~\citesm{rattew:2023non} to handle the case of more functions on larger intervals.

Secondly, since the original block-encoding framework is for operators while one needs to build a bridge from operators to amplitudes,
therein they develop an elegant way named \emph{block-encoding of amplitudes} to realize this~\citesm{guo:2021nonlinear, conde:2023efficient, rattew:2023non}:
\begin{theorem*}[\textbf{Block-encoding of amplitudes, specified from Theorem~4 of Ref.~\citesm{guo:2021nonlinear}}]\nonumber\label{lem:bea}
    
    Suppose a state preparation unitary $U$, one can construct a unitary $\vec{G}$ such that
    \begin{equation}
        (\bra{0}\otimes\mathcal{I}_{2n+1})\vec{G}(\ket{0}\otimes\mathcal{I}_{2n+1}) = \sum_{j=0}^{N-1}x_j\ket{\phi_j}\bra{\phi_j},
    \end{equation}
    by using controlled-$U$ and controlled-$U^\dagger$ four times and another $\mathcal{O}(n)$ one- and two-qubit gates.
\end{theorem*}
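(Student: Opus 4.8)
The plan is to exhibit $\vec{G}$ explicitly as a short circuit built from $\mathcal{O}(1)$ controlled calls to $U$ and $U^{\dagger}$ together with an $\mathcal{O}(n)$-gate comparison network, and then to certify the block-encoding identity by a direct computation. The conceptual key is the elementary observation $x_j=\bra{\phi_j}U\ket{0}$, where $\{\ket{\phi_j}\}$ is the basis in which the amplitudes are read off (the computational basis $\{\ket{j}\}$): if $U$ loads the amplitude vector onto a \emph{fresh} $n$-qubit register $B$, the coefficient $x_j$ that one wants on the diagonal is exactly the weight of the component of $B$ that matches the main register $S$.

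First I would assemble the skeleton: (i) apply $U$ to $B$, sending $\ket{\phi_j}_S\ket{0}_B\mapsto\ket{\phi_j}_S\sum_k x_k\ket{\phi_k}_B$; (ii) apply a reversible comparison ladder $C_{S\to B}$ of $\mathcal{O}(n)$ CNOTs that maps the matched component $\ket{j}_S\ket{j}_B$ to $\ket{j}_S\ket{0}_B$; (iii) project $B$ onto $\ket{0}$. On input $\ket{\phi_j}_S$ this returns $x_j\ket{\phi_j}_S$, so that $\bigl(\bra{0}_B\otimes\mathcal{I}_S\bigr)\,C_{S\to B}\,\bigl(U_B\otimes\mathcal{I}_S\bigr)\,\bigl(\ket{0}_B\otimes\mathcal{I}_S\bigr)=\sum_j x_j\ket{\phi_j}\bra{\phi_j}$, which is already a block-encoding of $\mathrm{diag}(x)$ --- and its operator norm $\max_j|x_j|\le\lVert U\ket{0}\rVert=1$, so subnormalization $1$ is legitimate.

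Then I would upgrade this skeleton to the genuine unitary $\vec{G}$ on $2n+1$ qubits with the stated query count. The role of the extra (flag) qubit and of the additional queries is to make the extraction both \emph{phase-faithful} and \emph{coherently reversible}: a naive overlap (SWAP-test) construction would deposit $|x_j|^2$, and a single Hadamard test only $\mathrm{Re}\,x_j$, so I would wrap the load--compare block in a Hadamard-test-like structure --- the flag prepared in $(\ket{0}+\ket{1})/\sqrt{2}$, the block run conditionally via controlled-$U$, a controlled-$U^{\dagger}$ to coherently uncompute the non-matching garbage left in $B$ on the success branch, and a second conditioned round carrying the relative phase needed to recover the full complex $x_j$. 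Counting, this uses controlled-$U$ and controlled-$U^{\dagger}$ four times in total, while the comparison ladder, the Hadamard gates, and the flag bookkeeping contribute only $\mathcal{O}(n)$ one- and two-qubit gates; the ancilla budget is $n$ qubits for $B$ plus one flag, so $\vec{G}$ acts on $2n+1$ qubits and the claimed identity follows by applying $(\bra{0}\otimes\mathcal{I})\vec{G}(\ket{0}\otimes\mathcal{I})$ to a generic input $\sum_j c_j\ket{\phi_j}$ and checking term by term that the ancillas factor out in $\ket{0}$ on the projected branch with no residual entanglement.

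The step I expect to be the main obstacle is precisely the exactness-plus-phase requirement in the complex case: arranging the two conditioned rounds so that they reconstruct $x_j$ with the correct amplitude \emph{and} relative phase while simultaneously returning $B$ and the flag to $\ket{0}$. It is this combined constraint --- phase-faithful extraction together with exact garbage uncomputation --- that forces the use of both $U$ and $U^{\dagger}$ and the fourfold count, and pinning down the ancilla labeling and phase conventions so that nothing approximate or spurious-phase survives the projection is the delicate part of the argument.
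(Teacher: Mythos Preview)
The paper does not give its own proof of this statement: immediately after stating it, the text simply says ``A constructive proof of this result is given in Ref.~[S8], and it also originates from the quantum analog-digital conversion algorithm~[S11].'' So there is nothing in the paper to compare against beyond that pointer.

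On the merits of your argument: your \emph{skeleton} is already a complete and correct block-encoding, and in fact strictly undercuts the stated resources. The circuit $G_0=C_{S\to B}\,(I_S\otimes U_B)$ is unitary, and for arbitrary complex amplitudes one checks directly that $(\bra{0}_B\otimes I_S)\,G_0\,(\ket{0}_B\otimes I_S)\,\ket{j}_S=x_j\ket{j}_S$ with the full phase intact --- the CNOT ladder picks out the $k=j$ term and carries $x_j$ through verbatim. That is one \emph{uncontrolled} call to $U$, $n$ ancillas, and $n$ CNOTs. Your subsequent ``upgrade'' to four controlled queries is unnecessary and rests on a conflation: the SWAP test depositing $|x_j|^2$ and the Hadamard test depositing $\mathrm{Re}\,x_j$ are \emph{measurement} protocols for estimating overlaps, whereas a block-encoding is just a unitary with a prescribed top-left block, and $G_0$ already has exactly that block. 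There is no garbage to uncompute on the success branch --- the $\ket{0}_B$ component is precisely $x_j\ket{j}_S$ --- and whatever sits on $\ket{m\neq 0}_B$ is irrelevant to the block-encoding identity. So the justification you give for why four queries are forced (``phase-faithful extraction together with exact garbage uncomputation'') does not hold up.

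One thing you should not gloss over: the statement writes $\mathcal{I}_{2n+1}$ on the right, so the $\ket{\phi_j}$ are not $n$-qubit computational basis states as you assume. In the cited construction the diagonal vectors are of the doubled form $\ket{j}\ket{j}$ (this is what the subsequent amplitude-transformation machinery in that reference actually acts on), and the extra queries and the flag qubit in the stated count come from that specific architecture, not from any deficiency of the one-query route for realizing the diagonal block. Your skeleton proves the existence claim with room to spare; if you want to match the literal operator in the displayed identity you should adjust the target space accordingly rather than invent a Hadamard-test wrapper.
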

\noindent A constructive proof of this result is given in Ref.~\citesm{guo:2021nonlinear}, and it also originates from the quantum analog-digital conversion algorithm~\citesm{mitarai:2019quantum}.
Nevertheless, the number of ancillary qubits scales as $\mathcal{O}(n)$ and the corresponding success probability of measuring the target state on the initial state $H^{\otimes n}\ket{0}$ scales as
$\sum_x \left(\frac{p(x)}{\sqrt{2^n}}\right)^2 = \frac{1}{2^n}$, inducing an exponentially growing time complexity.
In our work, we introduce a linear block-encoding
instead of an amplitude block-encoding with $\mathcal{O}(\log{n})$ ancillary qubits and $\mathcal{O}(1)$ probability as proved in the following text.
Consequently, we argue that our work can make an exponential reduction on both the number of ancillary qubits and the time complexity when focusing on continuous function preparation problem in comparison to Ref.~\citesm{guo:2021nonlinear, rattew:2023non}.

To compare with problem-agnostic PQC-based quantum algorithms, there are three main differences:

Firstly, we consider the scalability and trainability problems to prepare a distribution on $n$ qubits.
A general problem agnostic PQC suffers from scalability and trainability problems when system size grows.
For example, in Ref.~\citesm{zoufal:2019quantum} a generative model is implemented via a variational quantum circuit on $n$ qubits with $\mathcal{O}(n)$ parameterized gates to learn the distribution.
In our SI-PQC scheme, the independent circuit parameters are fixed to be $M$, the free parameters number of the underlying model, and are invariant when enlarging the system size $n$.
Therefore, we argue that SI-PQC can offer a potential improvement in the scalability and trainability with growing system size $n$, corresponding to the sampling precision.

Secondly, we consider the time and resource consumption problem in the training procedure.
When preparing an explicitly encoded distribution with an analytic formula, SI-PQC needs no quantum training procedure and the classical circuit compiling procedure for the constraint and exponentiation layers can both be re-used since they are just elementary functions.
For varying model parameters, one only needs to re-compute $M$ rotation angles by the explicit formula given above.
However, other problem-agnostic PQC-based algorithms need to be re-trained for an identical distribution with varying parameters.
Consequently, we argue that SI-PQC shows a feasible way to reduce the time and resource consumption, as well as better transferability on statistics states preparation tasks.

\subsection{Works on statistics mixture  preparation task}
In principle, the mixture of distribution can be prepared by combining of LCU technique and all of those distribution-preparing methods mentioned above.
Nevertheless, these methods turn out to be inefficient when preparing a mixture of numerous distributions:
Despite pre-computation procedure, the circuit has to be repeated $N_\theta$ times when preparing an $N_\theta$-mixture by LCU as depicted in Fig.~\ref{fig:compare}(a).

For the WDM method in this letter, since the circuit is static and can be reused, the circuit can be implemented only once with only $\mathcal{O}(d_\theta+\log{N_\theta})\log{(1/\epsilon\mathcal{F})}$  additional circuit depth as depicted in Fig.~\ref{fig:compare}(b).
Consequently, we argue that WDM enables an exponential time complexity reduction in terms of latent parameter space size (corresponding to the distribution numbers) over the aforementioned works on the task of preparing a distribution mixture.

\subsection{Works on statistics model learning task}
The statistics model learning is also of general interest, especially in the fields of machine learning and statistics.
The learning of distribution mixture based on quantum linear algebra is studied in Ref.~\citesm{hideyuki:2020quantum}.
Therein, they propose a powerful algorithm named quantum expectation-maximization to solve the GMM learning problem assuming quantum access to the GMM defined to be
\begin{definition*}[\textbf{Quantum access to GMM in Ref.~\citesm{hideyuki:2020quantum}}]\nonumber
   We say that we have quantum access to a GMM  of a dataset $V\in \mathbb{R}^{n\times d}$ and model parameters $\theta_j\in\mathbb{R},\mu_j\in\mathbb{R}^d,\Sigma\in\mathbb{R}^{d\times d}$ for all $j\in[k]$ if we can perform in $\mathcal{O}(polylog (d))$ the following maps:
   \begin{itemize}
       \item $\ket{j}\ket{0}\mapsto\ket{j}\ket{\mu_j}$,
       \item $\ket{j}\ket{i}\ket{0}\mapsto\ket{j}\ket{i}\ket{\sigma_i^j}$ for $i\in[d]$, where $\sigma_i^j$ is the $i$-the rows of $\Sigma_j\in\mathbb{R}^{d\times d}$,
       \item $\ket{i}\ket{0}\mapsto\ket{i}\ket{v_i}$ for all $i\in[n]$,
       \item $\ket{i}\ket{0}\ket{0}\mapsto\ket{i}\ket{vec[v_iv_i^T]}=\ket{i}\ket{v_i}\ket{v_i}$ for $i\in[n]$,
       \item $\ket{j}\ket{0}\mapsto\ket{j}\ket{\theta_j}$.
   \end{itemize}
   For instance, one may use a QRAM data structure.
\end{definition*}
In our work, we can learn the GMM directly without the assumption of the QRAM. 
Instead, SI-PQC can serve as the quantum access to GMM for the quantum expectation-maximization algorithm in Ref.~\citesm{hideyuki:2020quantum} after some modification (for example, a standard quantum amplitude estimation subroutine).
Furthermore, SI-PQC can also be utilized to implement more general statistics learning tasks, such as parametric tests, non-parametric tests, and semi-parametric tests.

On the other hand, those state preparation algorithms discussed above can not be utilized to solve learning problems:
While variational quantum circuits can learn the implicit distribution of data as discussed in Ref.~\citesm{zoufal:2019quantum},
the statistics model parameters can not be extracted directly since the gate parameters can not be converted into the model parameters. And there is no guarantee that the output state should be a GMM or other specific models as wanted.
In this sense, we consider SI-PQC as a problem-inspired candidate with better interpretability.


\section{Preparing Statistics Distribution via Maximum Entropy Distribution Loader}\label{sec:med-SI-PQC}

In this section, we address the aforementioned issue of preparing the quantum state of given statistic distributions formally defined as
\begin{definition}[\textbf{Quantum state preparation of a statistics distribution}]\label{def:distribution}
    Suppose a statistics distribution or its discretization with the probability density function (p.d.f.) to be $p(x)$ and $\sum_x p(x) = 1$.
    Then the quantum state preparation of this distribution is defined to construct the oracle
    \begin{equation}
        \mathcal{O}_p\ket{0} = \sum_x \vec{p}(x)\ket{x},
    \end{equation}
    wherein $\vec{p}(x)$ is the probability amplitude corresponding to the normalized p.d.f. as
    \begin{equation}
        \vec{p}(x) = \frac{p(x)}{\sqrt{\sum_k p(k)^2}}.
    \end{equation}
\end{definition}

\subsection{Maximum Entropy Principle}\label{sec:maximum_entropy}

Intuitively, the maximum entropy principle states that given prior knowledge of the system, such as the expectations of some observables, 
the best guess for the underlying distribution is the one with maximum entropy~\citesm{conrad:2004probability, park:2009maximum, banavar:2010applications}.
Formally, we have:
\begin{lemma}[\textbf{Maximum entropy principle for statistics distributions with observable expectation constraints}]\label{lem:med}
    
    Suppose $M$ constraints of the expectation of observables $\mathbb{E}[f_k(x)] = a_k (1\le k \le M)$, then the maximum entropy distribution admits a probability density function (p.d.f.)
    \begin{equation}\label{eq:pdf_me}
        p(x) = \frac{1}{\mathcal{N}}\exp{\sum_{k=1}^M \lambda_k f_k(x)},
    \end{equation}
    wherein $f_0 = 1$ is the constant function, $\mathcal{N}$ is the normalization constant and $\lambda_k$ are the Lagrange multipliers satisfying
    \begin{equation}
        \max_{\lambda}\{\sum_{k=1}^M\lambda_k a_k - \int \exp{\sum_{k=1}^M \lambda_k f_k(x)}\mathrm{d}x \}.
    \end{equation}
\end{lemma}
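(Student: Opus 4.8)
The plan is to treat the statement as a constrained variational problem, solve it with Lagrange multipliers to obtain the exponential-family form, and then upgrade the resulting stationary point to a global maximum by a Gibbs-type inequality. First I would set up the entropy functional $H[p] = -\int p(x)\ln p(x)\,\mathrm{d}x$ to be maximized over probability densities subject to the affine constraints $\int p(x) f_k(x)\,\mathrm{d}x = a_k$ for $0 \le k \le M$, with $f_0 \equiv 1$ and $a_0 = 1$ encoding normalization. Introducing multipliers $\lambda_k$ and forming $\mathcal{L}[p] = H[p] - \sum_{k=0}^{M}\lambda_k(\int p f_k\,\mathrm{d}x - a_k)$, the pointwise stationarity condition $\delta\mathcal{L}/\delta p(x) = 0$ reads $-\ln p(x) - 1 - \sum_{k=0}^{M}\lambda_k f_k(x) = 0$, so that $p(x) \propto \exp{\sum_{k=1}^{M}\lambda_k f_k(x)}$ once the $k=0$ term and the additive constant are absorbed into the normalization $\mathcal{N}$. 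This is the claimed form \eqref{eq:pdf_me}.

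Next I would show this stationary point is the unique maximizer. For any competing density $q$ satisfying the same moment constraints, non-negativity of the Kullback--Leibler divergence gives $0 \le D(q\,\|\,p) = -H[q] - \int q(x)\ln p(x)\,\mathrm{d}x = -H[q] + \ln\mathcal{N} - \sum_{k=1}^{M}\lambda_k a_k$, where the middle equality uses $\ln p(x) = -\ln\mathcal{N} + \sum_{k\ge 1}\lambda_k f_k(x)$ together with $\int q f_k = a_k$. Applying the same identity with $q = p$ yields $H[p] = \ln\mathcal{N} - \sum_{k\ge 1}\lambda_k a_k$, whence $D(q\,\|\,p) = H[p] - H[q] \ge 0$, with equality iff $q = p$ almost everywhere. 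Hence $p$ maximizes the entropy among all distributions meeting the constraints.

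Finally, to pin down the multipliers I would pass to the convex dual. With $Z(\lambda) = \int \exp{\sum_{k=1}^{M}\lambda_k f_k(x)}\,\mathrm{d}x$, the objective $g(\lambda) = \sum_{k=1}^{M}\lambda_k a_k - Z(\lambda)$ (equivalently its log-partition variant $\sum_k \lambda_k a_k - \ln Z(\lambda)$ after optimizing out the normalization multiplier) is concave because $Z$ is convex in $\lambda$; its stationarity condition $\partial g/\partial\lambda_k = a_k - \int f_k(x)\exp{\sum_j \lambda_j f_j(x)}\,\mathrm{d}x = 0$ is exactly the requirement that the exponential-family density reproduce the prescribed moments. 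Thus maximizing $g$ selects precisely the $\lambda_k$ making $p$ feasible, which is the stated dual problem, and concavity makes that choice unique when it exists.

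The main obstacle I expect is analytic rather than algebraic: one must assume the constraint set is non-empty (feasibility of the $a_k$), that $Z(\lambda)$ is finite on a non-degenerate domain so the exponential family is well defined, and that a dual maximizer actually exists --- the latter can fail for moment data only approached in a limit. I would carry these as standing regularity hypotheses, in line with the maximum-entropy literature, rather than attempt a fully measure-theoretic existence argument; for the elementary constraint families used here (monomials, logarithms on bounded intervals, and the like) integrability and feasibility are verified directly case by case.
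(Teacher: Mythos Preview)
Your argument is correct and its core step---forming the Lagrangian and setting the functional derivative to zero to obtain the exponential-family form---is exactly what the paper does. The paper's proof stops there: it writes down the Euler--Lagrange condition $\ln p(x)+1-\lambda_0-\sum_{k=1}^M\lambda_k f_k(x)=0$ and absorbs the constant into $\mathcal{N}$, without addressing global optimality or the dual characterization of the multipliers.

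Your proposal goes beyond the paper in two respects. First, you upgrade stationarity to a genuine maximum via the Gibbs/KL inequality, which the paper omits entirely; this is a real strengthening, since the functional-derivative step alone only gives a critical point. Second, you actually derive the $\max_\lambda\{\sum_k\lambda_k a_k-\int e^{\sum_k\lambda_k f_k}\}$ condition as the convex dual and identify its stationarity with the moment constraints, whereas the paper simply states this formula in the lemma without justifying it in the proof. Your explicit flagging of the regularity hypotheses (feasibility, finiteness of $Z$, existence of a dual maximizer) is also absent from the paper and is the honest way to handle the analytic side for the elementary constraint families used later.
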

\begin{proof}
    One can consider the functional derived from the Lagrange multiplier method as
    \begin{equation}
        \begin{split}
            F(p) = &\int p(x)\ln{p(x)}dx\\
            &- \lambda_0(\int p(x)dx-1)\\
            &- \sum_{k=1}^M\lambda_k(\int f_k(x)dx - a_k).
        \end{split}        
    \end{equation}
    Since $p(x)$ maximizes the entropy, the functional derivatives satisfy
    \begin{equation}
        \frac{\delta F}{\delta p} = \ln{p(x)} + 1 -\lambda_0 - \sum_{k=1}^M\lambda_kf_k(x).
    \end{equation}
    For convenience, one can denote $f_0=1$ and substitute $\lambda_0\leftarrow\lambda_0-1$ to finish the proof.
\end{proof}

\noindent From the quantum computation perspective, these coefficients can either be directly computed by the model parameters for common distributions (see also Ref.~\citesm{park:2009maximum}), or be derived from an optimization procedure as discussed later.

The generality of the maximum entropy principle is twofold:
Firstly, one noticed that the distribution of every statistic can be treated as a maximum entropy one under some constraints.
Secondly and even more importantly, the constraints are usually the expectation of some elementary function for most practical statistics distributions,
including Gaussian, Laplace, Pareto, Beta, Cauchy, Chi-squared, and log-normal distributions, to name but a few~\citesm{conrad:2004probability, park:2009maximum}.
These two propositions are formally stated as:
\begin{lemma}[\textbf{Generality of maximum entropy distribution}]\label{lem:gmed}
    
    \textbf{(a)} Any distribution is a maximum entropy distribution.
    \textbf{(b)} Most often-used distribution can be viewed as a maximum entropy distribution with a finite number of constraints taking the form of elementary function expectation (summarized in Table.~\ref{tab:medl}).
\end{lemma}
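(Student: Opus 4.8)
The plan is to prove the two parts separately: part \textbf{(a)} follows from a one-line information-theoretic identity, while part \textbf{(b)} is a family-by-family verification that assembles into Table~\ref{tab:medl}. For \textbf{(a)}, given an arbitrary distribution $p$ (of finite entropy, or its discretization as in Definition~\ref{def:distribution}), I would take the single constraint function $f_1 = \ln p$ with target value $a_1 = \mathbb{E}_p[\ln p] = -H(p)$. Then $p(x) = \exp(1\cdot f_1(x))$ is already in the form of Eq.~\eqref{eq:pdf_me} with $M=1$, $\lambda_1 = 1$, $\mathcal{N}=1$. To confirm it is the maximum-entropy distribution subject to $\mathbb{E}[f_1]=a_1$, I would use the constraint to substitute $\sum_x q(x)\ln p(x) = \sum_x p(x)\ln p(x)$ for any competitor $q$ meeting the same constraint, obtaining
\begin{equation}
H(p) - H(q) = \sum_x q(x)\ln\frac{q(x)}{p(x)} = D_{\mathrm{KL}}(q\,\|\,p) \ge 0,
\end{equation}
with equality iff $q=p$. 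This proves \textbf{(a)} and is consistent with Lemma~\ref{lem:med}: the Lagrange-multiplier characterization forces $\lambda_1 = 1$ because $p$ itself meets both the normalization and the expectation constraint, and strict concavity of the entropy makes the maximizer unique.

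For part \textbf{(b)}, I would proceed distribution by distribution: for each named family I write the p.d.f., take its logarithm, and read off the finitely many elementary functions $f_k(x)$ appearing in the exponent together with their parameter-dependent coefficients $\lambda_k$. For example, the Gaussian gives $\ln p(x) = \mathrm{const} + \tfrac{\mu}{\sigma^2}x - \tfrac{1}{2\sigma^2}x^2$, so $f_1(x)=x$, $f_2(x)=x^2$; the exponential gives $f_1(x)=x$; the Chi-squared and Gamma give $f_1(x)=\ln x$, $f_2(x)=x$; the Beta gives $f_1(x)=\ln x$, $f_2(x)=\ln(1-x)$; the log-normal gives $f_1(x)=\ln x$, $f_2(x)=(\ln x)^2$; the Rayleigh gives $f_1(x)=\ln x$, $f_2(x)=x^2$; the Laplace gives $f_1(x)=|x-\mu|$; the Pareto gives $f_1(x)=\ln x$; the Cauchy gives $f_1(x)=\ln(1+x^2)$. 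In every case $\ln p$ is a finite linear combination of elementary functions, so $p$ is of the maximum-entropy form of Eq.~\eqref{eq:pdf_me}, and Lemma~\ref{lem:med} read in reverse identifies it as the MED with constraints $\mathbb{E}[f_k(x)]=a_k$, the $a_k$ being the corresponding elementary moments. Collecting these cases yields Table~\ref{tab:medl}.

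The main obstacle is not conceptual but lies in two technical places. In part \textbf{(a)}, the measure-theoretic caveats: $\ln p$ need not be $p$-integrable, so a clean statement requires restricting to distributions of finite (differential) entropy or passing through the discretization of Definition~\ref{def:distribution}; I would state this hypothesis explicitly. In part \textbf{(b)}, the bookkeeping: bounded-support families (Beta on $[0,1]$, Pareto on $[x_m,\infty)$, and so on) must be handled by restricting the domain — equivalently, an indicator constraint — rather than by an extra smooth observable, and one must check that each $\lambda_k$ is a genuinely elementary function of the natural parameters so that the angle encoding of Eq.~\eqref{eq:lcc_rotation_angle} is well defined. Verifying this entry by entry for Table~\ref{tab:medl} is the bulk of the work.
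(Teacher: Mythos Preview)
Your proposal is correct and follows essentially the same approach as the paper: for part \textbf{(a)} you take the single constraint $f_1=\ln p$ with $a_1=\mathbb{E}_p[\ln p]$, and for part \textbf{(b)} you read off the elementary constraints from the log-densities family by family. The paper's proof is far terser---it merely states the constraint for \textbf{(a)} and says ``a direct computation works'' for \textbf{(b)}---so your added KL-divergence verification and technical caveats on integrability and support are more rigorous than what the paper provides, but the underlying idea is identical.
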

\begin{proof}
    \textbf{(a)} 
    Suppose a distribution with p.d.f. $p(x)$.
    One can construct a constraint that the expectation of the observable $\ln{p(x)}$ equals the constant $a = \int \ln{p(x)}p(x) dx$.
    \textbf{(b)} 
    A direct computation works, following the combination of Lemma~\ref{lem:gmed} and the constraints listed in Table.~\ref{tab:medl}.
\end{proof}

    \renewcommand\arraystretch{2}
\begin{table*}[htbp]    
\begin{threeparttable}
    \centering
    \caption{Summary of maximum entropy distribution, constraints, and preparation time complexity.}
    \label{tab:medl}
    \begin{tabular}{|l|c|c|c|}
    \hline
        Distribution family & Probability density function & Constraints & Total complexity \\
    \hline
        Normal &
        $\frac{1}{\sqrt{2\pi}\sigma}\exp\left( {-\frac{(x-\mu)^2}{2\sigma^2}} \right)$ & 
        $\mathbb{E}[x]=\mu$, $\mathbb{E}[(x-\mu)^2]=\sigma^2$ &
        $\mathcal{O}(\frac{n}{\sigma\sqrt{\sigma^2+\mu^2}}\log{\frac{1}{\sigma\sqrt{\sigma^2+\mu^2}\epsilon}})$\\
    \hline
        Exponential & 
        $\lambda \exp(-\lambda x)$, $\lambda>0, x\geq0$
        & $\mathbb{E}[x]=\frac{1}{\lambda}$
        & $\mathcal{O}(\frac{n}{\lambda^2}\log{(\lambda/\epsilon)})$\\
    \hline
        Pareto&
        $\frac{\alpha x_0^\alpha}{x^{\alpha+1}}$&
        $\mathbb{E}[\ln{x}]=\frac{1}{\alpha}+\ln{x_0}$&
        $\mathcal{O}\left(n\log^2(\frac{1}{\epsilon})\right)$\\
    \hline
        Rayleigh&
        $\frac{x}{\sigma^2}\exp\left( {-\frac{x^2}{2\sigma^2}} \right)$ on $[0, x_0]$&
        $\mathbb{E}[x^2]=2\sigma^2$, $\mathbb{E}[\ln{x}]=\frac{\ln(2\sigma^2)-\gamma_E}{2}$&
        $\mathcal{O}\left(\frac{n}{\mathcal{F}}\text{polylog}(\frac{1}{\mathcal{F}}, \frac{1}{\epsilon})\right), \mathcal{F} = \sqrt{\frac{e}{2x_0}\gamma(2, \frac{x_0^2}{\sigma^2})}$        
        \\
    \hline
        Chi&
        $\frac{2}{2^{k/2}\Gamma(k/2)}x^{k-1}\exp\left( {-\frac{x^2}{2}} \right)$&
        $\mathbb{E}[x^2]=k$, $\mathbb{E}[\ln{x}]=[\psi(\frac{k}{2})+\ln{2}]/2$&
        $\mathcal{O}\left(\frac{n}{\mathcal{F}}\text{polylog}(\frac{1}{\mathcal{F}}, \frac{1}{\epsilon})\right), \mathcal{F} 
            =\frac{\sqrt{k}\Gamma(\frac{k}{2})}{\sqrt{2}(\frac{k-1}{2e})^{\frac{k-1}{2}}}$
        \\
    \hline
        Chi-squared&
        $\frac{2}{2^{k/2}\Gamma(k/2)}x^{k/2-1}\exp\left( {-\frac{x}{2}} \right)$&
        $\mathbb{E}[x]=k$, $\mathbb{E}[\ln{x}]=\psi(\frac{k}{2})+\ln{2}$&
        $\mathcal{O}\left(\frac{n}{\mathcal{F}}\text{polylog}(\frac{1}{\mathcal{F}}, \frac{1}{\epsilon})\right), \mathcal{F} = \frac{2\sqrt{k(k+2)}\Gamma(\frac{k}{2})}{\sqrt{2}(\frac{k-1}{2e})^{k/2-1}}$
        \\
    \hline
    \end{tabular}
\end{threeparttable}
\end{table*}
\noindent Notice that the maximum entropy principle reveals an elegant and unified way to encode the statistics information.
There are no restrictions on the constraints' form $f_k$, while it is already quite general by considering the family of arithmetic and elementary functions that can be efficiently computed.
Consequently, we can develop a natural and efficient paradigm to inject statistics information into a PQC in three steps (as shown in Fig.~\ref{fig:medl}):

\subsection{Prepare distinct constraint functions}\label{sec:constraint}
Firstly, for each constraints $\mathbb{E}[f_k] = a_k$, one can prepare the corresponding oracle $O_k$ of function $f_k$:
\begin{equation}
    O_k: \ket{0} \rightarrow \frac{1}{{\lVert f_k\rVert}_2}\sum_{j=0}^{N-1}f_k(x_j)\ket{j},
\end{equation}
wherein ${\lVert f_k\rVert}_2 = \sqrt{\sum_{j=0}^{N-1}f_k(x_j)^2}$ is the ${l}_2$-norm of $f_k$.
To derive other distributions, one can implement the nonlinear transformation with the block-encoding language to characterize this procedure at the cost of $a$ ancillary qubits and an error bound $\epsilon$:
\begin{definition}[\textbf{Block-encoding, Definition 43 in Ref.~\citesm{gilyen:2018quantum}}]
     Suppose an $n$-qubit operator $A$, a scaling constant $\alpha\in\mathbb{R}_+$, an ancillary qubit number $a\in\mathbb{N}$, and an error bound $\epsilon$. The $(a+n)$-qubit unitary is defined to be an $(\alpha, a, \epsilon)$-block-encoding of $A$ if
    \begin{equation}
        \lVert A-\alpha(\bra{0}^{\otimes a}\otimes\mathcal{I}_n)U(\ket{0}^{\otimes a}\otimes\mathcal{I}_n)\rVert \leq \epsilon.
    \end{equation}
\end{definition}
\noindent One can apply the linear combination of unitaries (LCU) technique that was first introduced to simulate Hamiltonian in Ref.~\citesm{childs:2012hamiltonian} and then reformulated in Ref.~\citesm{gilyen:2018quantum},
and the quantum singular value transformation technique in Ref.~\citesm{gilyen:2018quantum} to implement the desired nonlinear transformation.
More specifically, we will use the following results (as depicted in Fig.~\ref{fig:real_polynomial_eigenvalue_transformation}):
\begin{lemma}[\textbf{Real polynomial eigenvalue transformation of arbitrary parity, reformulated from Theorem~56 in Ref.~\citesm{gilyen:2018quantum}}]\label{lem:rpet}
    
    Suppose an $(\alpha, a, \epsilon)$-block-encoding of a Hermitian matrix $A$, 
    a $d$-degree real polynomial $f: \mathbb{R}\rightarrow\mathbb{R}$ bounded on the interval $[-1, 1]$ with $\lVert P\rVert_{\mathrm{max}}\leq\frac{1}{2}$.
    Then there is a $(1, a+2, 4d\sqrt{\epsilon/\alpha}+\delta)$-block-encoding of $f(A/\alpha)$ with $d$ queries of $U$ and $U^\dagger$, one query of controlled-$U$ and $\mathcal{O}(ad)$ other one- and two-qubit gates.
\end{lemma}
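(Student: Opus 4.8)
The plan is to reduce the claim to the definite-parity quantum singular value / eigenvalue transformation --- which is precisely Theorem~56 of the reference cited in the statement --- and then to reassemble the pieces by a small linear-combination-of-unitaries (LCU) step while propagating the block-encoding error. First I would split $f$ into its even and odd parts,
\[
  f = \tfrac12\bigl(2f_{\mathrm{even}}\bigr) + \tfrac12\bigl(2f_{\mathrm{odd}}\bigr), \qquad f_{\mathrm{even}}(x) = \tfrac{f(x)+f(-x)}{2}, \quad f_{\mathrm{odd}}(x) = \tfrac{f(x)-f(-x)}{2}.
\]
Since $\lVert f\rVert_{\mathrm{max}}\le\tfrac12$ on $[-1,1]$, each of $2f_{\mathrm{even}}$ and $2f_{\mathrm{odd}}$ is a real polynomial of degree at most $d$, of definite parity, bounded by $1$ on $[-1,1]$; this is exactly the input format the definite-parity transformation accepts, and it explains why the hypothesis carries the factor $\tfrac12$ rather than $1$.

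Next I would apply the Hermitian eigenvalue-transformation form of Theorem~56 to the given $(\alpha,a,\epsilon)$-block-encoding $U$ of $A$, once for $2f_{\mathrm{even}}$ and once for $2f_{\mathrm{odd}}$. Each call produces a unitary that is (approximately) a block-encoding of $2f_{\mathrm{even}}(A/\alpha)$, resp.\ $2f_{\mathrm{odd}}(A/\alpha)$, built from interleaved applications of $U$, $U^\dagger$ and projector-controlled phase rotations $e^{i\phi(2\Pi-I)}$; the $\mathcal{O}(d)$ such rotations cost $\mathcal{O}(a)$ gates apiece, accounting for the $\mathcal{O}(ad)$ auxiliary gates, while the phase angles $\{\phi_j\}$ are precomputed classically from the target polynomial. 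Producing a \emph{real} polynomial (quantum signal processing natively yields a complex one) costs one extra ancilla and a Hadamard-test-type combination $\tfrac12(U_\Phi + U_{-\Phi})$, which is where the single controlled-$U$ query enters; combining the even and odd block-encodings with equal LCU weights $\tfrac12,\tfrac12$ --- legitimate because $\tfrac12\cdot1+\tfrac12\cdot1=1$ --- costs the second extra ancilla and recovers $f(A/\alpha)$ as a bona fide $(1,a+2,\cdot)$-block-encoding. Dividing the degree budget between the two parities keeps the total number of $U$ and $U^\dagger$ queries at $d$ rather than $2d$.

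The step that requires care, and which I expect to be the main obstacle, is the error bound. The hypothesis supplies only an $\epsilon$-accurate block-encoding of $A$, hence a $(\epsilon/\alpha)$-accurate block-encoding of $A/\alpha$; I would then invoke the robustness of the eigenvalue transformation --- a perturbation $\lVert\widetilde{A}-A/\alpha\rVert\le\epsilon/\alpha$ of the block-encoded operator propagates to at most $4d\sqrt{\epsilon/\alpha}$ in the degree-$d$ transformed operator, using that the polynomials in play are bounded by $1$ on $[-1,1]$. The real-part symmetrization and the even/odd LCU are both nonexpansive on this error (convex averages of operators approximating the same target), so the $4d\sqrt{\epsilon/\alpha}$ bound survives recombination; any further error from finite-precision phase factors and from the polynomial only approximating the ideal target is absorbed into the additive slack $\delta$. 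Collecting these estimates yields the stated $(1,a+2,4d\sqrt{\epsilon/\alpha}+\delta)$-block-encoding of $f(A/\alpha)$ with the advertised query and gate counts, which is the desired reformulation of Theorem~56.
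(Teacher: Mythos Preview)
The paper does not supply a proof of this lemma at all: it is stated as a direct reformulation of Theorem~56 in the cited QSVT reference and is immediately followed by the next lemma, with no proof environment in between. There is therefore nothing in the paper to compare your attempt against.

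That said, your sketch is an accurate reconstruction of the argument behind the cited theorem in the original source: the even/odd split (which is why the bound $\lVert P\rVert_{\mathrm{max}}\le\tfrac12$ is needed), the definite-parity signal-processing step with projector-controlled phases costing $\mathcal{O}(a)$ gates each, the Hadamard/LCU recombination supplying the two extra ancillas, and the robustness bound $4d\sqrt{\epsilon/\alpha}$ from the perturbation analysis of the alternating-phase sequence. One small point to be careful with is your claim that ``dividing the degree budget between the two parities'' keeps the total $U,U^\dagger$ queries at exactly $d$: the even part of a degree-$d$ polynomial can itself have degree $d$ (not $d/2$), so a naive implementation of both parities costs up to $2d-1$ queries. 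The original construction gets around this (and the lemma as stated here may be slightly loose on constants), but the mechanism is not the one you name; it would be worth tightening that sentence if you keep it.
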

\begin{lemma}[\textbf{Linear combination of block-encoded-matrices, Reformulated from Definition 51 and Lemma~52 in Ref.~\citesm{gilyen:2018quantum}}]\label{lem:lcbem}
    
    Let: (1) $y\in \mathbb{C}^m$, $\lVert y\rVert_1\le \beta$, and $(P_L, P_R)$ a $(\beta, b, \epsilon_1)$ state-preparation-pair of unitaries such that: $P_L\ket{0}_b=\sum_{j=0}^{2^b-1}c_j\ket{j}$, $P_R\ket{0}_b=\sum_{j=0}^{2^b-1}d_j\ket{j}$,
        $\sum_{j=0}^{m-1}|\beta(c_j^*d_j)-y|\le\epsilon_1$,
        and $c_j^*d_j=0$ for all $j\ge m$;
    (2)$A = \sum_{j=0}^{m-1}y_j A_j$ be an $s$-qubit operator, $W=\sum_{j=0}^{m-1}\ket{j}\bra{j}\otimes U_j + (I - \sum_{j=0}^{m-1}\ket{j}\bra{j}\otimes U_j)\otimes I_a\otimes I_s$ is an $s + a + b$ qubit unitary such that for all $j<m$ we have that $U_j$ is an ($(\alpha, a, \epsilon_2$)-block-encoding of $A_j$.
    Then we implement an $(\alpha\beta, a+b, \alpha\epsilon_1 + \alpha\beta\epsilon_2)$-block-encoding of $A$, with a single use of $W$, $P_R$, and $P_L^\dagger$.    
\end{lemma}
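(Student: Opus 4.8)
The plan is to write down the block-encoding circuit explicitly and then check its defining inequality by a two-term triangle-inequality estimate. I would take
\[
  \widetilde{U} \;:=\; \bigl(P_L^\dagger \otimes I_a \otimes I_s\bigr)\, W\, \bigl(P_R \otimes I_a \otimes I_s\bigr),
\]
acting on the $b$-qubit state-preparation register, the $a$-qubit block-encoding ancilla, and the $s$-qubit system register; it calls $P_R$, $W$, and $P_L^\dagger$ once each, which is already the gate-count claim. The first substantive step is to compute the top-left block $B := (\bra{0}_b\otimes\bra{0}_a\otimes I_s)\,\widetilde{U}\,(\ket{0}_b\otimes\ket{0}_a\otimes I_s)$. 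Substituting $P_R\ket{0}_b=\sum_k d_k\ket{k}$ and $\bra{0}_b P_L^\dagger=\sum_i c_i^*\bra{i}$, and using that $W$ is controlled by the first register and acts there as $\ket{j}\bra{j}\otimes U_j$ for $j<m$, the cross terms collapse to $B=\sum_j c_j^* d_j\,(\bra{0}_a\otimes I_s)\,U_j\,(\ket{0}_a\otimes I_s)$, where the terms with $j\ge m$ vanish since $c_j^* d_j=0$ there, so the remaining part of $W$ is irrelevant and the sum runs over $j<m$. Finally, the block-encoding hypothesis on each $U_j$ gives $\alpha\,(\bra{0}_a\otimes I_s)\,U_j\,(\ket{0}_a\otimes I_s)=A_j+E_j$ with $\lVert E_j\rVert\le\epsilon_2$, so $\alpha\beta B=\beta\sum_{j<m}c_j^*d_j A_j+\beta\sum_{j<m}c_j^*d_j E_j$.

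The second step is to bound the distance from $\alpha\beta B$ to the target $A=\sum_{j<m}y_j A_j$:
\[
  \bigl\lVert A-\alpha\beta B\bigr\rVert
  \;\le\; \sum_{j<m}\bigl\lvert y_j-\beta c_j^* d_j\bigr\rvert\,\lVert A_j\rVert
  \;+\; \beta\sum_{j<m}\bigl\lvert c_j^* d_j\bigr\rvert\,\lVert E_j\rVert .
\]
For the first sum I would use $\lVert A_j\rVert\le\alpha$ (up to the harmless $\epsilon_2$ slack coming from the component block-encoding) together with the state-preparation-pair hypothesis $\sum_j\lvert\beta c_j^* d_j-y_j\rvert\le\epsilon_1$, obtaining the bound $\alpha\epsilon_1$. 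For the second sum, Cauchy--Schwarz gives $\sum_j\lvert c_j^* d_j\rvert\le\lVert P_L\ket{0}\rVert\,\lVert P_R\ket{0}\rVert=1$ since $P_L$ and $P_R$ are unitary, and $\lVert E_j\rVert\le\epsilon_2\le\alpha\epsilon_2$ (block-encoding scales satisfy $\alpha\ge1$), obtaining the bound $\alpha\beta\epsilon_2$. Adding the two shows $\widetilde{U}$ is an $(\alpha\beta,a+b,\alpha\epsilon_1+\alpha\beta\epsilon_2)$-block-encoding of $A$, as claimed.

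I do not anticipate a real obstacle: this is the standard linear-combination-of-unitaries argument (Lemma~52 of Ref.~\citesm{gilyen:2018quantum}), and the only subtlety is to keep the two error sources --- imperfect synthesis of the amplitudes by $(P_L,P_R)$, controlled by $\epsilon_1$, and imperfect block-encodings of the $A_j$, controlled by $\epsilon_2$ --- cleanly separated in the triangle inequality. One should also verify that the part of $W$ supported on control states $\ket{j}$ with $j\ge m$ genuinely cannot contribute, which it cannot precisely because the state-preparation pair is built so that $c_j^* d_j=0$ there, and record that each $A_j$ only obeys $\lVert A_j\rVert\le\alpha+\epsilon_2$ rather than exactly $\alpha$ --- a slack already absorbed by the (slightly loose) stated error term.
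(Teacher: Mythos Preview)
The paper does not supply its own proof of this lemma; it is stated as a direct reformulation of Definition~51 and Lemma~52 in Gily\'en et~al., and your argument is exactly the standard LCU computation underlying that result. One small remark: in your final step you invoke $\alpha\ge1$ to inflate $\beta\epsilon_2$ up to the stated $\alpha\beta\epsilon_2$, but the block-encoding scale $\alpha$ need not satisfy this in general. A cleaner route is to split the triangle inequality the other way---first compare $A=\sum_j y_j A_j$ with $\sum_j y_j\tilde A_j$ (where $\tilde A_j=\alpha(\bra{0}_a\otimes I_s)U_j(\ket{0}_a\otimes I_s)$), costing $\lVert y\rVert_1\,\epsilon_2\le\beta\epsilon_2$, and then compare $\sum_j y_j\tilde A_j$ with $\beta\sum_j c_j^*d_j\tilde A_j$, costing $\alpha\epsilon_1$ since $\lVert\tilde A_j\rVert\le\alpha$ exactly. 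This yields the sharper bound $\alpha\epsilon_1+\beta\epsilon_2$ (which is what the original Lemma~52 states) with no side assumption; the $\alpha\beta\epsilon_2$ appearing in the paper's reformulation is simply a harmless loosening.
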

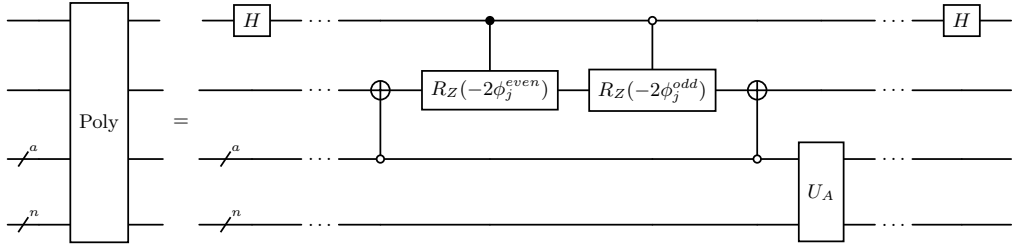
\begin{figure*}
\begin{adjustbox}{width=0.75\linewidth}
    \begin{quantikz}
        &
        &\gate[4]{\mathrm{Poly}}&\midstick[4,brackets=none]{=}&\gate{H}&\ \ldots\ &&\ctrl{1}&\ctrl[open]{1}&&&\ \ldots\ &\gate{H}&\\
        &&&&&\ \ldots\ &\targ{}&\gate{R_Z(-2\phi_j^{even})}&\gate{R_Z(-2\phi_j^{odd})}&\targ{}&&\ \ldots\ &&\\
        &\qwbundle{a}&&&\qwbundle{a}&\ \ldots\ &\ctrl[open]{-1}&&&\ctrl[open]{-1}&\gate[2]{U_A}&\ \ldots\ &&\\
        &\qwbundle{n}&&&\qwbundle{n}&\ \ldots\ &&&&&&\ \ldots\ &&\\
    \end{quantikz}
\end{adjustbox}
    \caption{Quantum circuit to implement real polynomial eigenvalue transformation via QSVT.
    }
    \label{fig:real_polynomial_eigenvalue_transformation}
\end{figure*}
Noticed that Lemma~\ref{lem:rpet} acts on the eigenvalues while a nonlinear transformation on the amplitudes is required, we need to construct a block-encoding of the linear function as
\begin{lemma}[\textbf{Block-encoding of linear function}]\label{lem:lft}
     There is a $(1, \log{n}, 0)$ block-encoding of the $2^n$-dimension diagonal matrix $A = diag(0, \frac{1}{2^n-1}, \frac{2}{2^n-1}, \frac{3}{2^n-1}, ..., 1)$ with $\mathcal{O}(n)$ circuit depth.
\end{lemma}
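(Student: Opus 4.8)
The plan is to realize $A$ as a linear combination of unitaries whose coefficient $\ell_1$-norm is exactly $1$, and then to invoke the linear-combination-of-block-encoded-matrices construction of Lemma~\ref{lem:lcbem}. First I would expand a computational-basis index $j\in\{0,\dots,2^n-1\}$ in binary, $j=\sum_{\ell=0}^{n-1}b_\ell(j)\,2^\ell$, and observe that $\frac{j}{2^n-1}=\sum_{\ell=0}^{n-1}w_\ell\,b_\ell(j)$ with $w_\ell=\frac{2^\ell}{2^n-1}$; since $\sum_{\ell=0}^{n-1}2^\ell=2^n-1$, the weights are nonnegative and $\sum_{\ell=0}^{n-1}w_\ell=1$. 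Writing $\Pi_\ell=\tfrac12(I-Z_\ell)$ for the projector onto ``bit $\ell$ is set'' of the $n$-qubit data register, this gives
\[
  A \;=\; \sum_{\ell=0}^{n-1} w_\ell\,\Pi_\ell \;=\; \tfrac12 I \;-\; \sum_{\ell=0}^{n-1}\tfrac{w_\ell}{2}\,Z_\ell, \qquad w_\ell=\frac{2^\ell}{2^n-1},
\]
exhibiting $A$ as a combination of the $n+1$ unitaries $\{I,Z_0,\dots,Z_{n-1}\}$ with coefficient vector $y=(\tfrac12,-\tfrac{w_0}{2},\dots,-\tfrac{w_{n-1}}{2})$ of $\ell_1$-norm $\tfrac12+\tfrac12\sum_\ell w_\ell=1$.

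Next I would feed this decomposition into Lemma~\ref{lem:lcbem}. Because each term is already unitary the per-term block-encodings need no ancilla ($a=0$), so the only ancillas are the $b=\lceil\log_2(n+1)\rceil=\mathcal{O}(\log n)$ address qubits of the state-preparation pair. I would take an exact pair $(P_L,P_R)$ preparing, from the all-zero state, $\tfrac{1}{\sqrt2}\ket{0}\pm\sum_{\ell=0}^{n-1}\sqrt{w_\ell/2}\,\ket{\ell+1}$ (sign split so that the amplitude products equal the entries of $y$), which forces $\beta=1$ and $\epsilon_1=0$; and the exact select unitary that applies $Z$ to data qubit $\ell$ conditioned on the address being $\ell+1$ and the identity on address $0$, which gives $\epsilon_2=0$. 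Lemma~\ref{lem:lcbem} then yields an $(\alpha\beta,\,a+b,\,\alpha\epsilon_1+\alpha\beta\epsilon_2)=(1,\,\mathcal{O}(\log n),\,0)$-block-encoding of $A$, as required.

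It remains to bound the depth. The pair $(P_L,P_R)$ consists of state preparations of fixed $(n+1)$-dimensional real amplitude vectors over $\mathcal{O}(\log n)$ qubits, which a standard multiplexed-rotation state-preparation circuit realizes in depth $\mathcal{O}(n)$; the select unitary is a standard multiplexor over $n$ single-qubit gates acting on disjoint wires, compilable to depth $\mathcal{O}(n)$ (e.g.\ by unary iteration with $\mathcal{O}(\log n)$ reusable scratch qubits). Summing gives total depth $\mathcal{O}(n)$. I expect the two points needing care to be (i) choosing the decomposition so that $\sum_\ell w_\ell$ is exactly $1$ rather than merely $\Theta(1)$, so that the block-encoding normalization is literally $\alpha=1$ and no amplitude amplification enters the depth, and (ii) implementing the select step in linear rather than $\mathcal{O}(n\log n)$ depth, which is the single nontrivial circuit-engineering ingredient and is where the ancilla-assisted multiplexor construction is used.
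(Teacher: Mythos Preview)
Your proposal is correct and follows essentially the same route as the paper: write $A$ as a linear combination of $I$ and the single-qubit $Z_\ell$'s and then apply the LCU block-encoding of Lemma~\ref{lem:lcbem} with a state-preparation pair on $\mathcal{O}(\log n)$ address ancillas and a select over $n$ controlled-$Z$ gates. Your normalization $w_\ell=2^\ell/(2^n-1)$ (so that $\lVert y\rVert_1=1$ exactly, giving $\alpha=1$) and your explicit attention to keeping the select in linear depth are, if anything, tidier than the paper's own write-up.
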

\begin{proof}
    One observes that
    \begin{equation}
        A = I - \sum_{j=1}^n \frac{2^{j-1}}{2^n}I^{\otimes(j-1)}\otimes Z\otimes I^{\otimes(n-j)}.
    \end{equation}
    Hence one implement a linear combination of $n$ numbers of $-Z$ gates controlled by the $\log{n}$ ancillary qubits for the state-preparation pair $P_L, P_R$ defined as
    \begin{equation}
            P_L\ket{0}_{\log{n}} = P_R\ket{0}_{\log{n}} = \sum_{j=0}^n\sqrt{\frac{2^{j-1}}{2^n}}\ket{j}.
    \end{equation}
    Given the sampling space size $2^n$, $P_L$ and $P_R$ can be implemented by a standard amplitude-encoding subroutine via uniformly controlled Y-rotation with circuit depth $\mathcal{O}(n)$.
    To summarize, the total circuit depth is still $\mathcal{O}(n)$.
\end{proof}
Combining QSVT and block-encoding of the linear function, we can improve and generalize the results in Refs.~\citesm{conde:2023efficient, guo:2021nonlinear, rattew:2023non} as:
\begin{lemma}[\textbf{Efficient block-encoding of non-linear functions, improved from Theorem~1 in Ref.~\citesm{conde:2023efficient}, Theorem~5 in Ref.~\citesm{guo:2021nonlinear}, and Theorem~5 in Ref.~\citesm{rattew:2023non}}]\label{lem:nta}

    Suppose that one is given access to a state preparation oracle $U_\psi$ of the $n$-qubits state $U_\psi\ket{0}=\ket{\psi}=\sum_{j=0}^{N-1}\psi_j\ket{j}$ and a function $f$, then the non-linear transformation of amplitudes $f(U_\psi)=\frac{1}{\mathcal{N}}\sum_{j=0}^{N-1}f(\psi_j)\ket{j}$ can be $\epsilon$-approximated with $\log{n}+2$ ancillary qubits, and controlled $U_\psi$ and $U_\psi^\dagger$ query complexity to be:
    \begin{itemize}
        \item $\mathcal{O}(d/\mathcal{F})$ if $f$ is a $d$-degree polynomial with filling rate $\mathcal{F} = \frac{\lVert f\rVert_2}{\lVert f\rVert_{\mathrm{max}}}$;
        \item $\mathcal{O}(\log{\frac{1}{\epsilon}})$ if $f=e^x$ or $\cos{x}$;
        \item $\mathcal{O}(n\log{\frac{1}{\epsilon}})$ if $f=\sin{x}$;
        \item $\mathcal{O}(\sigma\log{\frac{1}{\epsilon}})$ if $f=\frac{1}{\sigma\sqrt{2\pi}}e^{-\frac{1}{2\sigma^2}x^2}$ with $\sigma^2\ge1/2$;
        \item $\mathcal{O}({\log(1/\epsilon)\frac{\ln(1-c)}{\ln(1-c/2)}})$ if $f=\ln{(1+cx)}$ with $|c|<1$.
    \end{itemize}
    Furthermore, in the function preparation task, the unitary of interest is $U_L$ as defined in Lemma~\ref{lem:lft} with gate complexity $\mathcal{O}(n)$, and hence the total gate complexity needs another multiplier of $n$.
\end{lemma}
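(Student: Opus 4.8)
The plan is to realise the non-linear amplitude transformation $f(U_\psi)$ as a composition of three standard primitives and then to convert the resulting block-encoding into a normalised state by amplitude amplification. First I would use Lemma~\ref{lem:lft} to obtain an exact $(1,\log n,0)$-block-encoding $U_L$ of the diagonal matrix that carries the relevant (rescaled) amplitudes $x_j$ on its spectrum, at a cost of $\mathcal{O}(n)$ elementary gates per use; this is precisely the step that replaces the $\mathcal{O}(n)$-ancilla amplitude-block-encoding route of the prior constructions and accounts for the improved $\log n$ ancilla count. Next I would fix, for the target function $f$, a real polynomial $P$ of degree $d$ that is bounded by $\tfrac{1}{2}$ on all of $[-1,1]$ and that approximates the rescaled $f/(2\lVert f\rVert_{\mathrm{max}})$ to additive error $\epsilon'$ on the sub-interval where the amplitudes actually live, and feed $U_L$ and $P$ into Lemma~\ref{lem:rpet} (the QSVT circuit of Fig.~\ref{fig:real_polynomial_eigenvalue_transformation}); because the input block-encoding is exact, the QSVT error reduces to the tunable implementation error $\delta$, so the output is a $(1,\log n+2,\epsilon'+\delta)$-block-encoding of $\mathrm{diag}\!\big(P(x_j)\big)$ built from $d$ controlled calls to $U_L$. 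Finally, applying this block-encoding to the reference state and post-selecting the flag registers leaves the sub-normalised vector $\propto\sum_j f(x_j)\ket{j}$; with $\mathcal{F}=\lVert f\rVert_2/\lVert f\rVert_{\mathrm{max}}$ the filling rate governing the post-selection probability, $\mathcal{O}(1/\mathcal{F})$ rounds of (fixed-point) amplitude amplification output the normalised state $\tfrac{1}{\mathcal{N}}\sum_j f(x_j)\ket{j}$, which gives the $\mathcal{O}(d/\mathcal{F})$ query bound in the $d$-degree polynomial case.

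The case-by-case complexities then reduce to supplying the right polynomial $P$ and tracking how the approximation error propagates. For $f=e^x$ and $f=\cos x$, truncated Taylor or Jacobi--Anger expansions give $d=\mathcal{O}(\log(1/\epsilon))$ with exponentially small tails, and since neither function suppresses small amplitudes the relevant filling rate is $\Theta(1)$, so no $1/\mathcal{F}$ overhead survives; for $e^x$ one additionally splits into $\cosh x$ and $\sinh x$ to respect the definite-parity requirement of QSVT and recombines with a one-ancilla LCU. For $f=\sin x$ the same Jacobi--Anger degree applies to the polynomial itself, but because $\sin$ maps amplitudes of size $\sim 2^{-n/2}$ to equally small values the normalisation $\mathcal{N}$ stays $\Theta(1)$, so a per-entry additive error $\epsilon'$ inflates by $\sim 2^{n/2}$ in $\ell_2$; demanding final error $\epsilon$ then forces $\epsilon'\sim\epsilon\,2^{-n/2}$ and hence the degree $\mathcal{O}(n\log(1/\epsilon))$. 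The Gaussian with $\sigma^2\ge\tfrac{1}{2}$ and $\ln(1+cx)$ with $|c|<1$ are handled by the sharper approximation-theoretic estimates of the cited works: the Gaussian picks up its $\mathcal{O}(\sigma\log(1/\epsilon))$ degree from the $1/(\sigma\sqrt{2\pi})$ prefactor entering the normalisation, while the Mercator series for $\ln(1+cx)$ has its singularity at $x=-1/c$ and a geometric tail controlled by $|c|$, producing the degree $\mathcal{O}\!\big(\log(1/\epsilon)\,\tfrac{\ln(1-c)}{\ln(1-c/2)}\big)$. In every case the closing remark is immediate: each query to $U_L$ costs $\mathcal{O}(n)$ gates, so the gate complexity is the query complexity times $n$.

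The hard part is not the composition --- stacking Lemma~\ref{lem:lft}, Lemma~\ref{lem:rpet} and amplitude amplification is mechanical --- but exhibiting, for each $f$, a polynomial that is simultaneously (i) bounded by $\tfrac{1}{2}$ on all of $[-1,1]$, the QSVT admissibility window, (ii) $\epsilon'$-accurate on the amplitude range, and (iii) of the advertised degree, together with the bookkeeping that turns a uniform per-entry additive error into an $\ell_2$-error on the output state once the normalisation $\mathcal{N}$ and the amplification factor are accounted for. The delicate instances are $\ln(1+cx)$, whose singularity drifts toward the domain as $|c|\to 1$, and the Gaussian, where the $\sigma$-dependent prefactor and the $\sigma$-dependent effective width must be balanced against the constraint $\sigma^2\ge\tfrac{1}{2}$; these are exactly the places where the results of the cited references are invoked and, where needed, extended to the larger intervals and broader family of constraint functions used here.
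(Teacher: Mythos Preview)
Your three-step scaffold---Lemma~\ref{lem:lft} for the diagonal block-encoding, Lemma~\ref{lem:rpet} for the polynomial transform, amplitude amplification for the $1/\mathcal{F}$ overhead---is exactly the paper's route, and your accounting of the ancilla count and the closing $n$-multiplier matches. The divergence is in the case-by-case work. The paper treats all five functions uniformly: for each it writes down the truncated Taylor (or Mercator) series, bounds the tail by $1/2^k$ (or $c^k$), and then computes $\lVert f\rVert_{\max}$ and a lower bound on $\lVert f\rVert_2$ directly on $[-1,1]$ to read off $\mathcal{F}$; the Gaussian and logarithm bounds are done explicitly in-line rather than deferred to the cited references. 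Two of your detours are unnecessary or off-track: the $\cosh/\sinh$ parity split for $e^x$ is not needed because Lemma~\ref{lem:rpet} already handles polynomials of arbitrary parity, and your justification of the $n$-factor for $\sin x$ via $\ell_2$-inflation of per-entry error on $2^{-n/2}$-sized amplitudes is a different mechanism from what the paper does---the paper simply computes $\mathcal{F}$ on $[-1,1]$ as it does for the other cases (and in fact its own computation there yields $\mathcal{O}(\log(1/\epsilon))$, so the stated $n$-factor is not actually derived in the proof). Your approach buys a more careful state-level error story; the paper's buys uniformity and self-containment.
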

\begin{proof}    
    The proof consists of two steps:
    Firstly, one considers the case of a $d$-degree real-value polynomial $P(x): \mathbb{R}\rightarrow\mathbb{R}$, 
    and re-scale it to $\vec{P}(x) = P(x)/2{\lVert P(x)\rVert}_{\mathrm{max}}$ to satisfy the condition of Lemma~\ref{lem:rpet}.
    In our situation, the error bound is set to be $\delta=\frac{\epsilon}{2^{n+1}{\lVert P(x)\rVert}_{\mathrm{max}}}$ and one derives the quantum singular value transformation unitary $U_P$ so that
    \begin{equation}
        \begin{split}
            &\left\lVert P(x_k) - 2{\lVert P(x)\rVert}_{\mathrm{max}}U_P(x_k) \right\rVert\\
            =&2{\lVert P(x)\rVert}_{\mathrm{max}}\lVert \vec{P}(x_k)-U_p(x_k) \rVert\\
            \leq&2{\lVert P(x)\rVert}_{\mathrm{max}}\delta=\frac{\epsilon}{2^n{\lVert P(x)\rVert}_{\mathrm{max}}}.
        \end{split}
    \end{equation}
    One then applies $U_p$ to the linear block-encoding $A$ as defined in Lemma~\ref{lem:lft}.
    Due to Lemma~\ref{lem:rpet}, $\vec{G}$ is queried $\mathcal{O}(d)$ times,
    and each query possesses a gate complexity $\mathcal{O}(\frac{{\lVert P(x)\rVert}_{\mathrm{max}}}{{\lVert P(x)\rVert}_{2}})$.
    The total complexity is $\mathcal{O}(d/\mathcal{F})$ as claimed.
    
    Secondly, one considers the general case of the nonlinear function. 
    Notice that any continuous function $f(x)$ defined on a closed interval can be arbitrarily approximated by a polynomial $P_f(x)$ such that $\lVert f(x)-P_f(x)\rVert\leq \epsilon$.
    One can choose the polynomial $P_f$ to be a $\frac{\epsilon}{2}$-approximation of $f$, and set $\epsilon=\epsilon/2$ in the first step.
    More specific results can be derived as in Ref.~\citesm{rattew:2023non}:
    
    \noindent\textbf{(a)} For $f=e^x$, consider the $k$-degree polynomial $P_f^{(k)} = \sum_{j=0}^k\frac{1}{j!}x^j$ with error bound $1/2^k$.
    Since $e^x$ is uniformly bounded by $1/e$ and $e$ on $[-1, 1]$, the query complexity $\mathcal{O}(d/\mathcal{F})$ is $\mathcal{O}(\log{(1/\epsilon)}/e^2)$, i.e., $\mathcal{O}(\log{(1/\epsilon)})$.
    
    \noindent\textbf{(b)} For $f=\cos{x}$, consider the $k$-degree polynomial $P_f^{(k)} = \sum_{j=0}^k (-1)^j\frac{1}{(2j)!}x^{2j}$ with error bound $1/2^k$.
    Since $\cos{x}$ is uniformly bounded by $\cos{1}$ and $1$ on $[-1, 1]$, the query complexity $\mathcal{O}(d/\mathcal{F})$ is $\mathcal{O}(\log{(1/\epsilon)}/\cos{1})$, i.e., $\mathcal{O}(\log{(1/\epsilon)})$.
    
    \noindent\textbf{(c)} For $f=\sin{x}$, consider the $k$-degree polynomial $P_f^{(k)} = \sum_{j=0}^k (-1)^j\frac{1}{(2j+1)!}x^{2j+1}$ with error bound $1/2^k$.
    Since $\lVert\sin{x}\rVert_{\mathrm{max}} = \sin{1}$ and $\lVert\cos{x}\rVert_{2}$ is lower bounded by $\sqrt{\frac{\int_{-1}^1 \sin{x}^2 dx}{2}}$ on $[-1, 1]$, the query complexity $\mathcal{O}(d/\mathcal{F})$ is still $\mathcal{O}(\log{(1/\epsilon)})$.
    
    \noindent\textbf{(d)} For $f=\frac{1}{\sigma\sqrt{2\pi}}e^{-\frac{1}{2\sigma^2}x^2}$, consider the $k$-degree polynomial $P_f^{(k)} = \frac{1}{\sigma\sqrt{2\pi}}\sum_{j=0}^k\frac{(-\frac{1}{2\sigma^2})^j}{j!}x^{2j}$ with error bound $1/\pi2^k$.
    Since $\lVert f(x)\rVert_{\mathrm{max}} = 1$ and $\lVert f(x)\rVert_{2}$ is lower bounded by $\frac{1}{\sigma\sqrt{2\pi}}$ on $[-1, 1]$, the query complexity $\mathcal{O}(d/\mathcal{F})$ is $\mathcal{O}(\sigma\log{(1/\epsilon)})$.

    \noindent\textbf{(e)} For $f=\ln(1+cx)$ with $0<c<1$, consider the $k$-degree polynomial $P_f^{(k)} = \sum_{j=1}^k\frac{(-1)^{j+1}(cx)^j}{j}$ with error bounded by $1/c^k$.
    Since $\lVert f(x)\rVert_{\mathrm{max}} = \ln{\frac{1}{1-c}}$ and
    \begin{align*}
        \lVert f(x)\rVert_{2}=&\sqrt{\frac{\int_{-1}^1 \ln^2{(1+cx)} dx}{2}}
        =\sqrt{\frac{1}{2c}\int_{1-c}^{1+c}\ln^2udu}\notag\\
        \geq&\sqrt{\frac{1}{2c}\int_{1-c}^{1-c/2}\ln^2udu}
        \geq-\frac{1}{2}\ln(1-c/2),
    \end{align*}
    the query complexity $\mathcal{O}(d/\mathcal{F})$ is bounded by $\mathcal{O}({\log(1/\epsilon)\frac{\ln(1-c)}{\ln(1-c/2)}})$.
\end{proof}
\noindent\textbf{Remark.} More detailed, to make the function implementable by QSVT, one indeed prepare
\begin{equation}
    \vec{f_k} = \frac{f_k}{2{\lVert f_k\rVert}_{\mathrm{max}}}
\end{equation}
with query complexity $\mathcal{O}(d_k\frac{{\lVert f_k\rVert}_{\mathrm{max}}}{{\lVert f_k\rVert}_2})$.
It should be emphasized that for standard elementary functions such as constant, power, trigonometric, exponential, and logarithms functions,
the corresponding subroutine can be viewed as a static-structured subcircuit (depicted in the blue boxes in Fig.~\ref{fig:medl} (b)).
Furthermore, any in-depth circuit simplification or compilation can be introduced and cured to improve the practical execution performance.

At the end of this subsection, we give a tiny result that will be used in the following sections to enable an end-to-end error analysis.
\begin{lemma}[\textbf{QSVT error bound with imperfect rotation angles}]\label{lem:qsvt_rotation_error}
     Suppose a QSVT with $d$ rotation angle phases, wherein each phase is imperfect with an error bounded by $\Delta\theta$. Then the total error is bounded by $d\Delta\theta$.
\end{lemma}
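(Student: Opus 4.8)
The plan is to exploit the rigidly structured form of the QSVT circuit. By construction (Lemma~\ref{lem:rpet} and Fig.~\ref{fig:real_polynomial_eigenvalue_transformation}) the ideal circuit is an alternating product $U = A_d\,V(\phi_d)\,A_{d-1}\,V(\phi_{d-1})\cdots A_1\,V(\phi_1)\,A_0$, where the $A_j$ are \emph{fixed} unitaries (the block-encoding $U_A$, its adjoint, and the controlled reflections) and $V(\phi_j) = e^{i\phi_j(2\Pi - I)}$ is the only $\phi_j$-dependent gate, with $\Pi$ a fixed projector. The realized circuit $\tilde U$ is the same product with each $\phi_j$ replaced by $\tilde\phi_j$, $|\phi_j - \tilde\phi_j| \le \Delta\theta$. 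First I would record the elementary Lipschitz estimate $\lVert V(\phi) - V(\tilde\phi)\rVert \le |\phi - \tilde\phi|$: since $2\Pi - I$ is a reflection its spectrum lies in $\{\pm1\}$, so in its eigenbasis $V(\phi) - V(\tilde\phi)$ is diagonal with entries $e^{\pm i\phi} - e^{\pm i\tilde\phi}$, each of modulus at most $|\phi-\tilde\phi| \le \Delta\theta$.

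Next I would run the standard hybrid (telescoping) argument. Group the circuit as $U = G_d\cdots G_1$ where $G_\ell$ contains exactly the one phase gate $V(\phi_\ell)$ together with the fixed unitary preceding it, and define the intermediate unitaries $W_\ell = \tilde G_d\cdots\tilde G_{\ell+1}\,G_\ell\cdots G_1$, so that $W_0 = U$ and $W_d = \tilde U$. Because $\tilde G_\ell - G_\ell$ is a fixed unitary times $\bigl(V(\tilde\phi_\ell) - V(\phi_\ell)\bigr)$, and the operator norm is unitarily invariant and submultiplicative,
\begin{equation}
    \lVert W_\ell - W_{\ell-1}\rVert = \lVert \tilde G_\ell - G_\ell \rVert = \lVert V(\tilde\phi_\ell) - V(\phi_\ell)\rVert \le \Delta\theta .
\end{equation}
The triangle inequality then yields $\lVert \tilde U - U\rVert \le \sum_{\ell=1}^d \lVert W_\ell - W_{\ell-1}\rVert \le d\Delta\theta$.

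Finally I would descend from the whole circuit to the encoded operator: the QSVT output block-encodes $f(A)$ via $f(A) = (\bra{0}^{\otimes a'}\otimes\mathcal{I})\,U\,(\ket{0}^{\otimes a'}\otimes\mathcal{I})$ with scaling constant $1$, and compressing both $U$ and $\tilde U$ by the same isometries $\ket{0}^{\otimes a'}\otimes\mathcal{I}$ can only decrease operator norm, so the error on the block-encoded operator is also at most $d\Delta\theta$. Two bookkeeping remarks would close it out: if one nominal phase is realized by several elementary gates (e.g.\ the even/odd split in the arbitrary-parity construction), each occurrence is counted separately toward $d$; and this $d\Delta\theta$ bound adds to the approximation error $\delta$ of Lemma~\ref{lem:rpet}, which is exactly what the downstream end-to-end analysis uses. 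I do not expect a genuine obstacle here — the only points needing care are stating precisely what "total error" means (operator-norm distance of the block-encoded operators) and pinning down the single-gate Lipschitz constant; the rest is the routine hybrid argument.
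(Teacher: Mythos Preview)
Your proposal is correct and follows essentially the same route as the paper's own proof: write the QSVT circuit as an alternating product of fixed unitaries and single-parameter phase gates, then telescope (hybrid argument) over the $d$ phases, bounding each term by the single-gate Lipschitz constant $\Delta\theta$. The paper's proof is a terse two-line sketch of exactly this telescoping (it even has a slight typo in the intermediate terms), whereas you supply the details it omits---the explicit bound $\lVert e^{i\phi(2\Pi-I)} - e^{i\tilde\phi(2\Pi-I)}\rVert \le |\phi-\tilde\phi|$ and the compression from the full unitary to the block-encoded operator---so your version is strictly more complete.
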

\begin{proof}
    In brief, a QSVT circuit can be written in the form of
    \begin{equation}
        U(\theta_1, \theta_2, ..., \theta_d) = U_d Z(\theta_d) U_{d-1} Z(\theta_{d-1}) ... U_1 Z(\theta_1) U_0.
    \end{equation}
    Then the error of the imperfect implementation can be bounded as:
    \begin{align}
        &|U(\theta_1, \theta_2, ..., \theta_d) - U(\Tilde{\theta_1}, \Tilde{\theta_2}, ..., \Tilde{\theta_d})|\\
        \leq&|U(\theta_1, \theta_2, ..., \theta_d) - U(\Tilde{\theta_1}, {\theta_2}, ..., {\theta_d})|\nonumber\\
        &+|U(\theta_1, \theta_2, ..., \theta_d) - U({\theta_1}, \Tilde{\theta_2}, ..., {\theta_d})|\nonumber\\
        &+|U(\theta_1, \theta_2, ..., \theta_d) - U({\theta_1}, {\theta_2}, ..., \Tilde{\theta_d})|\\
        \leq&d\Delta\theta.
    \end{align}
\end{proof}

\subsection{Implement parameterized combination of constraints}\label{sec:lcc}
One can implement the parameterized combination of constraints via LCU.
According to Lemma~\ref{lem:lcbem}, one needs an efficient construction of state-preparation-pair for the constraint coefficients.
Furthermore, to exposure these coefficients, one has a trade-off to encode them sparsely with $M$ qubits,
instead of a dense encoding of $\lceil{\log_2{M}}\rceil$ qubits,
and derive the following explicit construction of state-preparation-pair unitaries.
\begin{lemma}[\textbf{Efficient construction of sparse state-preparation-pair}]\label{lem:spp}
     Let $y \in \mathbb{R}^M$ be the $M$ coefficients, then there is a (${\lVert y\rVert}_1, M-1, \epsilon_\mathrm{spp}$) state-preparation pair of unitaries $(P_L, P_R)$, wherein $\epsilon_\mathrm{spp}$ is the corresponding error from implement an arbitrary angle rotation.
\end{lemma}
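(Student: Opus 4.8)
The plan is to give an explicit constructive proof: exhibit $P_L$ and $P_R$ as a short cascade of (at most doubly) controlled single‑qubit $Y$‑rotations on an $(M-1)$‑qubit register using a ``thermometer''/one‑hot encoding of the $M$ branch labels, and then verify that this pair meets the defining requirements of a $(\lVert y\rVert_1, M-1, \epsilon_{\mathrm{spp}})$ state‑preparation pair in the sense of Lemma~\ref{lem:lcbem}, with $\epsilon_{\mathrm{spp}}$ accounting only for the finite precision of the rotation gates. First I would fix the encoding: label constraint $k$ by the computational basis state $\ket{e_k}$ of the $(M-1)$‑qubit register in which qubits $1,\dots,k-1$ are $\ket{1}$, qubit $k$ is $\ket{0}$ (for $k\le M-1$), and $\ket{e_M}=\ket{1}^{\otimes(M-1)}$. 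These $M$ strings are mutually exclusive, so the LCU select operator needs only the $1$‑ and $2$‑controlled gates for $U_{f_k}$ already drawn in Fig.~\ref{fig:medl}(a).

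Next I would build $P_R$ as the product that rotates qubit $1$ by $\theta_1$ and, for $k\ge 2$, rotates qubit $k$ by $\theta_k$ conditioned on qubit $k-1$ being $\ket{1}$. A direct induction on $k$ shows the amplitude on $\ket{e_k}$ equals $\cos(\theta_k/2)\prod_{j<k}\sin(\theta_j/2)$, so that choosing the angles with $\tan^2(\theta_k/2)=\bigl(\sum_{j>k}y_j\bigr)/y_k$ — which is exactly Eq.~\eqref{eq:lcc_rotation_angle} with $y_j=2\lambda_j\lVert f_j\rVert_{\mathrm{max}}$ — produces, after the telescoping cancellation of the tail sums, squared amplitude $|y_k|/\lVert y\rVert_1$ on $\ket{e_k}$ for every $k$. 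I would then set $P_L=P_R$ composed with the diagonal sign correction $\mathrm{diag}(\mathrm{sgn}(y_k))$ restricted to the span of the $\ket{e_k}$ (equivalently, split $y$ into its positive and negative parts). Writing $c_k,d_k$ for the amplitudes of $P_L\ket{0},P_R\ket{0}$ on $\ket{e_k}$, one gets $c_k^\ast d_k=\mathrm{sgn}(y_k)\,|y_k|/\lVert y\rVert_1=y_k/\lVert y\rVert_1$ and $c_k^\ast d_k=0$ on the $2^{M-1}-M$ unused strings, hence $\beta c_k^\ast d_k=y_k$ exactly with $\beta=\lVert y\rVert_1$, confirming the state‑preparation‑pair conditions.

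Finally I would carry out the error analysis. The only approximation is that each $R_Y(\theta_k)$ is synthesized with angle error at most $\Delta\theta$; by a telescoping argument identical in spirit to Lemma~\ref{lem:qsvt_rotation_error}, the prepared state deviates from the ideal one by $\mathcal{O}(M\Delta\theta)$, which feeds into $\sum_k\lvert\beta c_k^\ast d_k-y_k\rvert\le\epsilon_{\mathrm{spp}}=\mathcal{O}(\lVert y\rVert_1 M\Delta\theta)$ — this is precisely the $\epsilon_{\mathrm{spp}}$ named in the statement. The resource count is $M-1$ ancillas and $\mathcal{O}(M)$ one‑ and two‑qubit gates, and the circuit depends on the $\lambda_k$ only through the $M$ rotation angles.

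I expect the main obstacle to be the bookkeeping around signs and normalization rather than anything conceptually deep: one must keep the ancilla count at $M-1$ (not $M$) while still accommodating negative Lagrange multipliers, check that the angle recursion is well defined whenever the tail sums $\sum_{j>k}y_j$ do not vanish (and pin down the degenerate convention when they do), and make sure the $2\lVert f_k\rVert_{\mathrm{max}}$ rescaling introduced when the constraint oracles $\vec{f}_k$ were normalized is exactly cancelled, so that the LCU output is a block‑encoding of $f=\sum_k\lambda_k f_k$ and not a rescaled copy of it.
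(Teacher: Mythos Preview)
Your construction is correct and is essentially the paper's own argument: a depth-$\mathcal{O}(M)$ ladder of controlled $Y$-rotations on $M-1$ ancillas whose telescoping sine--cosine products put amplitude $\sqrt{|y_k|/\lVert y\rVert_1}$ on the $k$-th designated basis state, with exactly the angle recursion $\tan^2(\theta_k/2)=\sum_{j>k}|y_j|/|y_k|$ you wrote. The only cosmetic differences are that the paper appends a CNOT after each controlled rotation to convert your thermometer labels $\ket{1^{k-1}0\cdots}$ into a one-hot encoding (so the SELECT in Fig.~\ref{fig:medl}(a) can use singly-controlled $U_{f_k}$'s), and it handles negative $y_k$ by swapping the relevant $R_Y(\theta)$ for $R_X(\pm\theta)$ in $P_L,P_R$ to make the offending amplitude purely imaginary, rather than your post-hoc diagonal sign correction---both routes yield $c_k^\ast d_k=y_k/\lVert y\rVert_1$.
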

\begin{proof}
    The proof for positive $y\in\mathbb{R_+}^M$ is direct: observe that the $M$-qubit operator $P(\vec{\theta})$ (illustrated in Fig.~\ref{fig:state_preparation_pair}) that consists of a ladder of rotation and controlled-rotation gates
    \begin{multline}
        P(\vec{\theta})=C_{M-1}X_{M-2}\circ C_{M-2}Y(\theta_{M-1})_{M-1}\circ \hdots\circ \\
        C_1X_0\circ C_{0}Y(\theta_{1})_{1}\circ Y(\theta_{0})_{0}
    \end{multline}
    can generate a sparse superposition of $M$ bases $\{|{\underbrace{00...0}_{M}}\rangle, |{1\underbrace{0...0}_{M-1}}\rangle, |{01\underbrace{0...0}_{M-2}}\rangle, ..., |{\underbrace{00...0}_{M-1}1}\rangle\}$ as
    \begin{equation}
        \begin{split}
            P(\vec{\theta})\ket{0}_{M}
            &=\cos{\frac{\theta_0}{2}}|{\underbrace{00...0}_{M}}\rangle
            +\sin{\frac{\theta_0}{2}}\cos{\frac{\theta_1}{2}}|{1\underbrace{0...0}_{M-1}}\rangle\\
            +&...+\sin{\frac{\theta_0}{2}}\sin{\frac{\theta_1}{2}}\hdots\sin{\frac{\theta_{M-1}}{2}}|{\underbrace{00...0}_{M-2}1}\rangle.
        \end{split}        
    \end{equation}
    Then the state-preparation-pair can be constructed by setting the rotation angles as:
    \begin{equation}
        \begin{split}
            &P_L = P_R = P(\theta)\\
            &\theta_j = 2\arctan{\sqrt{
            \frac{\sum_{k=j+1}^{M-1}|y_k|}{|y_j|}}} (0\leq j\leq M-2).
        \end{split}
    \end{equation}    
    The verification is direct:
    \begin{align}
        P_L\ket{0} =& P_R\ket{0} =\sum_{j=0}^{M-1} \frac{\sqrt{y_j}}{{\sqrt{\lVert y\rVert}_1}}|{\underbrace{0...0}_{M-1-j}}1{\underbrace{0...0}_{j-1}}\rangle,\\
        {\lVert y\rVert}_1&(\frac{\sqrt{y_j^*}}{{\sqrt{\lVert y\rVert}_1}}\frac{\sqrt{y_j}}{{\sqrt{\lVert y\rVert}_1}}) - y_j = 0.
    \end{align}
    The case with negative $y_j$ is more tricky:
    If $y_{j-1}$ and $y_j$ have opposite signs, then one can implement X rotations with angle $\theta_j$ in $P_L$ and with angle $-\theta_j$ in $P_R$ instead of Y rotation to derive pure imaginary amplitudes leading to a negative coefficient.
    If both $y_{j-1}$ and $y_j$ are negative, then one just implements the Y rotation gate to maintain the pure imaginary amplitudes.
\end{proof}
\begin{figure}
    \centering
    \begin{adjustbox}{width=0.6\linewidth}
        \begin{quantikz}
            &\gate{RY(\theta_0)}&\ctrl{1}&\targ{}&&&\ \ldots\ &&\\
            &&\gate{RY(\theta_1)}&\ctrl{-1}&\ctrl{1}&\targ{}&\ \ldots\ &&\\
            &&&&\gate{RY(\theta_2)}&\ctrl{-1}&\ \ldots\ &&\\[0.5cm]
            &&&&\ \ldots\ &\lstick[1,brackets=none,label style={yshift=1cm, xshift=-0.5cm,color=black, rotate=135}]{\ldots}&\ctrl{1}&\targ{}&\\
            &&&&\ \ldots\ &&\gate{RY(\theta_{M-1})}&\ctrl{-1}&\\
        \end{quantikz}
    \end{adjustbox}
    \caption{State-preparation-pair for sparse linear combination of constraints.}
    \label{fig:state_preparation_pair}
\end{figure}
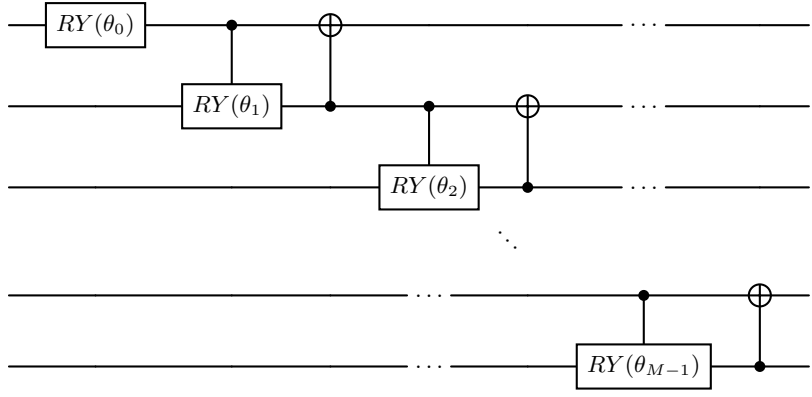

\noindent Armed with these results, one has the following parameterized linear combination of constraints:
\begin{lemma}[\textbf{Sparse linear combination of constraints}]\label{lem:lcc}

    Suppose a linear combination of constraints
    \begin{equation}
        f = \sum_k \lambda_k f_k = \sum_k 2\lambda_k{\lVert f_k\rVert}_{\mathrm{max}}\vec{f_k},
    \end{equation}
    wherein each re-scaled constraint $\vec{f_k}$ can be efficiently prepared via some oracle $O_{f_k}$, a $(1, \log{n}+2, \epsilon_k)$ block-encoding $U_k$ as constructed in Lemma~\ref{lem:nta}.
    Then $f$ can be implemented as
   \begin{equation}
    U_{\mathrm{LCC}} = \sum_{k=0}^M\lambda_k U_k: \ket{0}\rightarrow \frac{1}{\mathcal{N}}\sum_{j=0}^{N-1}(\sum_{k=0}^M\lambda_k f_k(x_j))\ket{j}.
    \end{equation}
    The oracles $O_{f_k}$ are totally queried $M$ times (each $U_k$ is called once) with totally $\mathcal{O}(M\log{n})$ ancillary qubits.
    The total error is $\epsilon_\mathrm{LCC} = 2\lambda\max_k{\epsilon_k}$,
    and the success probability is
    \begin{equation}
        \mathcal{P}_\mathrm{LCC} = \frac{{\lVert f\rVert}_{\mathrm{max}}^2}{4\lambda^2} = \frac{1}{4\vec{\lambda}^2},
    \end{equation}
    where
    \begin{equation}\label{eq:error_sacling_factor}
    \begin{split}
        \lambda =& \sum_{k=1}^{M}|\lambda_k|{\lVert f_k\rVert}_{\mathrm{max}}\text{ and}\\
        \vec{\lambda} =& \sum_{k=1}^{M}|\lambda_k|{\lVert f_k\rVert}_{\mathrm{max}}/{\lVert f\rVert}_{2}
    \end{split}
    \end{equation}
    are the absolute and relative scaling factors, respectively.
\end{lemma}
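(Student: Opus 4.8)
The plan is to realise $U_{\mathrm{LCC}}$ as a linear combination of the individual constraint block-encodings and then read off the stated resource, error and probability bounds from Lemma~\ref{lem:lcbem} together with Lemma~\ref{lem:spp}. First I would make the coefficient structure explicit: writing $y_k := 2\lambda_k\lVert f_k\rVert_{\mathrm{max}}$ turns the target into $f=\sum_{k}y_k\vec{f_k}$, a genuine weighted sum of the re-scaled constraints $\vec{f_k}=f_k/2\lVert f_k\rVert_{\mathrm{max}}$, each of which is available as a $(1,\log n+2,\epsilon_k)$-block-encoding $U_k$ (produced by Lemma~\ref{lem:nta} via QSVT on the linear block-encoding of Lemma~\ref{lem:lft}). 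The relevant $\ell_1$ weight is $\beta:=\lVert y\rVert_1=\sum_k 2\lvert\lambda_k\rvert\lVert f_k\rVert_{\mathrm{max}}=2\lambda$.

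Next I would feed $y$ into Lemma~\ref{lem:spp} to obtain a $(\beta,M-1,\epsilon_{\mathrm{spp}})$ state-preparation pair $(P_L,P_R)$, realised by the ladder of (controlled-)rotations of Fig.~\ref{fig:state_preparation_pair} with angles $\theta_j=2\arctan\sqrt{\sum_{k>j}\lvert y_k\rvert/\lvert y_j\rvert}$, which coincides with Eq.~\eqref{eq:lcc_rotation_angle} after substituting $y_k=2\lambda_k\lVert f_k\rVert_{\mathrm{max}}$, and with the $X$-rotation sign trick handling the constraints with $\lambda_k<0$. Plugging $(P_L,P_R)$ and the select unitary over the $U_k$ into Lemma~\ref{lem:lcbem} with $\alpha=1$, $b=M-1$, $a=\log n+2$, $\epsilon_1=\epsilon_{\mathrm{spp}}$, $\epsilon_2=\max_k\epsilon_k$ then yields a $(\beta,\ \log n+M+1,\ \epsilon_{\mathrm{spp}}+\beta\max_k\epsilon_k)$-block-encoding of $f$. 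The select step calls each $U_k$ exactly once, i.e.\ $M$ oracle queries in all, and the ancilla count is the $\log n+2$ block-encoding qubits plus the $M-1$ of the state-preparation pair, i.e.\ $\mathcal{O}(M\log n)$; discarding the rotation error $\epsilon_{\mathrm{spp}}$, which by Lemma~\ref{lem:qsvt_rotation_error} can be made negligible, leaves $\epsilon_{\mathrm{LCC}}=2\lambda\max_k\epsilon_k$.

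Finally I would extract the output state and its success probability. Applying this block-encoding to the uniform superposition $N^{-1/2}\sum_j\ket{j}$ on the data register and post-selecting all ancillas on $\ket{0}$ returns $(2\lambda)^{-1}N^{-1/2}\sum_j f(x_j)\ket{j}$ up to error $\epsilon_{\mathrm{LCC}}$; renormalising gives exactly $\frac{1}{\mathcal{N}}\sum_j f(x_j)\ket{j}$. The probability of the all-zero outcome is the squared norm of that sub-normalised vector, $\lVert f\rVert_2^2/(4\lambda^2)$ with the normalised norm $\lVert f\rVert_2^2=N^{-1}\sum_j f(x_j)^2$, and since $\vec{\lambda}=\lambda/\lVert f\rVert_2$ by Eq.~\eqref{eq:error_sacling_factor} this coincides with the stated $1/(4\vec{\lambda}^2)=\mathcal{P}_{\mathrm{LCC}}$.

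The application of the two cited lemmas and the substitutions are routine; the step needing care is the bookkeeping that converts $\beta$ and the normalised $\ell_2$ norm into the advertised scaling factors $\lambda$ and $\vec{\lambda}$ --- in particular tracking the factor $2$ from $\vec{f_k}=f_k/2\lVert f_k\rVert_{\mathrm{max}}$ --- together with checking that the sparse pair of Lemma~\ref{lem:spp} really meets the hypotheses of Lemma~\ref{lem:lcbem}, namely $c_j^{\ast}d_j=0$ for $j\ge M$ and $\beta\,c_j^{\ast}d_j=y_j$ with the correct sign when some $\lambda_k$ is negative. That sign-and-normalisation bookkeeping, rather than any new technical ingredient, is the main obstacle.
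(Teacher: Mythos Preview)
Your proposal is correct and follows essentially the same route as the paper: rewrite $f=\sum_k y_k\vec{f}_k$ with $y_k=2\lambda_k\lVert f_k\rVert_{\max}$, invoke Lemma~\ref{lem:spp} for the $(2\lambda,M-1,\epsilon_{\mathrm{spp}})$ state-preparation pair, and plug into Lemma~\ref{lem:lcbem} to obtain the $(2\lambda,\cdot,\cdot)$-block-encoding of $f$; the query, ancilla and error counts then fall out exactly as you describe.

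The one place your argument diverges from the paper is the success-probability estimate. You compute it directly as the squared norm of $(2\lambda)^{-1}f\,H^{\otimes n}\ket{0}$, obtaining $\lVert f\rVert_2^2/(4\lambda^2)=1/(4\vec{\lambda}^2)$, which matches the lemma's statement. The paper instead writes $\mathcal{P}_{\mathrm{LCC}}$ as a sum over the individual terms $2\lambda_k\lVert f_k\rVert_{\max}U_k\ket{\psi}$, applies Cauchy--Schwarz to pass from the sum of squares to the square of the sum, and ends with the weaker lower bound $\mathcal{P}_{\mathrm{LCC}}\ge 1/(4M\vec{\lambda}^2)$ (and correspondingly $\mathcal{O}(\sqrt{M}\vec{\lambda})$ after amplification). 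Your direct calculation is both cleaner and tighter, and in fact it is what is needed to justify the lemma as stated without the extra factor of $M$.
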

\begin{proof}
    The linear combination of constraints can be re-written as
    \begin{equation}
        f = \sum_k \lambda_k f_k = \sum_k 2\lambda_k{\lVert f_k\rVert}_{\mathrm{max}}\vec{f_k}
    \end{equation}
    with the coefficients to be $2\lambda_k{\lVert f_k\rVert}_{\mathrm{max}}$.
    According to Lemma~\ref{lem:spp}, there is a $(2\lambda, M-1, \epsilon_\mathrm{spp})$ state-preparation-pair of unitaries with $\lambda$ and $\vec{\lambda}$ the scaling factors defined in Eq.~\eqref{eq:error_sacling_factor}.
    By Lemma~\ref{lem:nta}, the re-scaled constraints $\vec{f_k} = \frac{\lambda_k}{2{\lVert f_k\rVert}_{\mathrm{max}}}f_k$ can be efficiently implemented via a $(1, n+4, \epsilon_f + \epsilon_\mathrm{QSVT1})$ block-encoding.
    Utilizing Lemma~\ref{lem:lcbem} one has a $(2\lambda, n+M+3, \epsilon_\mathrm{spp} + \lambda(\epsilon_f+\epsilon_\mathrm{QSVT1}))$-block-encoding of $f$.
    The success probability on the initial state $\ket{\psi} = H^{\otimes n}\ket{0}$ is lower bounded by    
    \begin{align}
        \mathcal{P}_\mathrm{LCC} &= \frac{\sum_{k=1}^M (2\lambda_k\lVert f_k \rVert_{\mathrm{max}}U_k\ket{\psi})^2}{(\sum_{k=1}^M2|\lambda_k|\lVert f_k \rVert_{\mathrm{max}})^2}\\\label{eq:cauchy}
        &\geq \frac{\frac{1}{M}(\sum_{k=1}^M 2\lambda_k\lVert f_k \rVert_{\mathrm{max}}U_k\ket{\psi})^2}{(\sum_{k=1}^M2|\lambda_k|\lVert f_k \rVert_{\mathrm{max}})^2}\\
        &= \frac{\left\lVert U_\mathrm{LCC}\ket{\psi}\right\rVert^2}{M(\sum_{k=1}^M2|\lambda_k|\lVert f_k \rVert_{\mathrm{max}})^2}\\\label{eq:l2}
        &= \frac{\left\lVert f \right\rVert_2^2}{M(\sum_{k=1}^M2|\lambda_k|\lVert f_k \rVert_{\mathrm{max}})^2}\\
        & =\frac{1}{4M\vec{\lambda}^2}.
    \end{align}
    Herein, Eq.~\eqref{eq:cauchy} is derived by applying Cauchy-Schwartz inequality to the numerator and Eq.~\eqref{eq:l2} is derived from the $l_2$-norm of $f$ on the state $\ket{\psi}$.
    This success probability can be accelerated by amplitude amplification and the query complexity is hence $\mathcal{O}(\sqrt{M}\vec{\lambda})$.
\end{proof}
\noindent\textbf{Remark.} It should be noted that:\\
\textbf{(a)} The success probability is bounded by $\frac{1}{4}$ since
    \begin{equation}
            {\lVert f\rVert}_{\mathrm{max}}
                =\max_x{|\sum_{k=1}^{M}\lambda_k f_k|}
                \leq\max_x{\sum_{k=1}^{M}|\lambda_k||f_k|}
                \leq\lambda.
    \end{equation}
One might argue that the factor $\frac{1}{4}$ is too small, yet it is lower bounded by the QSVT condition $|f|\leq\frac{1}{2}$ (Theorem~56 in Ref.~\citesm{gilyen:2018quantum}) and any related works~\citesm{guo:2021nonlinear, rattew:2023non, conde:2023efficient}.\\
\textbf{(b)} The success probability grows maximally when all the weighted components take their maximum simultaneously. This should not be surprising since, at least intuitively, the largest filling rate is achieved and that is coincident with former works~\citesm{rattew:2022preparing, conde:2023efficient}.\\
\textbf{(c)} There is a one-to-one correspondence between the rotation angles $\vec{\theta}$ and the re-normalized coefficients $2\lambda_k{\lVert f_k\rVert}_{\mathrm{max}}: 1\le k\le M$. Thus it can be viewed as a parameterized layer wherein a quantum optimization method such as \emph{parameter shift rule} can be applied.\\
\textbf{(d)} Herein, each Lagrange multiplier $\lambda_k$ is encoded explicitly as a rotation angle $\theta_k$ in the $U_\mathrm{LCC}$ subroutine (see Figure \ref{fig:medl}\textbf{(b)}).
One noticed that while for common distributions these Lagrange multipliers can be computed from model parameters directly, indeed they can also be learned from some unknown distribution or observed data.
As a result, this SI-PQC can be treated as not only a random variable generator but also a problem-inspired ansatz for quantum optimization algorithms.

\subsection{Maximum Entropy Distribution Loader}\label{sec:medl}
Finally, one can exponentiate the amplitudes encoding of the parameterized constraints combination utilizing Lemma~\ref{lem:nta} once again:
\begin{equation}\label{eq:u_si}
    U_\mathrm{ME}(\vec{\lambda}, \vec{f}):
    \ket{0} \rightarrow
    \frac{1}{\mathcal{N'}}\sum_{j=0}^{N-1}\exp{\sum_{k=1}^M\lambda_k f_k(x_j)}\ket{j}.
\end{equation}
As required in Lemma~\ref{lem:nta}, the implementation of  $U_\mathrm{ME}(\vec{\lambda}, {\vec{f}})$ admits $\mathcal{O}(\log{(\frac{1}{\epsilon})})$ queries of the parameterized constraints $\sum_{k=1}^M\lambda_k U_k$,
and hence can be regarded as a PQC with shared parameters.
In summary, one has:

\noindent\textbf{Proof of Theorem~\ref{thm:med-SI-PQC}:}
\begin{proof}
    \textbf{Correctness:}
    In step 1, by applying the QSVT circuit $U_k$ one can derive the quantum state
    \begin{equation}
        \ket{\vec{f}_k} = \sum_{x\in I}\vec{f}_k(x)\ket{x}.
    \end{equation}
    In step 2, by applying the linear combination of constraints denoted by $U_\mathrm{LCC}$ one has
    \begin{equation}
        \frac{1}{2\lambda}\ket{f} = \sum_{k=1}^M\sum_{x\in I}\frac{\lambda_k\lVert f_k\rVert_{\mathrm{max}}}{2\lambda}\vec{f}_k(x)\ket{x}=\frac{1}{2\lambda}\sum_{x\in I}f(x)\ket{x}.
    \end{equation}
    In step 3, by applying the QSVT circuit $U_{EM}$ one can derive the quantum state
    \begin{equation}
        \frac{1}{2e^{\lVert f\rVert_{\mathrm{max}}}}\sum_{x\in I}e^{f(x)}\ket{x}.
    \end{equation}
    
    \textbf{Error analysis:}
    In step 1, without loss of generality, each constraint function $\vec{f}_k$ is assumed to be approximated by a $d_k$-degree polynomial $P_k$ with a uniform approximation error bound
    \begin{equation}
        |\vec{f}_k-P_k| \leq \epsilon_f (1\leq k\leq M).
    \end{equation}
    And by Lemma~\ref{lem:nta}, each polynomial can be implemented via $U_k$, a $(1, \log{n}+2, \epsilon_\mathrm{QSVT1})$-block-encoding of $P_k$ given the error from the QSVT rotation angles approximation to be $\epsilon_\mathrm{QSVT1}$.
    Thus the error to implement $\vec{f}_k$  is bounded by
    \begin{equation}
        |\vec{f}_k-U_k|\leq |\vec{f}_k-P_k| + |P_k - U_k| \leq \epsilon_f + \epsilon_\mathrm{QSVT1},
    \end{equation}
    and $U_k$ is a $(1, \log{n}+2, \epsilon_f + \epsilon_\mathrm{QSVT1})$-block-encoding of $\vec{f}_k$.
    
    In step 2, one can utilize Lemma~\ref{lem:lcc} to implement the quantum circuit $U_\mathrm{LCC}$, the $(2\lambda, \log{n}+M+5, \epsilon_\mathrm{spp}+\lambda\epsilon_f + \lambda\epsilon_\mathrm{QSVT1})$-block-encoding of $f$ so that
    \begin{equation}
        \left|2\lambda U_\mathrm{LCC}\ket{\psi} - \sum_{x\in I}f(x)\ket{x}\right| \leq \epsilon_\mathrm{spp}+\lambda\epsilon_f + \lambda\epsilon_\mathrm{QSVT1}
    \end{equation}
    holds for the initial state $\ket{\psi} = H^{\otimes n}\ket{0}$ and the interval $I$.
    
    In step 3, one can implement  a quantum circuit $U_\mathrm{ME}$ to be the $(1, \log{n}+M+7, 4d_{\mathrm{exp}}\sqrt{\epsilon_\mathrm{spp}/2\lambda+\epsilon_f + \epsilon_\mathrm{QSVT1}}+\epsilon_\mathrm{QSVT2})$-block-encoding of $P_{\mathrm{exp}}(U_\mathrm{LCC})$, 
    wherein $P_{\mathrm{exp}}$ is the $d_\mathrm{exp}$-degree polynomial approximation of the normalized exponential function $\frac{1}{\mathcal{N}}e^{2\lambda x}$ with the normalization constant
    \begin{equation}\label{eq:normalization_constant}
       \mathcal{N} = \sqrt{\sum_x (e^{2\lambda \frac{1}{2\lambda}f(x)})^2} = \sqrt{\sum_x (e^{f(x)})^2}
    \end{equation}
    and a uniform approximation error bounded by
    \begin{align}\label{eq:taylor}
        \epsilon_\mathrm{exp} &= \left|\frac{e^{2\lambda\frac{f}{2\lambda}}}{\mathcal{N}} - P_e\left(\frac{f}{2\lambda}\right)\right|\\
        &=\left|
        \sum_{j=0}^\infty \frac{(2\lambda)^j}{\mathcal{N}j!} \left(\frac{f}{2\lambda}\right)^j -
        \sum_{j=0}^{d_{\mathrm{exp}}} \frac{(2\lambda)^j}{\mathcal{N}j!} \left(\frac{f}{2\lambda}\right)^j
        \right|\\
        &=\left|
        \sum_{j=d_{\mathrm{exp}}+1}^\infty \frac{(2\lambda)^j}{\mathcal{N}j!} \left(\frac{f}{2\lambda}\right)^j
        \right|\\\label{eq:stirling}
        &\leq \left|
        \sum_{j=d_{\mathrm{exp}}+1}^\infty \frac{\lambda^j}{\mathcal{N}j!}
        \right|
        \leq \left|
        \sum_{j=d_{\mathrm{exp}}+1}^\infty \frac{1}{\mathcal{N}2^j}
        \right|
        \\\label{eq:epsilon_exp_bound}
        &=\frac{1}{\mathcal{N}2^{d_{\mathrm{exp}}}}.
    \end{align}
    Herein, Eq.~\eqref{eq:stirling} holds for any $j\geq2e\lambda$ since by Stirling's formula we have
    \begin{equation}
        \frac{\lambda^j}{j!}\leq\frac{\lambda^j}{\sqrt{2\pi j}(\frac{j}{e})^j}\leq\left(\frac{e\lambda}{j}\right)^j.
    \end{equation}
    And $\epsilon_\mathrm{QSVT2}$ is the error to implement the QSVT of $P_\mathrm{exp}$.
    The error from the imperfect approximation $U_\mathrm{LCC}$ of $f$ can be evaluated as
    \begin{align}\nonumber
        &\left|\frac{1}{\mathcal{N}}e^{2\lambda U_\mathrm{LCC}}\ket{\psi}
        - \sum_{x \in I}\frac{1}{\mathcal{N}}e^{f(x)}\ket{x}
        \right|\\\label{eq:cauchy_mean_value}
        \leq&\frac{1}{\mathcal{N}}e^{\lVert f \rVert_{\mathrm{max}}}
        \left|2\lambda U_\mathrm{LCC}\ket{\psi} - \sum_{x\in I}f(x)\ket{x}
        \right|\\\label{eq:u_lcc_error}
        \leq&\frac{1}{\mathcal{F}}(\epsilon_\mathrm{spp}+\lambda\epsilon_f + \lambda\epsilon_\mathrm{QSVT1}),
    \end{align}
    wherein Eq.~\eqref{eq:cauchy_mean_value} is a result of the Cauchy mean value theorem and Eq.~\eqref{eq:u_lcc_error} is derived from the linear combination of constraints in step 1.
    
    To summarize, the total error to prepare a quantum state proportional to Eq.~\eqref{eq:pdf_me} on the interval $I$ is
    \begin{align}\nonumber
        &\left|U_\mathrm{EM}\ket{\psi} - \sum_{x \in I}\frac{1}{\mathcal{N}}\exp{\sum_{k=1}^M \lambda_k f_k(x)}\ket{x}\right|\\\nonumber
        \leq& \left|U_\mathrm{EM}\ket{\psi} - P_{\mathrm{exp}}(\frac{U_\mathrm{LCC}}{\lambda})\ket{\psi}\right|+\\\nonumber
        &\left|P_{\mathrm{exp}}(\frac{U_\mathrm{LCC}}{\lambda})\ket{\psi}
        -\frac{e^{U_\mathrm{LCC}}}{\mathcal{N}}\ket{\psi}
        \right|+\\\label{eq:triangle}
        &
        \left|\frac{e^{U_\mathrm{LCC}}}{\mathcal{N}}\ket{\psi}
        - \sum_{x \in I}\frac{1}{\mathcal{N}}e^{f(x)}\ket{x}
        \right|\\\nonumber
        \leq&(4d_{\mathrm{exp}}\sqrt{\epsilon_\mathrm{spp}/\lambda+\epsilon_f + \epsilon_\mathrm{QSVT1}}+\epsilon_\mathrm{QSVT2})+\\\label{eq:triangle2}
        &\epsilon_\mathrm{exp}+\frac{1}{\mathcal{F}}(\epsilon_\mathrm{spp}+\lambda\epsilon_f + \lambda\epsilon_\mathrm{QSVT1})\\\label{eq:sqrt_epsilon}
        \leq&(4d_{\mathrm{exp}}+\frac{\lambda}{\mathcal{F}})\sqrt{\epsilon_\mathrm{spp}/\lambda+\epsilon_f + \epsilon_\mathrm{QSVT1}} + \epsilon_\mathrm{exp} +\epsilon_\mathrm{QSVT2}\\\label{eq:rotation_angle}
        \leq&(4d_{\mathrm{exp}}+\frac{\lambda}{\mathcal{F}})\sqrt{\frac{M\Delta\theta_0}{\lambda}+\epsilon_f + d_f\Delta\theta_1} + \epsilon_\mathrm{exp} +d_{\mathrm{exp}}\Delta\theta_2.
    \end{align}
    In  Eq.~\eqref{eq:triangle}, the first part is bounded by the QSVT implementation procedure in step 3, the second part is defined by Eq.~\eqref{eq:taylor}, and the third part is bounded by Eq.~\eqref{eq:u_lcc_error}.
    In Eq.~\eqref{eq:triangle2}, $\epsilon_\mathrm{spp}+\lambda\epsilon_f + \lambda\epsilon_\mathrm{QSVT1}$ is bounded by $\lambda\sqrt{\epsilon_\mathrm{spp}/\lambda+\epsilon_f + \epsilon_\mathrm{QSVT1}}$ since $\epsilon_\mathrm{spp}/\lambda+\epsilon_f + \epsilon_\mathrm{QSVT1}$ is assumed to be less than $1$ without loss of generality.
    In Eq.~\eqref{eq:sqrt_epsilon}, $\epsilon_\mathrm{QSVT1}$ is bounded by $d_{f}\Delta\theta_1$ from Lemma~\ref{lem:qsvt_rotation_error} given the maximum degree of constraints $d_f = \max_{1\leq k\leq M}{d_k}$ and $\epsilon_\mathrm{QSVT2}$ is bounded by $d_{\mathrm{exp}}\Delta\theta_2$ from Lemma~\ref{lem:qsvt_rotation_error}.

    If continuous rotation of an arbitrary rotation angle is promised, then the terms $\Delta\theta_0$, $\Delta\theta_1$ and $\Delta\theta_2$ all vanish and Eq.~\eqref{eq:rotation_angle} can be simplified to be
    \begin{equation}
        (4d_{\mathrm{exp}}+\frac{\lambda}{\mathcal{F}})\sqrt{\epsilon_f}+\epsilon_\mathrm{exp}.
    \end{equation}
    To suppress the error less than a uniform bound $\epsilon$, we observe that $(4d_{\mathrm{exp}}+\frac{\lambda}{\mathcal{F}})\sqrt{\epsilon_f}\leq\epsilon/2$ holds for
    \begin{equation}
        \epsilon_f\leq\frac{\epsilon^2}{4(4d_{\mathrm{exp}}+\frac{\lambda}{\mathcal{F}})^2}.
    \end{equation}
    In practice, the observation functions $f_k$ is usually a polynomial, and hence we have $\epsilon_f=0$.
    By Eq.~\eqref{eq:epsilon_exp_bound}, we also have $\epsilon_\mathrm{exp}\leq\epsilon/2$ holds for
    \begin{equation}
        d_{\mathrm{exp}} \geq \max\{\log\frac{2}{\mathcal{N}\epsilon}, 2e\lambda\}.
    \end{equation}

    If the rotation is imperfect, to suppress the error in Eq.~\eqref{eq:rotation_angle} less than a uniform bound $\epsilon$, one can bound the three terms by $\epsilon/3$ respectively:    
    To bound $(4d_{\mathrm{exp}}+\frac{\lambda}{\mathcal{F}})\sqrt{\frac{M\Delta\theta_0}{\lambda}+\epsilon_f + d_f\Delta\theta_1}$ by $ \epsilon/3$ we have
    \begin{equation}
        \frac{M\Delta\theta_0}{\lambda}+\epsilon_f + d_f\Delta\theta_1 \leq \frac{\epsilon^2}{9(4d_{\mathrm{exp}}+\frac{\lambda}{\mathcal{F}})^2}.
    \end{equation}
    And this holds if
    \begin{align}
        \Delta\theta_0 &\leq \frac{\lambda\epsilon^2}{27M(4d_{\mathrm{exp}}+\frac{\lambda}{\mathcal{F}})^2},\\
        \epsilon_f &\leq \frac{\epsilon^2}{27(4d_{\mathrm{exp}}+\frac{\lambda}{\mathcal{F}})^2},\\
        \Delta\theta_1 &\leq \frac{\epsilon^2}{27d_f(4d_{\mathrm{exp}}+\frac{\lambda}{\mathcal{F}})^2}.
    \end{align}
    By Eq.~\eqref{eq:epsilon_exp_bound}, $\epsilon_\mathrm{exp}\leq\epsilon/3$ holds for
    \begin{equation}
        d_{\mathrm{exp}} \geq \max\{\log\frac{3}{\mathcal{N}\epsilon}, 2e\lambda\}.
    \end{equation}
    And $d_{\mathrm{exp}}\Delta\theta_2\leq\epsilon/3$ holds for
    \begin{equation}
        \Delta\theta_2\leq\frac{\epsilon}{3d_{\mathrm{exp}}}.
    \end{equation}
    
    \textbf{Complexity:}
    The complexity can be directly computed in three steps:
    In step 1, we assume that the $k^{th}$ constraint can be prepared by $U_k$ with circuit depth $c_k$, as guaranteed by Lemma~\ref{lem:nta}.
    In step 2, the linear combination of constraints can be implemented via $U_\mathrm{LCC}$ with circuit depth bounded by
    \begin{equation}
        2(M-1) + \sum_{k=1}^M c_k + 2(M-1) \leq 4M + M\Bar{c} =\mathcal{O}(nMd_f).
    \end{equation}
    Herein, $\Bar{c} = \max_k c_k$ is bounded by $nd_f$ with $d_f = \max_{1\leq k\leq M} d_k$.
    and the success probability is $\mathcal{P}_{\mathrm{LCC}} = \frac{1}{4M\vec{\lambda}^2}$.
    In step 3, one implement $U_\mathrm{ME}$ by
    \begin{equation}
        d_{\mathrm{exp}} < \log\frac{2}{\mathcal{N}\epsilon}+ 2e\lambda < 2e\log{\left(\frac{2e^\lambda}{\mathcal{N}\epsilon}\right)}=2e\log\left(\frac{2}{\mathcal{F}\epsilon}\right)
    \end{equation}
    calls of $U_\mathrm{LCC}$.
    The total circuit depth is hence bounded by
    \begin{equation}
        (4M+M\Bar{c})d_{\mathrm{exp}} = \mathcal{O}\left(nMd_f\log\left(\frac{1}{\mathcal{F}\epsilon}\right)\right).        
    \end{equation}
    And the total success probability is
    \begin{align}
        \mathcal{P}_{ME} 
        &= \left(\left(\sum_{x\in I}\frac{e^{f(x)}}{2e^{\lVert f\rVert_{\mathrm{max}}}}\ket{x}\right)^\dagger\left(\sum_{x\in I}\frac{e^{f(x)}}{{\lVert e^f\rVert_{2}}}\ket{x}\right)\right)^2\\
        &= \left(\frac{\lVert e^f\rVert_{2}}{2\lVert e^f\rVert_{\mathrm{max}}}\right)\\
        &= \frac{\mathcal{F}^2}{4}.
    \end{align}
    Hence the call complexity of $U_\mathrm{ME}$ after amplitude amplification is $\mathcal{O}(\frac{1}{\mathcal{F}})$
    The total time complexity is hence 
    \begin{equation}
        \mathcal{O}\left(\frac{nMd_f}{\mathcal{F}}\log\left(\frac{1}{\mathcal{F}\epsilon}\right)\right).
    \end{equation}
\end{proof}

\subsection{End-to-end Analysis}\label{sec:med-e2e}
Since MEDL is the fundamental subroutine for our further model generation and learning procedures, it is necessary to give a comprehensive analysis of the end-to-end quantum resource analysis.

\begin{corollary}[\textbf{End-to-end analysis of Maximum Entropy Distribution Loader for Common Statistics Distributions}]\label{thm:med2-SI-PQC}
    We summarize the end-to-end preparation complexity for common statistics distributions in Table.~\ref{tab:medl}.
\end{corollary}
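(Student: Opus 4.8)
The plan is to instantiate the general bound of Theorem~\ref{thm:med-SI-PQC} --- circuit depth $\mathcal{O}\!\left(nMd_f\log(1/\mathcal{F}\epsilon)\right)$ and raw success probability $\Theta(\mathcal{F}^2)$, hence time complexity $\mathcal{O}\!\left(\frac{nMd_f}{\mathcal{F}}\log(1/\mathcal{F}\epsilon)\right)$ --- separately for each row of Table~\ref{tab:medl}, by pinning down the three distribution-dependent quantities $M$, $d_f$, and $\mathcal{F}=\lVert p\rVert_2/\lVert p\rVert_{\mathrm{max}}$. First, for each family I would invoke Lemma~\ref{lem:gmed}(b) to read off the maximum-entropy representation and its constraints, as summarized in the table: $M=2$ with $f_1(x)=x$, $f_2(x)=(x-\mu)^2$ for the normal; $M=1$ with $f_1(x)=x$ for the exponential; and $M\le 2$ involving $\ln x$ together with a power of $x$ for Pareto, Rayleigh, Chi, and Chi-squared. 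The families whose constraints are polynomials contribute $d_f$ equal to the degree ($1$ for the exponential, $2$ for the normal, Rayleigh, and Chi), while the $\ln x$ constraint is not polynomial and must be approximated to accuracy $\epsilon$; an affine rescaling of the (truncated) support reduces this to the $\ln(1+cx)$ case of Lemma~\ref{lem:nta}, which supplies a polynomial of degree $d_f=\mathcal{O}(\log(1/\epsilon))$ and thereby produces the extra $\log^2(1/\epsilon)$ in the Pareto row.

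Second, I would evaluate the filling rate for each family. The mode $\lVert p\rVert_{\mathrm{max}}$ follows from elementary calculus (e.g.\ at $x=\mu$ for the normal, $x=0^{+}$ for the exponential, $x=\sigma$ for Rayleigh, $x=\sqrt{k-1}$ for Chi, $x=k-2$ for Chi-squared), and $\lVert p\rVert_2$ is the square root of $\int p(x)^2\,\mathrm{d}x$ over the possibly truncated support: for the normal this is a Gaussian integral giving $\mathcal{F}=\Theta\!\big(1/\sqrt{\sigma\sqrt{\sigma^2+\mu^2}}\big)$ after accounting for the discretization scaling, for the exponential $\mathcal{F}=\Theta(\lambda)$ after normalizing the interval, and for the truncated families the numerator is a lower incomplete gamma function, yielding the $\gamma(2,x_0^2/\sigma^2)$ and $\Gamma(k/2)$ expressions listed. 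Substituting $M$, $d_f$, and $\mathcal{F}$ into the time bound of Theorem~\ref{thm:med-SI-PQC} and absorbing $M=\mathcal{O}(1)$ then reproduces every entry of the ``Total complexity'' column; the exponentiation layer contributes only the $\log(1/\mathcal{F}\epsilon)$ factor, except for the Gaussian-shaped integrand, where one instead uses the dedicated $\mathcal{O}(\sigma\log(1/\epsilon))$ case of Lemma~\ref{lem:nta} for the final QSVT to avoid an unfavorable filling rate.

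The main obstacle will be the normalization bookkeeping for the truncated distributions: one must choose the truncation point $x_0$ and the rescaling $\vec{f}_k=f_k/(2\lVert f_k\rVert_{\mathrm{max}})$ so that simultaneously (i) every constraint satisfies the QSVT hypothesis $\lVert\vec{f}_k\rVert_{\mathrm{max}}\le\tfrac12$ of Lemma~\ref{lem:rpet}, (ii) the relative scaling factor $\vec{\lambda}$ of Lemma~\ref{lem:lcc} remains $\mathcal{O}(1)$ so the intermediate success probability is not exponentially small, and (iii) the discarded tail mass is $O(\epsilon)$ so the filling rate computed on the truncated support matches the nominal value up to constants. Verifying these three conditions for each family --- and in particular showing that the incomplete-gamma filling rates in Table~\ref{tab:medl} are honest rather than merely heuristic values --- is where the real work lies; the remaining steps are a routine substitution into Theorem~\ref{thm:med-SI-PQC} and Lemma~\ref{lem:nta}.
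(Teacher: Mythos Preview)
Your plan is essentially the paper's own proof: a case-by-case instantiation of Theorem~\ref{thm:med-SI-PQC}, reading off $M$, $d_f$ and $\mathcal{F}$ for each family, with the $\ln x$ constraint handled by the affine substitution $x=1+cu$ so that the $\ln(1+cx)$ case of Lemma~\ref{lem:nta} applies. Two minor bookkeeping differences worth knowing before you write it up: the paper treats the normal with a \emph{single} quadratic constraint $(x-\mu)^2$ (so $M=1$, $d_f=2$) rather than your two moment constraints, and after the affine substitution it expands $x^2=(1+cu)^2$ into separate monomials $u,u^2$ for Rayleigh and Chi, obtaining $M=3$ rather than your $M\le 2$; since $M=\mathcal{O}(1)$ throughout, neither choice changes the asymptotics. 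Your closing suggestion to bypass the pipeline and call the dedicated Gaussian entry of Lemma~\ref{lem:nta} for the normal is \emph{not} what the paper does---it stays within the uniform LCC-then-$e^{x}$ route in every row---and is unnecessary for recovering the table entry. One caution: the concrete filling rates you quote for the normal and exponential rows ($\Theta(1/\sqrt{\sigma\sqrt{\sigma^2+\mu^2}})$ and $\Theta(\lambda)$) do not coincide with the paper's values ($\sqrt{2\pi(\sigma^2+\mu^2)}\,\sigma$ and $\sqrt{2}/\lambda^2$); the paper appears to evaluate the numerator of $\mathcal{F}$ as $\sqrt{\mathbb{E}[X^2]}$ rather than the $L^2$ norm of $p$ you describe, so you should reconcile that convention before substituting, or you will not reproduce the exact ``Total complexity'' column.
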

\begin{proof}

    \noindent\textbf{(a) Normal distribution with large drift.}\\
    Given $p(x) = \frac{1}{\sqrt{2\pi}\sigma}\exp\left( {-\frac{(x-\mu)^2}{2\sigma^2}} \right)$, one has
    $M=1$, $d_f=2$, and $\mathcal{F}=\frac{\sqrt{\sigma^2+\mu^2}}{1/\sqrt{2\pi}/\sigma}=\sqrt{2\pi(\sigma^2+\mu^2)}\sigma$.
    Then the total complexity is $\mathcal{O}(\frac{1}{\sigma\sqrt{\sigma^2+\mu^2}}n\log{\frac{1}{{\sigma\sqrt{\sigma^2+\mu^2}}\epsilon}})$.

    \noindent\textbf{(b) Exponential distribution.}\\
    Given $p(x)=\lambda \exp(-\lambda x)$, it is easy to check that $M=1$, $d_f=\log{(1/\epsilon)}\frac{\ln{(1-c)}}{\ln{(1-c/2)}}$ and
    $\mathcal{F}=\frac{\sqrt{2}/\lambda}{\lambda}=\frac{\sqrt{2}}{\lambda^2}$.
    Consequently, the preparation complexity is $\mathcal{O}(\frac{n}{\lambda^2}\log{(\lambda/\epsilon)})$.

    \noindent\textbf{(c) Pareto distribution.}\\
    Given $p(x)=\frac{\alpha x_0^\alpha}{x^{\alpha+1}}$,
    one needs to prepare the constraint $\ln{x}$.
    Observed that $\ln{x} = \ln(1+c(\frac{x}{c}-\frac{1}{c}))$,
    this constraint can be implemented through the nonlinear transformation of $\ln(1+cx)$ in Lemma~\ref{lem:nta} 
    as a $d=\log(1/\epsilon)$ polynomial with filling rate $\mathcal{F}=\frac{\ln(1-c/2)}{\ln(1-c)}$.
    with $\mathcal{O}({\log(1/\epsilon)\frac{\ln(1-c)}{\ln(1-c/2)}})$ queries of the general linear function loader in Lemma~\ref{lem:lft}.
    One also has $M=1$ and $\vec{\lambda}=1$.
    Consequently, the total complexity is $\mathcal{O}(n\log^2(\frac{1}{\epsilon}))$.

    \noindent\textbf{(d) Rayleigh distribution.}\\
    Given $p(x)=\frac{x}{\sigma^2}\exp\left( {-\frac{x^2}{2\sigma^2}} \right)$ on the interval $[0, x_0]$, one has $p(x) = \frac{e^{-\frac{1}{2\sigma^2}}}{\sigma^2}\exp{\ln{(1+cu)} -\frac{c^2}{2\sigma^2}u^2 - \frac{c}{\sigma^2}u}$
    wherein $u = (\frac{x}{c}-\frac{1}{c})$ is a linear function.
    One has $\vec{\lambda} = (1, -\frac{c^2}{2\sigma^2}, -\frac{c}{\sigma^2})$, $\vec{f} = (\ln{(1+cu)}, u^2, u)$,
    and a similar computation shows that $M = 3$, $d_f=\log{(1/\epsilon)}\frac{\ln{(1-c)}}{\ln{(1-c/2)}}$ and
    \begin{equation}
            \mathcal{F} = \frac{\sqrt{\frac{1}{2\sigma^2x_0}\gamma(2, \frac{x_0^2}{\sigma^2})}}{1/(\sigma\sqrt{e})}\\
            =\sqrt{\frac{e}{2x_0}\gamma(2, \frac{x_0^2}{\sigma^2})},
    \end{equation}
    where $\gamma(2, t)=\int_0^tue^{-u}du$ is the lower incomplete gamma function.
    Hence the circuit depth is 
    \begin{equation}
        \mathcal{O}\left(\frac{\sqrt{x_0}n}{\gamma\left(2, \frac{x_0^2}{\sigma^2}\right)}\log^2{\left(\frac{x_0}{\gamma\left(2, \frac{x_0^2}{\sigma^2}\right)\epsilon}\right)}\right).
    \end{equation}

    \noindent\textbf{(e) Chi distribution.}\\
    Given $p(x)=\frac{2}{2^{k/2}\Gamma(k/2)}x^{k-1}\exp\left( {-\frac{x^2}{2}} \right)$, one has
    \begin{equation}
        \begin{split}
            p(x) &\propto \exp\left((k-1)\ln{x}-\frac{x^2}{2}\right)\\
            &\propto \exp\left((k-1)\ln{(1+cu)}-\frac{c^2}{2}u^2-cu\right),
        \end{split}
    \end{equation}
    wherein $u = (\frac{x}{c}-\frac{1}{c})$ is a linear function.
    One has $\vec{\lambda} = (k-1, -\frac{c^2}{2}, -c)$, $\vec{f} = (\ln{(1+cu)}, u^2, u)$,
    and a similar computation shows that $M = 3$, $d_f=\log{(1/\epsilon)}\frac{\ln{(1-c)}}{\ln{(1-c/2)}}$ and
    \begin{equation}
            \mathcal{F} = \frac{\sqrt{k}}{\sqrt{2}(\frac{k-1}{2e})^{\frac{k-1}{2}}/\Gamma(\frac{k}{2})}\\
            =\frac{\sqrt{k}\Gamma(\frac{k}{2})}{\sqrt{2}(\frac{k-1}{2e})^{\frac{k-1}{2}}},
    \end{equation}
    where $\Gamma(z)=\int_0^\infty t^{z-1}e^{-t}dt$ is the Gamma function.
    Hence the circuit depth is $\mathcal{O}\left(\frac{n}{\mathcal{F}}\text{polylog}(\frac{1}{\mathcal{F}}, \frac{1}{\epsilon})\right)$.

    \noindent\textbf{(f) Chi-squared distribution.}\\
    Given $p(x)=\frac{1}{2^{k/2}\Gamma(k/2)}x^{k/2-1}\exp\left( {-\frac{x}{2}} \right)$, one has
    \begin{equation}
        \begin{split}
            p(x) &\propto \exp\left((k/2-1)\ln{x}-\frac{x}{2}\right)\\
            &\propto \exp\left((k/2-1)\ln{(1+cu)}-\frac{c}{2}u\right),
        \end{split}
    \end{equation}
    wherein $u = (\frac{x}{c}-\frac{1}{c})$ is a linear function.
    One has $\vec{\lambda} = (k-1, -\frac{c}{2})$, $\vec{f} = (\ln{(1+cu)}, u)$,
    and a similar computation shows that $M = 2$, $d_f=\log{(1/\epsilon)}\frac{\ln{(1-c)}}{\ln{(1-c/2)}}$ and
    \begin{equation}
            \mathcal{F} = \frac{\sqrt{k(k+2)}}{(\frac{k-2}{2e})^{k/2-1}/2/\Gamma(\frac{k}{2})}\\
            =\frac{2\sqrt{k(k+2)}\Gamma(\frac{k}{2})}{\sqrt{2}(\frac{k-1}{2e})^{k/2-1}}.
    \end{equation}
    Hence the circuit depth is $\mathcal{O}\left(\frac{n}{\mathcal{F}}\text{polylog}(\frac{1}{\mathcal{F}}, \frac{1}{\epsilon})\right)$.
\end{proof}
\noindent\textbf{Remark.}
Firstly, the preparation complexity for normal distribution with no drift herein is indeed coincident with Theorem~5 of Ref.~\citesm{rattew:2023non} once observed that the assumption of $-1\leq x\leq1$ induces the restriction $2\sigma^2\geq1$.
Moreover, the complexity dependency on $\sigma$ can be removed as large $\sigma$ leads to an easily-preparing flat state.
Secondly, the preparation complexity for normal distribution with small drift herein depends on the relative variance $\mu/\sigma$.
Intuitively, the larger the variance, the state is closer to a uniform distribution that is easy to prepare.

\section{Weighted Mixture of Distributions}\label{sec:wdm-SI-PQC}
In this subsection, we consider the more general case of a latent distribution of many visible distributions,
wherein different visible distributions are assigned with corresponding probabilities from the latent distribution of their types and parameters.
More formally one has:
\begin{definition}[\textbf{Weighted mixture of distributions~\citesm{teicher:1960on}}]\label{def:wmd}
    
    If $\Omega = \{ f \}$ is a family of distribution functions and $\omega$ is a measure on a Borel Field of subsets of $\Omega$ with $\omega(\Omega) = 1$ (intuitively, $\omega$ is the probability measure on the space of many visible distributions $\Omega$).
    Then
    \begin{equation}\label{eq:wmd}
        \int f(\cdot)d\omega(f)
    \end{equation}
     is again a distribution function which is called a $\omega$-mixture of $\Omega$.
\end{definition}
\noindent This issue is quite common for statistics modeling, especially for practical data science and financial problems, since the types and parameters of the underlying distribution are usually not sure.

\subparagraph{Distinct distributions mixer.}
Noted the similarity between the random distribution specification for weighted mixture and the state collapse for quantum superposition,
we realize a natural quantum circuit implementation for the weighted distribution mixer Eq.~\eqref{eq:wmd} as a linear combination of distributions.
More specifically, suppose $M$ distinct distributions $f_k$ that can be prepared via some oracle $O_k$ and $M$ positive real-valued numbers of weights $w_k$ $(1\leq k\leq M)$ such that $\sum_{k=1}^Mw_k=1$.
Then the weighted mixture p.d.f.
\begin{equation}
    f(x) = \sum_{k=1}^Mw_kf_k(x)
\end{equation}
can be efficiently implemented in the same way as in Lemma~\ref{lem:lcc}.
One should notice that no assumption on the oracle $O_k$ is made,
and the underlying distributions can be prepared in a flexible way, including MPS, WSL, QSVT, quantum generative adversarial networks (QGAN), and many other shallow quantum circuits~\citesm{Grover:2002creating, holmes:2020efficient, zylberman:2023efficient, rattew:2022preparing, conde:2023efficient, zoufal:2019quantum}.
Importantly, both the hidden distribution weights and the underlying distribution parameters are exposed to follow our \emph{Static-Structured, Tunable-Parameterized and Modular Paradigm}.

\subparagraph{Parametric family mixer.}
For practical problems, it is often more interesting to tackle with a parametric family mixture:
for example, the time interval between random events is usually supposed to be an exponential distribution mixture with different decay rates,
while the financial instrument price is often assumed to be a normal distributions mixture with varying means and variances.
More formally, we define:
\begin{definition}[\textbf{Distribution Mixture of a Parametric Family}]\label{def:dmpf}
    
    Suppose an $M$-dimensional latent parameter space $\Omega_\theta = \{(\theta_1, ..., \theta_M)\}$, the latent distribution with p.d.f. to be $p(\theta)$, and a parametric family of underlying distributions defined on the visible data $x$ and parameterized by $\theta$ with p.d.f. to be $f(x; \theta)$.
    Then the $p$-weighted mixture is
    \begin{equation}\label{eq:dmpf}
        F = \int f(x; \theta) p(\theta) d\theta.
    \end{equation}
\end{definition}
\noindent Note that Eq.~\eqref{eq:dmpf} is an integration over the parameter space, 
and no previous works can prepare the corresponding quantum state directly and easily up to known.
Moreover, a direct application of the LCU method, as mentioned above, will lead to unaffordable resource consumption:
The circuit depth and the qubit number both grow linearly with the whole parameter space size $|\Omega_\theta|\propto(1/\Delta\theta)^M$, hence polynomially quick with the precision $\epsilon\propto1/\Delta\theta$ and exponentially quick with the free parameter number $M$.

Due to our \emph{Static-Structured, Tunable-Parameterized and Modular Paradigm}, the SI-PQC proposed in the previous subsection can be utilized to prepare the desired state with exponential improvement as follows:

\noindent\textbf{Step 1. Prepare latent distribution on parameter space.}
Firstly, we prepare the latent distribution of parameters.
Without loss of generality, one may assume $M$ independent parameters $\theta_k (1\leq k\leq M)$ defined on $k^{th}$ register of $n_k^{(\theta)} = \log{\frac{1}{\Delta\theta_k}}$ qubits.
Each parameter $\theta_k$ follows a latent distribution with p.d.f. $p_k^{(\theta)}(j)$ that can be prepared to a quantum superposition via an operator $L_k$ with circuit depth $d_k^{(\theta)}$:
\begin{equation}
    L_k\ket{0}_{n_k^{(\theta)}} = \sum_{j=1}^{2^{n_k^{(\theta)}}-1} \sqrt{p_k^{(\theta)}(j)}\ket{j}.
\end{equation}
Then the latent distribution on the whole parameter space can be prepared as
\begin{equation}
    \begin{split}
        &(L_1\otimes\hdots\otimes L_M)\ket{0}_{n_{\theta}}\\
        =& \bigotimes_{k=1}^M \left( \sum_{j=1}^{2^{n_k^{(\theta)}}-1} \sqrt{p_k^{(\theta)}(j)}\ket{j}\right)\\
        =& \sum_\theta \sqrt{p^{(\theta)}(\theta_1, ..., \theta_M)} \ket{\theta_1, ..., \theta_M}_{n_{\theta}}
    \end{split}    
\end{equation}
on $n_{\theta} = \sum_{k=1}^M n_k^{(\theta)}$ qubits with circuit depth $d_{\theta} = \max_k d_k^{(\theta)}$.
Notice that we create a superposition of all possible parameters,
wherein the parameters are digital-encoded and the latent distribution is stored in the probability amplitudes.

\noindent\textbf{Step 2. Digital-analog parameters conversion.}
Secondly, we implement a digital-analog conversion before injecting the parameter information into the quantum circuit.
Remember that we prepare digital-encoded parameters in Step 1,
while for \emph{maximum entropy distribution loader} parameters should be encoded in the amplitudes, as required in the linear combination subroutine (see Lemma~\ref{lem:lcc}, and also Lemma~\ref{lem:spp} for more details).
One can implement this conversion by a sequence of controlled rotation gates which are uniformly controlled by the superposition in the digital parameter register $\mathrm{DP}$ and generate the corresponding amplitudes in the analog parameter register $\mathrm{AP}$:
\begin{equation}
    \sum_\theta \sqrt{p^{(\theta)}(\theta)} \ket{\theta}_\mathrm{DP}(\sum_{k=1}^M \mathrm{Amp}(\theta, k)\ket{k}_\mathrm{AP}).
\end{equation}

\noindent\textbf{Step 3. Prepare visible distribution on data space.}
Thirdly, since the amplitude-encoded parameters have been efficiently prepared one can prepare the visible distribution on data space via SI-PQC.
Therein the visible distribution is generated through the fixed structure circuit, parameterized by the analog-encoded parameters sampled from the latent distribution as desired.
To summarize, we have\\
\noindent\textbf{Proof of Theorem~\ref{thm:wdm-SI-PQC}:}
\begin{proof}    
    In Step 1, as shown in Fig.~\ref{fig:state_preparation_for_wdm}{(a)}, the digital-encoded parameters are prepared on the first $M$ registers of $n^{(\theta)}$ qubits with circuit depth
    \begin{equation}
        d_{\theta} = \max_k d^{(\theta)}_k.
    \end{equation}
    
    In Step 2,  to satisfy the requirement in Lemma~\ref{lem:lcc},
    one need to construct a state-preparation-pair controlled by the digital-encoded parameters.
    This is implemented through a sequence of controlled rotation gates as follows (See Fig.~\ref{fig:state_preparation_for_wdm}{(b)} for reference):
    Assume the digital-encoded parameter to be
    \begin{equation}
        \theta_1 = \theta_0\times0.s_1^{(1)}s_2^{(1)}...s_{n_1^{(\theta)}}^{(1)},
    \end{equation}
    wherein $\theta_0$ is a common bound for all parameters $\theta_k < \theta_0, 1\leq k\leq M$ and $0.s_1^{(1)}s_2^{(1)}...s_{n_1^{(\theta)}}^{(1)}$ is the related binary representation of $\theta_1/\theta_0$.
    Then it can be converted into an analogue-encoded state
    \begin{equation}
        \cos{\frac{\theta_1}{2}}\ket{0} + \sin{\frac{\theta_1}{2}}\ket{1}
    \end{equation}
    through $n_1^{(\theta)}$ controlled rotation gates with control qubit $s_k^{(1)}$ and target qubit on the analogue-encoded parameters register.
    Repeating this subroutine $M$ times one can derive the state-preparation-pair
    \begin{equation}
    \begin{aligned}        
        \cos{\frac{\theta_1}{2}}\ket{00..0} &+ \sin{\frac{\theta_1}{2}}\cos{\frac{\theta_2}{2}}\ket{10..0} + ... \\        
        &+\sin{\frac{\theta_1}{2}}...\sin{\frac{\theta_M}{2}}\ket{11..1},
    \end{aligned}
    \end{equation}
    with circuit depth
    \begin{equation}
        \sum_{k=1}^M n_k^{(\theta)} = n_{\theta}.
    \end{equation}

    In Step 3, to apply the linear combination of distribution, one needs to apply the above subroutine twice as well as controlled-preparation of the visible distribution.
    Consequently, the additional depth to implement the mixture is $\mathcal{O}((d_\theta+n_\theta)\log{\frac{1}{\epsilon\mathcal{F}}})$
    The total circuit depth is hence bounded by $\mathcal{O}((\frac{nd_xM}{\mathcal{F}}+d_\theta+n_\theta)\log{\frac{1}{\epsilon\mathcal{F}}})$.
\end{proof}
\begin{figure*}
\begin{adjustbox}{width=0.85\linewidth}
    \begin{quantikz}
        \lstick[3,brackets=right,label style={rotate=90, yshift=0.5cm, xshift=1cm}]{{Digital-encoded\\ parameters}}&\qwbundle{n_1^{(\theta)}}
        \lstick[1,brackets=none,label style={yshift=0.5cm, xshift=-1cm,color=black}]{\textbf{(a)}} &\gate{L_1}&\lstick[1,brackets=none,label style={ yshift=0.5cm, xshift=3.5cm,color=blue}]{$\ket{\theta_1}=\sum_{j=0}^{2^{n_1^{(\theta)}}}\sqrt{p_1^{(\theta)}(j)}\ket{j}$}&&&&&&&&\ctrl{3}&&&\ \ldots\ &&\gate{L_1^\dagger}&\\ 
        &\qwbundle{n_2^{(\theta)}} &\gate{L_2}&\lstick[1,brackets=none,label style={ yshift=0.5cm, xshift=3.5cm,color=blue}]{$\ket{\theta_2}=\sum_{j=0}^{2^{n_2^{(\theta)}}}\sqrt{p_2^{(\theta)}(j)}\ket{j}$}&&&&&&&&&\ctrl{3}&&\ \ldots\ &&\gate{L_2^\dagger}&\\[1cm]
        &\qwbundle{n_M^{(\theta)}}\lstick[1,brackets=none,label style={yshift=1.5cm, xshift=0.9cm,color=black, rotate=90}]{\ldots} &\gate{L_M}&\lstick[1,brackets=none,label style={ yshift=0.5cm, xshift=3.75cm,color=blue}]{$\ket{\theta_M}=\sum_{j=0}^{2^{n_M^{(\theta)}}}\sqrt{p_M^{(\theta)}(j)}\ket{j}$}&&&&&&&&&&&\ \ldots\ &\ctrl{4}&\gate{L_M^\dagger}\lstick[1,brackets=none,label style={yshift=1.5cm, xshift=0.4cm,color=black, rotate=90}]{\ldots}&\\
        \lstick[4,brackets=right,label style={rotate=90, yshift=0.5cm, xshift=1cm}]{{Analogue-encoded\\ parameters}}&&&&&&&&&&&\gate{RY(\theta_1)}&\ctrl{1}&\targ{}&\ \ldots\ &&&\\
        &&&&&&&&&&&&\gate{RY(\theta_2)}&\ctrl{-1}&\ \ldots\ &&&\\[1cm]
        &\lstick[1,brackets=none,label style={yshift=1.2cm, xshift=0.9cm,color=black, rotate=90}]{\ldots}&&&&&&&&&&&&\lstick[1,brackets=none,label style={yshift=1.2cm, xshift=0.8cm,color=black, rotate=135}]{\ldots}&\ \ldots\ &\ctrl{1}&\targ{}&\\
        &&&&&&&&&&&&&&\ \ldots\ &\gate{RY(\theta_M)}&\ctrl{-1}&\\
    \end{quantikz}
\end{adjustbox}
\begin{adjustbox}{width=0.85\linewidth}
    \begin{quantikz}
        &\qwbundle{n_k^{(\theta)}} 
        \lstick[1,brackets=none,label style={yshift=0.5cm, xshift=-1cm,color=black}]{\textbf{(b)}}
        &\gate{L_k}&\ctrl{1}&\\[2.3cm]
        &&&\gate{RY(\theta_k)}&\\
    \end{quantikz}
    =
    \begin{quantikz}
        &\gate[3]{L_k}&\ctrl{3}&&&&\\
        &&&\ctrl{2}&&&\\
        &&&&\ \ldots\ &\ctrl{1}&\\[0.6cm]
        &&\gate{RY(\theta_0/2^1)}&\gate{RY(\theta_0/2^2)}&\ \ldots\ &\gate{RY(\theta_0/2^{n_k^{(\theta)}})}&\\
    \end{quantikz}
\end{adjustbox}
    \caption{
    Circuit implementation for the weighted distribution mixer of a parametric family.
    (a)
    The quantum circuit converts the digital-encoded latent space into a state preparation pair on the analog-encoded space.
    (b)
    The digital-analog conversion for each parameter.
    }
    \label{fig:state_preparation_for_wdm}
\end{figure*}
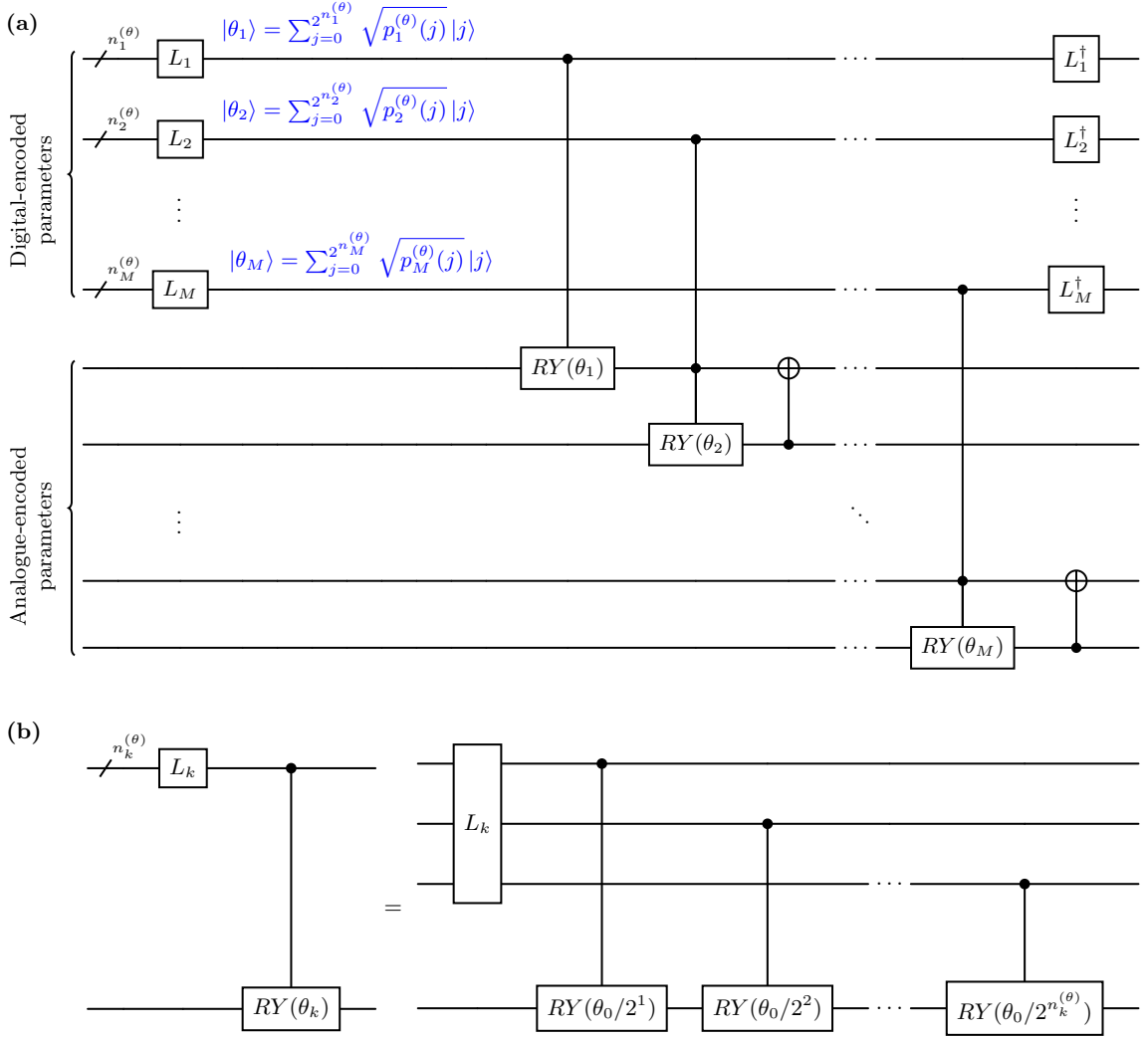
\noindent\textbf{Remark.} Notice that $n^{(\theta)} = \sum_{k=1}^M \log\frac{1}{\Delta\theta_k} = \log\frac{1}{\Delta\theta
}$ with $\Delta\theta$ is the grind size of parameter space, 
and hence this \emph{weighted distribution mixer} SI-PQC makes an exponential improvement on the parameter space precision.

\subparagraph{Gaussian Mixture Model.}
To illustrate the applicability of weighted distribution mixer, we consider the quantum state preparation for the celebrated Gaussian mixture model with applications in a variety of fields such as finance and biometrics~\citesm{douglas:2009gaussian, ian:2008portfolio}.
Specifically, a Gaussian mixture is a linear combination of different normal distributions with varying means and variances,
and it can be modeled through our weighted distribution mixer SI-PQC as
\begin{corollary}\label{thm:gm}
    \textbf{(Gaussian Mixture).} 
    Suppose a Gaussian mixture of $(N_\mu N_\sigma)$ normal distributions with $N_\mu$ means bounded by $|\mu_k|\leq\mu<1$ and $N_\sigma$ variances bounded by $\sigma_k\geq\sigma$.
    Further, assume that the desired parameters $(N_\mu$ and $N_\sigma)$ are generated by a parameterized quantum circuit of depth $d_{\theta}$.
    Then it can be prepared on $n$ qubits through a quantum circuit of depth $\mathcal{O}((n_x+d_{\theta}+n_{\theta})\log{\frac{1}{\sigma\sqrt{\sigma^2+\mu^2}\epsilon}})$ and total complexity $\mathcal{O}(\frac{n_x+d_\theta+n_\theta}{\sigma\sqrt{\sigma^2+\mu^2}}\log{\frac{1}{\sigma\sqrt{\sigma^2+\mu^2}\epsilon}})$.
\end{corollary}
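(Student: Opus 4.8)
The plan is to read the result off as a direct instantiation of Theorem~\ref{thm:wdm-SI-PQC} (the weighted distribution mixer of a parametric family) with the parametric family taken to be the univariate normal family, and then specialise the generic complexity using the normal-distribution entry of Table~\ref{tab:medl} / Corollary~\ref{thm:med2-SI-PQC}(a). Concretely, I would first cast $p(x)=\sum_{k,\ell} w_{k\ell}\,\tfrac{1}{\sqrt{2\pi}\sigma_\ell}e^{-(x-\mu_k)^2/(2\sigma_\ell^2)}$ in the form of Eq.~\eqref{eq:dmpf}: the visible family is the maximum-entropy normal density $\mathcal{P}_\mathrm{data}(x;\lambda_1,\lambda_2)\propto\exp(\lambda_1 x+\lambda_2 x^2)$ with $\lambda_1=\mu/\sigma^2$, $\lambda_2=-1/(2\sigma^2)$, so the constraint functions are $f_1(x)=x$, $f_2(x)=x^2$, giving $M=2$ constraints of maximum order $d_f=2$; the latent distribution $\mathcal{P}_\mathrm{para}$ is the discrete measure on the $N_\mu N_\sigma$ grid points $(\mu_k,\sigma_\ell)$ carrying the weights $w_{k\ell}$ (equivalently, the measure on the corresponding rotation-angle tuples $\vec\theta$ of Eq.~\eqref{eq:lcc_rotation_angle}, which is classically in bijection with $\{(\mu_k,\sigma_\ell)\}$). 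By hypothesis this latent measure is prepared on $n_\theta=\log(N_\mu N_\sigma)$ qubits by a circuit of depth $d_\theta$.

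Second, I would apply Theorem~\ref{thm:wdm-SI-PQC} verbatim: it prepares the entangled state $\tfrac{1}{\mathcal N}\sum_\theta\sum_x \mathcal{P}_\mathrm{para}(\theta)\mathcal{P}_\mathrm{data}(x;\theta)\ket{\theta}\ket{x}$ with total depth $\mathcal{O}\big((n_x M d_f + n_\theta + d_\theta)\log\tfrac{1}{\epsilon\mathcal F}\big)$. Substituting $M=d_f=\mathcal O(1)$ collapses the first term to $\mathcal O(n_x)$, leaving depth $\mathcal{O}\big((n_x+n_\theta+d_\theta)\log\tfrac{1}{\epsilon\mathcal F}\big)$. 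For the filling-rate factor I would invoke Corollary~\ref{thm:med2-SI-PQC}(a): each component has filling rate $\mathcal F_{k\ell}=\sqrt{2\pi}\,\sigma_\ell\sqrt{\sigma_\ell^2+\mu_k^2}$, and the uniform bounds $|\mu_k|\le\mu<1$, $\sigma_\ell\ge\sigma$ pin the relevant scaling factor of the whole mixture to $\Theta\big(\sigma\sqrt{\sigma^2+\mu^2}\big)$, so $\log\tfrac{1}{\epsilon\mathcal F}=\mathcal{O}\big(\log\tfrac{1}{\sigma\sqrt{\sigma^2+\mu^2}\,\epsilon}\big)$, which is the stated depth bound. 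The total complexity then follows by multiplying the depth by the $\mathcal O(1/\mathcal F)=\mathcal O\big(1/(\sigma\sqrt{\sigma^2+\mu^2})\big)$ rounds of amplitude amplification needed to project onto the accepted branch, exactly as in the proof of Theorem~\ref{thm:med-SI-PQC}.

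The step that needs the most care — and the only one where the $\sigma$- and $\mu$-dependence genuinely enters — is controlling the filling rate (equivalently the scaling factor $\lambda$, $\vec\lambda$ of Lemma~\ref{lem:lcc}) for the mixture rather than for a single Gaussian: one must check that for a convex combination $p=\sum w_{k\ell}p_{k\ell}$ one still has $\|p\|_\infty=\mathcal O(\max_{k\ell}\|p_{k\ell}\|_\infty)$ together with a matching lower bound on $\|p\|_2$, so that the worst-case component estimate $\mathcal F\ge c\,\sigma\sqrt{\sigma^2+\mu^2}$ survives the mixing; monotonicity of $\mathcal F_{k\ell}$ in both $|\mu_k|$ and $\sigma_\ell$, plus the hypotheses $|\mu_k|\le\mu$, $\sigma_\ell\ge\sigma$, is what makes this go through. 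A secondary bookkeeping point is to confirm that the end-to-end error budget of Theorem~\ref{thm:med-SI-PQC} (the $\epsilon_\mathrm{exp}$, $\epsilon_\mathrm{QSVT}$ and imperfect-rotation terms) remains $\epsilon$-controlled after the digital–analog conversion of Step~2 of WDM, which only adds an additive $\mathcal O(n_\theta+d_\theta)$ to the per-repetition depth and no new dominant error, so the Taylor degree $d_\mathrm{exp}=\mathcal O(\log\tfrac{1}{\epsilon\mathcal F})$ is unchanged.
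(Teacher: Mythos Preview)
Your proposal takes essentially the same approach as the paper: instantiate Theorem~\ref{thm:wdm-SI-PQC} with the normal parametric family, record $n_\theta=\log(N_\mu N_\sigma)$, and substitute the normal-distribution filling rate $\mathcal F\sim\sigma\sqrt{\sigma^2+\mu^2}$ from Corollary~\ref{thm:med2-SI-PQC}(a). The paper's own proof is in fact only two lines---it computes $n_\theta$ and asserts the complexity---so it does not spell out the $M,d_f=\mathcal O(1)$ substitution or the mixture-versus-component filling-rate issue you flag; your treatment is strictly more detailed than what the paper provides.
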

\begin{proof}
    The qubit number of latent distribution space is
    \begin{equation}
        n_{\theta} = \log{N_\sigma} + \log{N_\mu} = \log{N_\sigma N_\mu}.
    \end{equation}
    Consequently, the circuit complexity is 
    \begin{equation}
        \mathcal{O}(\frac{n_x+d_\theta+n_\theta}{\sigma\sqrt{\sigma^2+\mu^2}}\log{\frac{1}{\sigma\sqrt{\sigma^2+\mu^2}\epsilon}}).
    \end{equation}
\end{proof}
\noindent\textbf{Remark.}
The complexity dependency on the latent space preparation depth $d^{(\theta)}$ characterizes the expressiveness of learning the hidden distribution of parameters.
The complexity dependency on the sampling space size is exponentially reduced to $\log{N_\mu N_\sigma}$.

\section{Quantum-enhanced Statistics Model Learning}\label{sec:learning}
In this section, we establish the analysis framework based on SI-PQC to illustrate its applicability in solving essential statistics tasks.

Here, we consider the fundamental statistical task of model calibration to minimize the distance/similarity between the empirical observed data from the realistic world and the epistemic predicted data from statistics models.
Note that model calibration is the standard procedure when applying any statistics model to the empirical data,
and is preferred, benefiting from its theoretical interpretability and practical prior knowledge, in a variety of fields including bio-statistics, drug design, actuarial science, elasticity imaging, source identification, and damage detection~\citesm{warner:2015stochastic}.
Formally, in the model calibration setup, one is supposed to optimize:
\begin{equation}
    \argmin_{\theta} f(Y_\mathrm{obs}, Y_\mathrm{pre}(x|\theta)),
\end{equation}
wherein $f$ denotes the objective function of the optimization problem known as the \textit{calibration metric}, $Y_\mathrm{obs}$ is the observed data,  $Y_\mathrm{pre}$ is the predicted data and $\theta$ are the parameters to be optimized.

However, classical statistics model calibration is facing challenges as theoretical resolution of explicit form is infeasible and numerical method based on Monte-Carlo simulation will consume a huge amount of computation resources and time~\citesm{warner:2015stochastic, lee:2019review}.
To address this issue, We show that the quantum computational statistics models based on SI-PQC enable a feasible and flexible framework to implement a model calibration with potential quantum advantage:

\noindent\textbf{Step 1. Prepare the observed and predicted data.}
To implement the preparation of the observed data, we make no assumption or restriction considering the flexibility of our framework: 
it can be accessed from either an oracle of quantum data or an efficient superposition preparation of classical data,
and it can be either a probability density function or a cumulative density function of the observed data.
For example, we can assume here access to the empirical data histogram as
\begin{equation}\label{eq:quantum_data_oracle}
    \ket{P_\mathrm{obs}} = O_\mathrm{obs}\ket{0}_n = \sum_{j=1}^{2^n}p(j)\ket{j},
\end{equation}
wherein $p(j)=\sqrt{\mathbb{P}_\mathrm{obs}(j)}$ is the probability amplitude of the $j^{th}$ bin of the observed data histogram.
As for the preparation of the predicted data, we can choose any family of SI-PQC mentioned above specified for the problem
\begin{equation}\label{eq:SI-PQC_oracle}
    \ket{P_\mathrm{pre}} = U_\mathrm{SI-PQC}\ket{0}_n = \sum_{j=1}^{2^n}\hat{p}(j)\ket{j},
\end{equation}
wherein $\hat{p}(j)=\sqrt{\mathbb{P}_\mathrm{pre}(j)}$ is the probability amplitude of the $j^{th}$ bin of the predicted data distribution.
Note that the statistics model specification from prior knowledge is one most significant differences from 
model calibration and general machine learning,
this is usually not a limitation in principle yet an empirical information injection in practice.

\noindent\textbf{Step 2. Compute the calibration metric.}
To compute the calibration metric, there are many types of distance/similarity objective functions to be suggested by statisticians, including Euclidean/Minkowski family $L_p = (\sum(P_\mathrm{obs}, P_\mathrm{pre})^p)^{1/p}$, inner product family $\sum P_\mathrm{obs}P_\mathrm{pre}$ and fidelity family $\sum \sqrt{P_\mathrm{obs}P_\mathrm{pre}}$ as discussed in Ref.~\citesm{cha:2007comprehensive}.
This type of computation can be efficiently implemented on a quantum processor by introducing a SWAP-test, a Hadamard test, or any other quantum state distance metric.
For convenience, here we consider the state fidelity between the observed and predicted states as
\begin{equation}
    \braket{P_\mathrm{pre}}{P_\mathrm{obs}} = \bra{0}U_\mathrm{SI-PQC}^\dagger O_\mathrm{obs}\ket{0}.
\end{equation}
It should be mentioned that the operator $U_\mathrm{SI-PQC}^\dagger$ can be efficiently implemented as a direct consequence of our explicit construction of the quantum statistics model.
Further, the observed data state $\ket{P_\mathrm{obs}}$ can be regarded as either from preparation or a black-box oracle with no prior knowledge, enlarging the scope of applications.

\noindent\textbf{Step 3. Learn the model parameters.}
To optimize the model parameters, we apply a quantum-classical hybrid optimization procedure wherein a quantum processor is promised to evaluate the objective function efficiently as discussed above and a classical optimizer is employed to update the parameters~\citesm{cerezo:2021variational}.
As these statistics parameters are encoded in the rotation angles as guaranteed by our SI-PQC protocol, variational quantum algorithm-typed techniques such as parameter shift rule are enabled.

\section{Additional Numerical Results}\label{app:additional_numerical}

In this section, we provide additional numerical results for preparing and training various mixtures of distributions to showcase the generality of SI-PQC.

\subsection{Preparing Mixtures}\label{app:additional_numerical_prepare}

We prepare a weighted mixture of eight different distributions from the weight vectors listed in Tab.~\ref{tab:exp_mix} for exponential distributions and normal distributions.

Fig.~\ref{fig:exp-mixture} shows the numerical results to prepare an $8$-mixture of exponential distributions with varying rates $\vec{\lambda} = (0.2, 0.4, 0.6, 0.8, 2, 4, 6, 8)$. The experimental results (shown in various markers) agree with the theoretical benchmarks (shown in lines).

Fig.~\ref{fig:gaussian-mixture} shows the numerical results to prepare an $8$-mixture of normal distributions with varying means $\vec{\mu} = (1, 3, 5, 7, 9, 11, 13, 15)$ and variances $\vec{\sigma^2} = (25, 30.25, 36, 42.25, 49, 56.25, 64, 72.25)$. The experimental results (shown in various markers) also agree well with the theoretical benchmarks (shown in lines).
\begin{table}[htp]
    \caption{Weight vectors in $8$-mixture experiments.}
    \centering    
    \begin{tabular}{ c c }
    \toprule[1pt]
        Weights vector & Value\\
        \midrule[0.3pt]
             $\vec{w}_1$ & $(\frac{1}{36}, \frac{2}{36}, \frac{3}{36}, \frac{4}{36}, \frac{5}{36}, \frac{6}{36}, \frac{7}{36}, \frac{8}{36})$ \\ 
             $\vec{w}_2$ & $(\frac{2}{36}, \frac{3}{36}, \frac{4}{36}, \frac{5}{36}, \frac{6}{36}, \frac{7}{36}, \frac{8}{36}, \frac{1}{36})$ \\  
             $\vec{w}_3$ & $(\frac{3}{36}, \frac{4}{36}, \frac{5}{36}, \frac{6}{36}, \frac{7}{36}, \frac{8}{36}, \frac{1}{36}, \frac{2}{36})$    \\  
             $\vec{w}_4$ & $(\frac{4}{36}, \frac{5}{36}, \frac{6}{36}, \frac{7}{36}, \frac{8}{36}, \frac{1}{36}, \frac{2}{36}, \frac{3}{36})$    \\  
             $\vec{w}_5$ & $(\frac{5}{36}, \frac{6}{36}, \frac{7}{36}, \frac{8}{36}, \frac{1}{36}, \frac{2}{36}, \frac{3}{36}, \frac{4}{36})$    \\  
             $\vec{w}_6$ & $(\frac{6}{36}, \frac{7}{36}, \frac{8}{36}, \frac{1}{36}, \frac{2}{36}, \frac{3}{36}, \frac{4}{36}, \frac{5}{36})$    \\  
             $\vec{w}_7$ & $(\frac{7}{36}, \frac{8}{36}, \frac{1}{36}, \frac{2}{36}, \frac{3}{36}, \frac{4}{36}, \frac{5}{36}, \frac{6}{36})$    \\  
             $\vec{w}_8$ & $(\frac{8}{36}, \frac{1}{36}, \frac{2}{36}, \frac{3}{36}, \frac{4}{36}, \frac{5}{36}, \frac{6}{36}, \frac{7}{36})$  \\  
     \bottomrule[1.5pt]
    \end{tabular}
    \label{tab:exp_mix}
\end{table}
\begin{figure}[hp]
    \centering
    \includegraphics[width=0.6\linewidth]{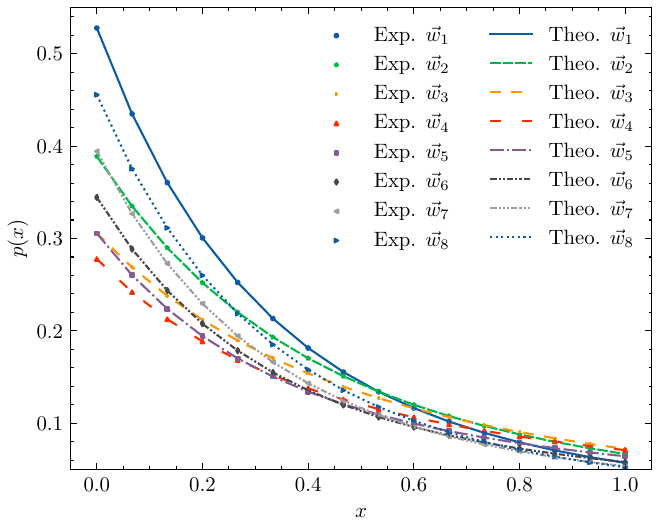}
    \raggedleft
    \centering
    \caption{The numerical results to prepare a mixture of eight exponential distributions with rates $\vec{\lambda}=(0.2, 0.4, 0.6, 0.8, 2, 4, 6, 8)$, 
    and weights list in Tab.~\ref{tab:exp_mix}.
    }
    \label{fig:exp-mixture}
\end{figure}
\begin{figure}[hp]
    \centering
    \includegraphics[width=0.5\linewidth]{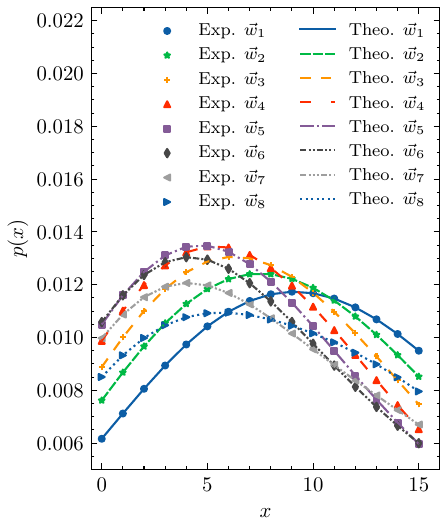}
    \raggedleft
    \centering
    \caption{The numerical results to prepare a mixture of eight normal distributions with means $\vec{\mu}=(1, 3, 5, 7, 9, 11, 13, 15)$, variances $\vec{\sigma^2}=(25, 30.25, 36, 42.25, 49, 56.25, 64, 72.25)$
    and weights list in Tab.~\ref{tab:exp_mix}.
    }
    \label{fig:gaussian-mixture}
\end{figure}

\subsection{Training Mixtures}\label{app:additional_numerical_train}

Here, we train the mixture of the exponential mixture model, which is a common model used in time series and stochastic process simulation.
\begin{figure}[hp]
    \begin{tikzpicture}
        \node[]{
    \includegraphics[width=0.6\linewidth]{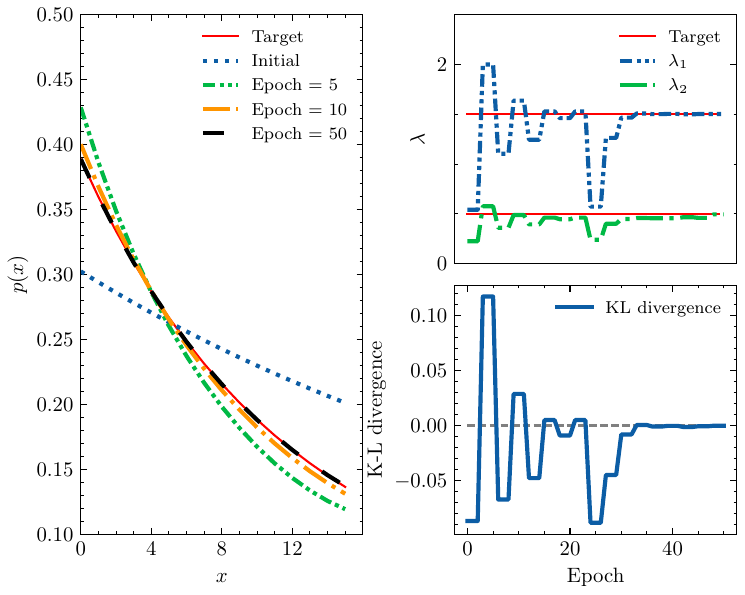}};
    \node[] at(-5.2, 4) {(a)};
    \node[] at(0.5, 4) {(b)};
    \node[] at(0.5, 0.1) {(c)};
    \end{tikzpicture}
    \centering
    \caption{
    {Learning Exponential Mixture Model based on SI-PQC.}
    {(a)}
    Quantum states in the learning procedure.
    {(b)}
    Estimated $\lambda$ extracted from the angles in the training procedure.
    {(c)}
    The K-L divergence to characterize the distance from the target state.
    }
    \label{fig:learning_exp}
\end{figure}

\let\oldbibitem\bibitem
\renewcommand{\bibitem}[2][]{
  \ifstrempty{#1}{\oldbibitem{SM_#2}}{\oldbibitem[#1]{SM_#2}}
}

\end{document}